\crefname{equation}{Eq.}{Eqns.}
\crefname{figure}{Fig.}{Figs.}
\crefname{section}{Section}{Sections}
\crefname{table}{Table}{Tables}
\crefname{chapter}{Chapter}{Chapters}
\crefname{appendix}{Appendix}{Appendices}
\crefname{subsection}{Section}{Sections}
\crefname{remark}{Remark}{Remarks}
\crefname{footnote}{footnote}{footnotes}
\numberwithin{equation}{section}
\theoremstyle{theorem}
\newtheorem*{conjecture*}{Conjecture}
\numberwithin{theorem}{section}
\numberwithin{definition}{section}
\newtheorem{lemma}{Lemma}\numberwithin{lemma}{section}
\theoremstyle{remark}
\newtheorem{remark}{Remark}\numberwithin{remark}{section}
\def\eps{\epsilon}
\newcommand{\beq}{\begin{equation}} 
\newcommand{\eeq}{\end{equation}}
\def\bZ {\mathbb{Z}}
\def\calO {{\cal O}}
\def\cO{{\cal O}}
\def\calM {{\cal M}}
\def\ge{\geqslant}
\def\le{\leqslant}
\def\leq{\leqslant}
\def\<{\langle}
\def\>{\rangle}
\def\cO{{\cal O}}
\newcommand{\myinclude}[2][]{\raisebox{0.6ex}{\raisebox{-0.5\height}{\includegraphics[#1]{#2}}}}
\renewcommand{\ge}{\geqslant}
\renewcommand{\le}{\leqslant}
\def\Jratio{a}
\def\@fpheader{\vspace{3cm} }
\title{Rotations, Negative Eigenvalues, and Newton Method in Tensor Network Renormalization Group}
\author[a]{Nikolay Ebel,}
\author[b]{Tom Kennedy}
\author[a]{and Slava Rychkov}
\affiliation[a]{Institut des Hautes \'Etudes Scientifiques, 91440 Bures-sur-Yvette, France}
\affiliation[b]{Department of Mathematics, University of Arizona,
	Tucson, AZ 85721, USA}
\emailAdd{ebelnikola@gmail.com}
\emailAdd{slava@ihes.fr}
\emailAdd{tgk@arizona.edu}
\abstract{In the tensor network approach to statistical physics, properties of the critical point of a 2D lattice model are encoded by a four-legged tensor which is a fixed point of an RG map. The traditional way to find the fixed point tensor consists in iterating the RG map after having tuned the temperature to criticality. Here we develop a different and more direct technique, which solves the fixed point equation via the Newton method. This is challenging due to the existence of marginal deformations---linear transformations of the coordinate frame, which parametrize a two-dimensional family of fixed points. We address this challenge by including a 90 degree rotation into the RG map. This flips the sign of the problematic marginal eigenvalues, rendering the fixed point isolated and accessible via the Newton method. We demonstrate the power of this technique via explicit computations for the 2D Ising and $3$-state Potts models. Using the Gilt-TNR algorithm at bond dimension $\chi=30$, we find the fixed point tensors with $10^{-9}$ accuracy, much higher than what was previously achieved. 
}
\begin{document}

\maketitle

\section{Introduction}

In the last 15 years, the use of tensor networks \cite{Levin:2006jai} revolutionized renormalization group (RG) studies of two-dimensional statistical lattice models. Tensor RG\footnote{We use this term inclusively, not limited to the first TRG algorithm proposed in \cite{Levin:2006jai}, but to all RG maps manipulating tensor networks, whether or not they involve disentangling steps or only coarse-graining.} maps a tensor network representation of the partition function to an equivalent representation involving a smaller number of tensors. By now there exist many tensor RG algorithms \cite{Levin:2006jai,SRG,TEFR,SRG1,HOTRG,Evenbly-Vidal,LoopTNR,Bal:2017mht,Evenbly:2017dyd,GILT}. They show excellent numerical performance when benchmarked on exactly solvable two-dimensional models such as Ising or 3-state Potts, truncating to tensors of finite size. In three dimensions, there is currently no fully successful numerical algorithm \cite{HOTRG,GILT,parallelHOTRG,Lyu:2023ukj}.\footnote{Although significant progress was recently reported in \cite{Lyu:2024lqh}.} The analytic theory of tensor RG for infinite-size tensors was initiated in \cite{paper1,paper2} in two dimensions and in \cite{Ebel:2024jbz} in three dimensions.

Here we will be dealing with tensor RG for two-dimensional models. Our main result will be a modification of the existing algorithms which allows to recover the critical fixed point tensor in a novel and more efficient way. Traditionally, one obtains approximate fixed point tensors by starting from the lattice model at or nearby the exactly known critical temperature and performing a series of 10-20 RG steps, referred below as the ``shooting'' method. The resulting tensor is close to the fixed point but it is not quite the fixed point, because the irrelevant perturbations have not fully decayed to zero. In this work, we will be able to instead obtain the fixed point via a Newton method. The Newton method converges provided that the Jacobian has no eigenvalues 1. Naively, there are always perturbations with eigenvalues 1 in our problem. They are associated with aspect-ratio deformations, which break the rotation invariance of the model while keeping criticality. Our main idea will be to change the tensor RG map by including in it a $\pi/2$ rotation of the lattice, which changes the problematic eigenvalues 1 to $-1$, and allows the Newton method to converge.

Our Newton method setup is general and would apply to any fixed point without marginal scalar deformations. (A notable such exception is the compactified scalar boson theory, arising in the study of the BKT phase transition.) In this work, we use it to recover the critical isotropic tensor RG fixed points of 2D Ising and 3-state Potts models. We achieve accuracy $\sim 10^{-9}$ in the Hilbert-Schmidt distance, much better than what has been done via the shooting method (see Table \ref{tab:accuracy}), and limited only by roundoff and truncation errors. The key point however is not merely the improvement in numerical accuracy, but that the fixed point can be thus obtained by directly solving the fixed point equation, something which has not been previously appreciated.\footnote{We are aware of only one previous work \cite{early-Newton} which tried to apply the Newton method to locate tensor RG fixed points. Their search failed beyond bond dimension $\chi=8$, for reasons which were not clearly identified at the time, but likely due to a combination of not recognizing the importance of eliminating the stress tensor eigenvalue 1, and working with a tensor RG algorithm without disentangling.}

\begin{table}[ht]
	\centering
	\begin{tabular}{llllccc}\toprule
		Algorithm & Model &$\chi$ & Ref & Method & FP accuracy  \\
		\midrule
		TNR & Ising &42& \cite{Evenbly-Vidal} & shooting & $10^{-5}$ \\
		Gilt-HOTRG & Ising & 30 & \cite{Lyu:2021qlw} & shooting & $10^{-2}$ \\
		Gilt-HOTRG & Ising & 24 & \cite{PhysRevE.109.034111} & shooting & $3\times 10^{-4}$ \\
		Gilt-TNR & Ising & 30 & this work, Sec.~\ref{sec:Isofp} & shooting & $5\times 10^{-5}$\\
		\midrule
		Rotating Gilt-TNR & Ising &30 & this work, Sec.~\ref{sec:rot-iso}  & Newton & $10^{-9}$\\
		Rotating Gilt-TNR & Potts &30 & this work, App.~\ref{app:the-3-state-potts-model}  & Newton & $10^{-9}$\\
		\bottomrule
	\end{tabular}
	\caption{\label{tab:accuracy}Accuracy of 2D Ising and 3-state Potts fixed point determination by Newton method, compared to previous work using shooting method. Ising results of \cite{Evenbly-Vidal,Lyu:2021qlw,PhysRevE.109.034111} are reviewed in Sec.~\ref{sec:Tokyo-comp}. There is no prior work quantifying the FP accuracy for Potts.}
\end{table}

{Our second result will be to validate the method, using the fixed point tensor to extract physical observables---scaling dimensions of the operators of conformal field theory (CFT) describing the phase transition, and comparing with the exact solution.} Most tensor RG studies extract the scaling dimensions from the transfer matrix \cite{TEFR} or from the lattice dilatation operator\footnote{This is the term we propose hoping that it will stick, as there is no generally accepted term.} \cite{TNRScaleTrans}. Here, we will focus on an alternative technique---using the eigenvalues of the tensor RG map linearized around the fixed point (the Jacobian). This is a standard and general RG technique \cite{Wilson:1973jj,DombGreenVol6-Wegner}, {particularly natural in the context of our study. In tensor RG it has been previously used only a handful of times \cite{Lyu:2021qlw,PhysRevE.109.034111}.} (The transfer matrix and the lattice dilatation operator will be studied in the companion paper \cite{paper-DSO}.) With the error in fixed point tensor determination essentially eliminated by the Newton method, the bond dimension truncation error will dominate the error in determination of physical quantities. The main purpose of this paper being conceptual, we will be working at a relatively modest bond dimension $\chi=30$ using a specific tensor RG algorithm Gilt-TNR \cite{GILT}. We have not tried to compete in accuracy of physical quantities with other tensor RG studies using a higher bond dimension, where the fixed point was approached via the shooting method~\cite{Evenbly-Vidal,LoopTNR,GILT}. This is postponed to the future.

We will encounter two subtle effects when using the Jacobian method:
\begin{itemize}
	\item
	      The standard relation $\lambda=b^{d-\Delta}$ between the Jacobian eigenvalues and the scaling dimensions, where $b>1$ is the lattice rescaling factor, will be modified by a phase, depending on the spin of the CFT operator under spatial rotations. This extra phase arises due to the $\pi/2$ rotation present in our map. Thus, some eigenvalues will change sign and some even become complex. 
	      
	    \item We will see that the Jacobian eigenvalues of total derivative operators are not universal and not given by the above formula. This is expected from the CFT point of view, but it may be the first time that this is seen in a numerical study. We view the observation of this effect as our third main result. We will provide a reason why Refs.~\cite{Lyu:2021qlw,PhysRevE.109.034111} did not see this.
\end{itemize}

The paper is organized as follows. In the main text we focus on the 2D Ising model. The 3-state Potts case is completely analogous and is treated in \cref{app:the-3-state-potts-model}. We start in \cref{sec:intuition} by describing in general terms our problem, possible stumbling blocks, and how we plan to resolve them. In \cref{sec:no-rot} we illustrate the intuition developed in \cref{sec:intuition} via concrete calculations using Gilt-TNR \cite{GILT}. This map does not include a rotation, thus the Jacobian has eigenvalues 1 at the fixed point, and the Newton method is not applicable. We locate the approximate fixed point using the traditional ``shooting'' method. We extract scaling dimensions of leading CFT quasiprimaries from the Jacobian eigenvalues at the approximate fixed point. We check that the non-quasiprimary Jacobian eigenvalues are non-universal, i.e.~depend on the RG map. Finally, we present evidence for the existence of a manifold of anisotropic fixed points. We then proceed to describe our main new result in \cref{sec:with-rot}, where we add rotation to the Gilt-TNR map, which turns eigenvalues 1 to $-1$. We set up the Newton method and show that it converges. For anisotropic tensors, we exhibit period-2 oscillations of the RG iterates. In \cref{sec:conclusions} we conclude and suggest future research directions.

Our paper contains numerous detailed appendices which should be useful to anyone wishing to reproduce or improve our results. In \cref{app:conventions} we set up the basic conventions concerning tensor RG and symmetries of tensor networks. In \cref{app:init} we discuss how we initialize our tensor RG flow starting from the nearest-neighbor 2D Ising model, including the anisotropic case.  \cref{app:moreGilt} provides a detailed description of the Gilt-TNR algorithm \cite{GILT} used here, and of the Gilt-HOTRG algorithm used in \cite{Lyu:2021qlw,PhysRevE.109.034111}. \cref{app:GaugeFixing} describes our continuous and discrete gauge fixing procedures. In \cref{app:GiltDiff} we provide evidence that the Gilt-TNR map is differentiable near the fixed point, and optimize the step size used to extract the Jacobian via a finite difference approximation. 

Much or our work here will be numerical and at an intuitive level of rigor, but we hope that our ideas will also prove useful in the eventual fully rigorous computer-assisted construction of exact tensor network RG fixed points---the program started in \cite{paper1}.

\section{Intuition for RG eigenvalues}
\label{sec:intuition}
We start with general remarks about tensor RG as applied to the 2D Ising model. We then explain how conformal field theory (CFT) helps classify RG eigenvalues. Finally, we discuss when the Newton method can and cannot be used to search for the critical RG fixed point. We will see that the Newton method is problematic for non-rotating RG maps which do not preserve rotation symmetry, and we will explain how this problem is solved by adding rotation to the RG map.

\subsection{Tensor RG for the 2D Ising}
\label{sec:TRG2D}

We would like to use tensor RG to study phase transitions and critical points of 2D lattice models. For concreteness we consider the isotropic 2D Ising model with nearest-neighbor ferromagnetic interactions, on the square lattice. (See \cref{app:the-3-state-potts-model} for the 3-state Potts model.) The model is defined by its partition function:
\beq
Z = \sum_{\sigma_{x,y}=\pm 1} e^{\frac 1T \sum_{x,y} (\sigma_{x,y}\sigma_{x+1,y}+\sigma_{x,y}\sigma_{x,y+1})}\,.
\label{eq:ZNN}
\eeq 
It has a phase transition at $T=T_c=\frac{2}{\ln(1+\sqrt{2})}$ separating the ordered and disordered phases. The lattice model is exactly solvable but the exact solution will not be used here. 
As a preparation to using tensor RG, we translate the partition function of the model to a tensor network made of a four-legged tensor $A=A_{\rm NN}(T)$. While there are several ways to do so, probably the simplest way \cite{TEFR} is to rotate the lattice of Ising spins by $\pi/4$ and to define the tensor
\beq
\myinclude[scale=0.8]{fig-Eq2_2_and_EqB_1.pdf} \,  =
e^{\frac 1T (\sigma_1 \sigma_2 + \sigma_2 \sigma_3 +  \sigma_3 \sigma_4
+ \sigma_4 \sigma_1)}\,,
\label{app:ising-tensor2-intro}
\eeq
where every index can take values $\pm1$, so the bond dimension is $2$. The tensor network contraction made out of this tensor $A$ (called tensor network partition function):
\beq
\myinclude[scale=0.5]{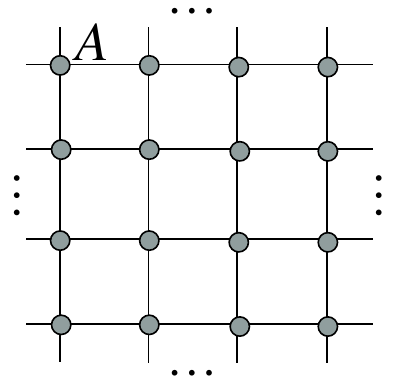}
\label{eq:TNpart-func}
\eeq
 is then identical to the spin model partition function \eqref{eq:ZNN}; see \cref{app:init}. We will be using periodic boundary conditions on the tensor network side. This corresponds to non-standard boundary conditions on the spin model side. This is not a problem since our main interest is anyway in the infinite volume limit.

 We now briefly explain the main idea of tensor RG (see \cite{Evenbly-review} and \cref{app:TRGconventions}). A tensor RG map takes the tensor network contraction \eqref{eq:TNpart-func} made of tensor $A$ and finds another tensor $A'=R(A)$ so that the partition function is the same but the size of the network is reduced by some constant integer $b$ (called lattice rescaling factor). E.g.~a square $N\times N$ network made of $A$ is mapped to a $\frac N b \times \frac N b$ network made of $A'$ with the same periodic boundary conditions and the same value of the partition function. It is convenient to rescale $A$ so that it has unit Hilbert-Schmidt norm:
 \beq
 \|A\|= (\sum|A_{ijkl}|^2)^{1/2}=1\,, \label{eq:norm}
 \eeq
 and similarly $A'$. Then the partition function remains the same up to a rescaling. 
 
 Furthermore, we will assume for now that the map $R$ is non-rotating, i.e.~it preserves the orientation of the lattice. This means that if we apply the map to a rectangular tensor network of size $N\times M$, then after the RG step the lattice has size $N/b \times M/b$. (While this is very natural, in \cref{sec:addrot} we will find it useful to change this.)
 
By now there exist many tensor RG algorithms, which typically involve two steps: 1) disentangling \cite{Evenbly-Vidal}, which cleans the network by removing ultra-short-range correlations known as Corner-Double-Line (CDL) tensors \cite{Levin-talk}, and 2) coarse-graining, which reduces the size of the network. In the early days of tensor RG, only coarse-graining was used \cite{Levin:2006jai,SRG,TEFR,SRG1,HOTRG}, but state-of-the-art algorithms use both disentangling and coarse-graining \cite{Evenbly-Vidal,LoopTNR,Bal:2017mht,Evenbly:2017dyd,GILT}. We will rely on the Gilt-TNR algorithm \cite{GILT}, reviewed in \cref{sec:GiltReview}.
For the moment let us keep the discussion general, applicable to any non-rotating RG map $R$ with disentangling.
 
The RG flow is initialized by $A^{(0)}=A_{\rm NN}(T)$, and generated by applying the RG map repeatedly:
\beq
A^{(n+1)}=R(A^{(n)}),\qquad n=0,1,2,\ldots,
\label{eq:RGevolution}
\eeq
To preserve the partition function exactly, the bond dimension must be allowed to grow. Tensor RG algorithms truncate the tensor $A^{(n+1)}$ in such a way that the bond dimension never exceeds a finite number $\chi$, a parameter of the algorithm. One would like to take $\chi$ as large as possible, within available computational resources. In this paper we will be content to use $\chi=30$.

It is expected that after many steps this RG evolution will converge to a fixed point, which is a tensor $A_{\rm fp}$ satisfying $R(A_{\rm fp})=A_{\rm fp}$. In fact we expect 3 fixed point tensors: high-temperature $A_{H}$, low-temperature $A_{L}$, and critical $A_*$. The RG evolution starting from $A^{(0)}=A^{(0)}(T)$ should converge to $A_{H}$ for $T>T_c$, to $A_{L}$ for $T<T_c$, and to $A_{*}$ for $T=T_c$ (see \cref{fig:exp}).
\begin{figure}
	\centering
	\includegraphics[width=0.5\textwidth]{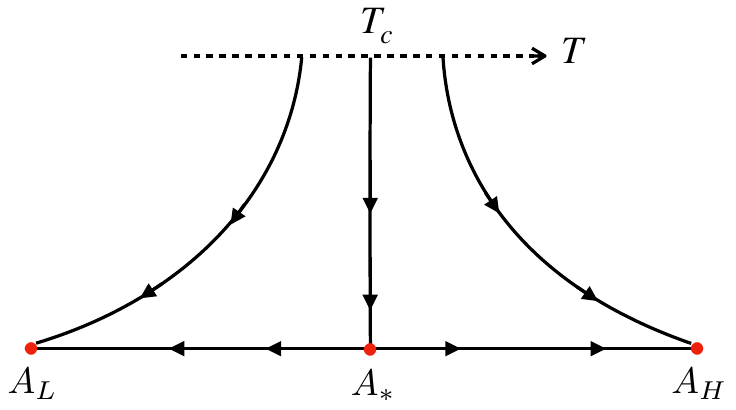}
	\caption{Expected pattern of tensor RG evolution for the 2D Ising model. The dashed line is the curve of initial tensors corresponding to varying $T$ in $A^{(0)}(T)$.
		\label{fig:exp}
	}
\end{figure}
There is much numerical evidence that these expectations can be realized for a tensor RG map with disentangling \cite{Evenbly-Vidal,LoopTNR,Bal:2017mht,Evenbly:2017dyd,GILT}.

Tensors $A_{H}$ and $A_{L}$ in \cref{fig:exp} are simple explicitly known tensors of bond dimension 1 and 2, respectively. Namely, $A_{H}$ has a single nonzero tensor element
\beq
(A_{H})_{0000}=1,
\eeq
and tensor $A_{L}$ is a direct sum of two copies of $A_{H}$. See \cite{paper1,paper2} for a rigorous theory of tensor RG in the neighborhood of these two fixed points. The quest of much of the literature, and our focus here, is the much more interesting and complicated critical fixed point tensor $A_*$, which is expected to have infinite bond dimension. To give meaning to infinite sums in the tensor network built out of $A_*$, it is natural to require that it be Hilbert-Schmidt, i.e.~have a finite Hilbert-Schmidt norm.\footnote{This is because Hilbert-Schmidt tensors can be contracted along one bond: the result is finite and Hilbert-Schmidt. This implies that the partition function of an $N\times M$ tensor network built of Hilbert-Schmidt tensors is finite \cite{paper2}, while this property is not guaranteed for tensors which are not Hilbert-Schmidt.} $A_*$ is expected to depend on the RG map, although the critical exponents extracted from $A_*$ should be universal. An explicit expression for $A_*$ is unknown, and we would find it surprising if it existed.\footnote{Recently, there were claims of exact tensor network RG fixed points constructed in \cite{Ueda:2023ihr,Ueda:2024ixq,Cheng:2023kxh}. We find some aspects of those works questionable. In particular it has not been verified that the proposed exact fixed point tensors remain Hilbert-Schmidt when the cutoff is removed. See \cite{paper-short}, where we also provide an argument that tensor RG without disentangling cannot have an exact RG fixed point which is Hilbert-Schmidt and corresponds to a nontrivial CFT.} At present, progress is possible by working at finite $\chi$ and computing numerical
 finite dimensional approximations to $A_*$. In the future one may be able to prove that an exact fixed point tensor exists in a small neighborhood of numerical approximations, but this is beyond the scope of this work. 

According to \cref{fig:exp}, to get $A_*$ approximately one can take the initial tensor $A^{(0)}$ corresponding to $T=T_c$ and do several RG iterations.\footnote{\label{note:bisection1}The critical temperature $T_c$ may be exactly known if the lattice model is exactly solvable, like the nearest-neighbor 2D Ising. In practical calculations at finite $\chi$, the critical temperature shifts a bit from the exact one. To find the shifted critical temperature, one may use the bisection method, see \cref{note:bisection}.} That's the method used in the prior literature \cite{Evenbly-Vidal,LoopTNR,Bal:2017mht,Evenbly:2017dyd,GILT,Lyu:2021qlw,PhysRevE.109.034111}, which we refer to as ``shooting.'' (Often the fixed point tensor itself is not exhibited and evidence for reaching the fixed point is presented in the form of gauge-invariant observables reaching limits.)
The ``shooting'' method is limited due to the fact that the RG map is not a contraction in the neighborhood of $A_*$. Indeed, the temperature perturbation\footnote{This standard terminology may be a bit confusing, since temperature is a microscopic variable and cannot be varied directly around $A_*$. What is meant is that the eigenvector arises from the temperature perturbation of the microscopic tensor $A^{(0)}(T_c)$ after many RG steps.} is relevant, with eigenvalue $\lambda_t=b^{y_t}>1$, where $y_t=1$ (below we recall how this follows from CFT). The number of iterations one can perform is limited by numerical instabilities, and by the residual dependence of $T_c$ on $\chi$. These limitations motivate our main goal---an alternative way to get $A_*$ via the Newton method. 

\subsection{CFT expectations for RG eigenvalues}
\label{sec:CFTexp}
The Newton method will involve the Jacobian $\nabla R$ of the RG map. To set things up, we need to have an idea about the eigenvalues and eigenvectors of $\nabla R$. This may be understood with the help of the 2D CFT which describes the critical 2D Ising model.\footnote{For prior discussions of relations between tensor RG and CFT, see e.g.~\cite{Lyu:2021qlw,PhysRevE.109.034111,TNRScaleTrans,Ueda2023}. Aspects related to rotation needed for us were not discussed in those references.}  This CFT was exactly solved long ago \cite{Belavin:1984vu}; in particular the spectrum of local operators is exactly known. We stress however that even rough approximate knowledge of the spectrum would suffice for the discussion below.

Since both the RG fixed point and the CFT describe the same critical state, we expect a broad correspondence between these two descriptions. In particular, perturbations of the RG fixed point $A_*$ should be in one-to-one correspondence with the local translationally  invariant perturbations of the CFT, given by integrals of nontrivial (i.e.~excluding the unit operator)\footnote{Perturbing by the unit operator $\cO=\mathds{1}$ simply rescales the partition function. In tensor RG, this would translate to rescaling $A$. Since we fixed the  normalization of $A$, \cref{eq:norm}, such a perturbation will not be present among the $\nabla R$ eigenvectors.} CFT local operators:
\beq
\kappa \int d^2x\, \cO(x)\,.
\label{eq:pertCFT}
\eeq
On the CFT side, the RG transformation corresponds to rescaling distances $x\to bx$. When we do this, the coupling $\kappa$ rescales as $\kappa \to \lambda \kappa$, where
\beq
\lambda = b^{2-\Delta_\cO},
\label{eq:lambdaCFT}
\eeq
and $\Delta_\cO$ is the scaling dimension of $\cO$. Note that this equation is only significant for operators which are not total derivatives. For $\cO$ a total derivative, perturbation \cref{eq:pertCFT} vanishes. Hence, the normalization of $\kappa$ is ambiguous and $\lambda$ can be anything.\footnote{In the standard RG parlance, perturbation by a total derivative is classified as "redundant" \cite{DombGreenVol6-Wegner}.}

In summary, CFT predicts that the RG Jacobian $\nabla R$ at the fixed point $A_*$ should have a series of eigenvectors in one-to-one correspondence with nontrivial local CFT operators $\cO$. Eigenvalues of perturbations associated with $\cO$ which are not total derivatives (such operators are called quasiprimaries) will be given by \cref{eq:lambdaCFT}. Eigenvalues of total derivative perturbations cannot be predicted by general principles; in particular they will depend on the choice of the RG map.

\begin{table}[ht]
	\centering
	\begin{tabular}{lccc}\toprule
		                                & $\mathbb{Z}_2$ & $\Delta$ & $ \ell$ \\
		\midrule
		$\mathds{1}$ & + & 0 & 0\\
		$T,\overline{T}$                & +              & 2        & $\pm 2$       \\
		$T\overline T$                  & +              & 4        & 0       \\
		$TT,\overline{T}\,\overline{T}$ & +              & 4        & $\pm 4$       \\
		\midrule
		$\epsilon$                      & +              & 1        & $0$     \\
		$(L_{-4}+\ldots)\epsilon$         & +              & 5        & $4$     \\
		\midrule
		$\sigma$                        & $-$            & 0.125    & 0       \\
		$(L_{-3}+\ldots)\sigma$           & $-$            & 3.125    & 3       \\
		\bottomrule
	\end{tabular}
	\caption{\label{tab:2D}2D Ising CFT Virasoro primary and quasiprimary operators of scaling dimension $\Delta\le 5$, with their quantum numbers (see e.g.~\cite[Table 8.1]{DiFrancesco:1997nk}). The $\ldots$ stands for a product of lower Virasoro generators of the same total weight as the shown leading terms.}
\end{table}

Quasiprimaries of the Ising CFT are classified as follows \cite{DiFrancesco:1997nk}. 
Recall that in the space of local operators we have action of the Virasoro algebra whose generators are denoted by $L_n,\bar L_n $, $n\in \mathbb{Z}$, and of its finite-dimensional subalgebra of global conformal transformations, generated by $L_n,\bar L_n $, $n=0,\pm 1$. Any local operator is characterized by its holomorphic and antiholomorphic weights $h,\bar h$ defined through equations $L_0 \calO(0)= h \calO(0)$, $\bar L_0 \calO(0)=\bar h \calO(0)$. Linear combinations $L_0\pm\bar L_0$ generating dilatations and rotations, the linear combinations of weights $\Delta=h+\bar h$ and $\ell=h-\bar h$ give scaling dimension and spin of a local operator.

One first classifies the Virasoro primary operators $\calO(x)$, which are singled out algebraically by the condition $L_n \calO(0)=\bar L_n \calO(0)=0$, $n\ge 1$, i.e.~are annihilated by all positive Virasoro generators. The Ising CFT has three such Virasoro primary operators $\mathds{1}$, $\epsilon$, $\sigma$, see \cref{tab:2D}. The quasiprimary operators satisfy a weaker constraint $L_n \calO(0)=\bar L_n \calO(0)=0$, $n=1$. In particular all Virasoro primaries are also quasiprimaries.
Taking a Virasoro primary and acting on it with sequences of negative Virasoro generators $L_{-n}, \bar L_{-n}$, $n<0$, produces infinitely many additional operators called Virasoro descendants. Other quasiprimaries are appropriate linear combinations of Virasoro descendants which are annihiliated by $L_1$, $\bar L_1$. Note that Virasoro generators $L_{-n},\bar L_{-n}$, $n\ge 2$, must necessarily be involved: acting with just $L_{-1}, \bar L_{-1}$ would not produce a quasiprimary, since these generators generate translations, and applying them is akin to taking a total derivative. 

The weights of quasiprimaries can be easily computed from the weights of primaries recalling that $L_{-n}$ increase $h$ by $n$ and  $\bar L_{-n}$ increases $\bar h$ by $n$. E.g.~the holomorphic stress tensor component $T = L_{-2} \mathds{1}$ has $h=2,\bar h =0$, and analogously the antiholomorphic component $\bar T$. These quasiprimaries are present in any CFT. 

 In \cref{tab:2D} we give the full list of quasiprimaries of scaling dimension $\Delta\le 5$, along with spin and the global $\mathbb{Z}_2$ symmetry quantum number. It's interesting to note that many candidate quasiprimaries, such as e.g.~$L_{-2}\epsilon+\ldots$ or $L_{-3}\epsilon +\ldots$ do not appear in this table, while the first quasiprimary built on top of $\epsilon$ has the form $L_{-4}\epsilon +\ldots$. These missing local operators are referred to as ``null descendants.'' They happen to vanish identically in the 2D Ising CFT---a fact closely related to its exact solvability.
 
Going back to the Jacobian of the RG map at the fixed point, perturbations corresponding to all the operators in \cref{tab:2D} should appear among eigenvectors of $\nabla R$, with universal eigenvalues. E.g.~the eigenvector corresponding to $\epsilon$ is the temperature perturbation mentioned above: $\lambda_t=b^{2-\Delta_\eps}=b$. 

A second relevant eigenvector corresponds to $\sigma$. This eigenvector is however $\bZ_2$-odd, and thus its existence does not require extra tuning to reach the fixed point. Put another way, in \cref{fig:exp} we were implicitly assuming that the RG map preserves the $\bZ_2$ symmetry of the 2D Ising model. Thus the whole RG evolution happens within the space of $\bZ_2$-even tensors, with all $\bZ_2$-odd perturbations set to zero. 

Let's discuss the interplay of tensor RG and symmetry in more detail.

\subsection{Symmetry preserved by the RG map}
\label{sec:symmetry}

The 2D Ising model has a global on-site $\bZ_2$ (spin flip) symmetry. It is also invariant under spatial symmetry transformations, which in the isotropic case form the group $D_4$ generated by the axis reflections and $\pi/2$ rotations.\footnote{By spatial symmetry we mean point group symmetry.  Lattice translational symmetry is always present and is left implicit. For unequal horizontal and vertical nearest-neighbor couplings $J_x\ne J_y$, spatial symmetry gets reduced to a smaller group. The final spatial symmetry depends on whether the reduction of the spin model to a tensor network is performed rotating by $\pi/4$ not. In this paper we will study the case $J_x\ne J_y$ with $\pi/4$ rotation, see \cref{app:init}.} Translating the model to a tensor network, one naturally obtains a tensor $A^{(0)}$ which is invariant under all these symmetries. See \cref{app:symmetries} for a recap of symmetries of tensor networks.

Suppose we have a tensor RG map $R$ which preserves symmetry $G$, which means that the subspace $\mathcal{T}^G$ of $G$-invariant tensors is invariant under $R$. Then, if the initial tensor $A^{(0)}$ is $G$-invariant, the whole RG evolution \eqref{eq:RGevolution} happens within the $G$-invariant subspace $\mathcal{T}^G$. We can also look for a fixed point $A_*$ within $\mathcal{T}^G$. This is convenient for several reasons. First, because it implies a reduction in the number of variables to consider. Second, because we do not have to worry about relevant or marginal directions orthogonal to $\mathcal{T}^G$.

Preserving on-site symmetries such as the Ising spin-flip is easy---all tensor RG maps naturally achieve this.  This is reflected in \cref{fig:exp}, as mentioned at the end of \cref{sec:CFTexp}. 

On the other hand, it turns out to be hard to implement disentangling while preserving rotations. Available tensor RG maps with disentangling preserve reflections along one axis \cite{Evenbly-Vidal}, along both axes \cite{Hauru:2015abi,Evenbly:2017dyd,Evenbly-TNR-website}, or no spatial symmetry at all \cite{LoopTNR,Bal:2017mht,GILT}.\footnote{\label{note:isotropic}The isotropic TNR scheme from \cite[App.~C]{Evenbly-review} would preserve rotation symmetry. This complicated map acts on networks built of three different tensors. To our knowledge it has not been implemented in practice.} Gilt-TNR \cite{GILT}, which we will use in practical calculations, belongs to the last group of algorithms.

For such RG maps, even though the initial tensor respects a spatial symmetry, we cannot impose it on subsequent tensors along the RG evolution. Also the critical fixed point tensor cannot be expected to have a spatial symmetry, and indeed our numerical calculations will show that it does not. This may seem puzzling because of course the critical point should enjoy the same spatial symmetry as the original tensor, and indeed a larger one because it should be fully $O(2)$ rotationally symmetric. But there is no contradiction: this simply means that the spatial symmetry of the critical point is not manifest.

One may consider this an inconvenience but not necessarily a problem from the practical point of view. And indeed this did not impede the prior tensor RG work, which found the approximate critical fixed point tensor via shooting from $A^{(0)}(T_c)$. However for our goal of using the Newton method this turns out to be a problem, as we will now explain.

\subsection{Rotation symmetry breaking and eigenvalues 1}
\label{sec:rotation-symmetry-breaking}

According to \cref{tab:2D}, the Jacobian $\nabla R$ at the fixed point will have two eigenvectors which are marginal, i.e.~have eigenvalues 1. These eigenvectors are associated with the CFT stress tensor holomorphic and antiholomorphic components $T$, $\bar T$. In the real stress tensor component basis, we can associate these eigenvectors with the linear combinations
\beq
T_{11}-T_{22}\quad \text{and}\quad T_{12},
\label{eq:Treal}
\eeq
conveniently aligned with the coordinate axes $x_1,x_2$ which we identify with the two axes of the tensor network. 
(Recall that the stress tensor trace $T_{11}+T_{22}$ vanishes in a CFT.) 

Consider deforming the CFT by the stress tensor perturbation $\int d^2x\, h_{\mu\nu} T^{\mu\nu}$ with an $x$-independent $h_{\mu\nu}$. This deforms the 2D metric: $\delta_{\mu\nu}\to g_{\mu\nu}= \delta_{\mu\nu}+h_{\mu\nu}$. Physically, this deforms the rotational symmetry of the critical point, turning circles $|x|=1$ into ellipses $|x|_g=1$ where $|x|_g$ is the distance computed with respect to the $g_{\mu\nu}$ metric. For example the two-point correlation functions in the deformed theory will scale as powers of $|x|_g$. The first perturbation in \eqref{eq:Treal} gives an ellipse whose axes are aligned with the coordinate axes, which breaks $\pi/2$ rotations but preserves axis reflections. The second perturbation gives a tilted ellipse and breaks both rotations and reflections. These deformations make sense for infinitesimal and for finite $h_{\mu\nu}$. We will call the ellipse $|x|_g=1$ the anisotropy ellipse. Because of scale invariance, the overall scale of the anisotropy ellipse does not play a role, so it is characterized by two parameters, e.g.~the main axes ratio and the tilt angle.

\begin{figure}
	\centering
	\includegraphics[width=0.3\textwidth]{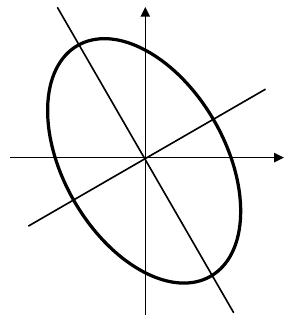}
	\caption{Anisotropy ellipse parametrizing translationally invariant stress-tensor perturbations of the CFT.
		\label{fig:ellipse}
	}
\end{figure}

Even though the above perturbations break rotations, they preserve criticality, since the correlation length remains infinite. We expect a two-dimensional manifold $\mathcal{M}_2$ of corresponding RG fixed points, in one-to-one correspondence with possible anisotropy ellipses. The critical fixed point $A_{*}$ in \cref{fig:exp} is but one point of $\mathcal{M}_2$. The complete picture looks like in \cref{fig:exp1}. The discussed perturbations with eigenvalues 1 span the tangent space of $\mathcal{M}_2$. 

Note that the anisotropy ellipse is a physical quantity which can be associated with any critical tensor, whether it is a fixed point or not. We should just compute two-point correlation functions at large distances and see for which metric $g$ they scale as powers of $|x|_g$. Defined this way, the anisotropy ellipse is an RG invariant on the critical manifold for the kind of RG maps we are considering (non-rotating RG maps changing both lattice dimensions by the same factor $b$), and all points on the RG trajectory will have the same anisotropy ellipse as the initial critical tensor. In the case at hand the initial tensor represents the nearest-neighbor isotropic Ising model on the square lattice, for which the anisotropy ellipse is of course the circle. So, the same will be true for the fixed point $A_{*}$ which we will therefore call "isotropic fixed point". The previous observation explains why RG iterations get attracted to a single fixed point rather than start wandering around the fixed point manifold $\mathcal{M}_2$.

Now that we know that the fixed point $A_{*}$ is isotropic, does this help us to find it? 
The answer would be yes if we had a tensor RG algorithm preserving lattice rotation symmetry. In that case we could be sure that $A_{*}$ belongs to a subspace of rotation-symmetric tensors. We could restrict RG evolution to this subspace, within which $\nabla R$ would be free of eigenvalues 1. It would have been trivial then to set up the Newton method. But in the absence of manifest rotation symmetry, the situation is not so rosy. Even though $A_{*}$ is isotropic, there is nothing manifestly special about $A_{*}$ as a tensor which could help us to limit the search. We are forced to work in a larger space of all tensors, which contains the above perturbations, and hence $\nabla R$ will have eigenvalues 1.\footnote{If we used a tensor RG algorithm preserving one or both axis reflections, such as \cite{Evenbly-Vidal,Hauru:2015abi,Evenbly:2017dyd,Evenbly-TNR-website}, we could project out the second perturbation in \eqref{eq:Treal}, but the first one would still be present.}

\begin{figure}
	\centering
	\includegraphics[width=0.5\textwidth]{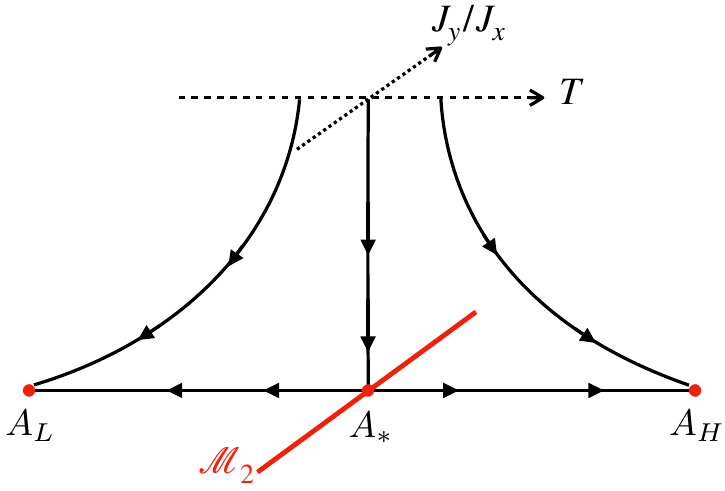}
	\caption{This figure corrects \cref{fig:exp} to show that we expect a two-dimensional manifold $\calM_2$ of fixed point tensors. (Since two of the three dimensions in the figure represent directions orthogonal to the fixed point manifold, we had to represent $\calM_2$ by a one-dimensional thick red line.) Microscopically, we may consider the anisotropic 2D Ising model with varying $J_y/J_x$. Tensor RG flows starting from the critical point of this model (dotted line) will be attracted to a one-dimensional submanifold of $\calM_2$. Adding the next to nearest neighbor interaction we may cover the whole $\calM_2$.
		\label{fig:exp1}
	}
\end{figure}

Now, it is well known that eigenvalues 1 present a problem for solving a fixed point equation via the Newton method. At an intuitive level, the problem can be explained as follows. If the Newton method does work, we can rewrite the fixed point equation in an equivalent form so that the map becomes a contraction in a neighborhood of the solution. A contraction has a unique fixed point. Thus, we conclude that the solution must be isolated. However we have seen that the solution is not isolated---there is a manifold of fixed points. This contradiction shows that there must be an obstruction to setting up the Newton method. In \cref{sec:newton} we will see the problem more clearly from a formal point of view. But first let us discuss how we will solve the problem and get rid of eigenvalues 1.

\subsection{Adding a rotation to the tensor RG map}
\label{sec:addrot}
In the discussion so far, we assumed (\cref{sec:TRG2D}) that the tensor RG map $R$ is non-rotating, i.e.~it preserves the orientation of the underlying lattice. This was needed e.g.~when applying the prediction \eqref{eq:lambdaCFT} for RG eigenvalues to perturbations associated with CFT operators with spin, like the stress tensor. We saw that such maps have features (eigenvalues 1 and a manifold of fixed points) which are problematic for our goal of using the Newton method. To solve this problem, we propose to consider rotating RG maps, which change the orientation. Concretely, we will consider a tensor RG map $R^\circ$ which is a composition of a non-rotating map $R$, followed by $\Gamma_{\pi/2}$, rotation of the tensor by $\pi/2$:
\beq
R^\circ = \Gamma_{\pi/2} R\,,
\label{eq:Rwithrot}
\eeq
or graphically:
\beq
\myinclude[scale=0.7]{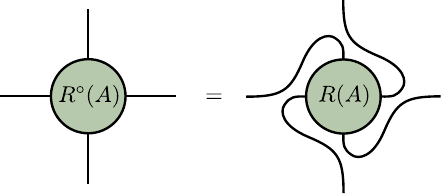}\,.
\label{eq:RGwithrot}
\eeq
So, the rotating map $R^\circ$ maps a tensor network of size $N\times M$ to one which has size $M/b\times N/b$, see Eq.~\eqref{eq:ZAN1mod}:
\beq
\myinclude[scale=0.6]{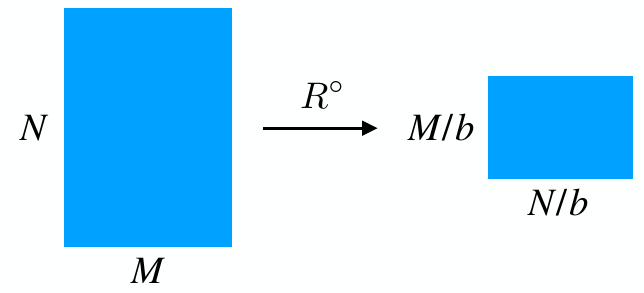}\,.
\label{eq:rotTN}
\eeq
Adding a rotation to the RG map may appear strange, and indeed recognizing that this is useful is the main idea of our paper.\footnote{The earliest tensor RG map \cite{Levin:2006jai} did change the orientation, however the point we will make---that this eliminates eigenvalues 1---did not play any role in that pioneering paper which was focused on other issues.}

Starting from the isotropic 2D Ising tensor $A^{(0)}(T)$, the RG evolution of $R^\circ$ is expected to show the same pattern of fixed points as that of $R$ in \cref{fig:exp}. In particular, the flow from $A^{(0)}(T_c)$ will lead to a critical isotropic fixed point tensor, which we denote $A_*^\circ$. (Note that $A_*^\circ\ne A_*$, the fixed point tensor of $R$, because $R$ does not preserve rotations.) Jacobian eigenvalues around the fixed point $A_*^\circ$ can be computed using the CFT data, up to a change having to do with the rotation. On the CFT side, a spin $\ell$ operator $\calO$ transforms under rotation by angle $\theta$ as
\beq
\calO\mapsto e^{i \theta \ell} \calO \,.
\eeq

In our case $\theta=\pi/2$. This implies that \cref{eq:lambdaCFT} for the Jacobian eigenvalues needs to be modified in case of $R^\circ$ as follows:
\beq
\lambda = (i)^{\ell_\cO} b^{2-\Delta_\cO}\,.
\label{eq:lambda-mod}
\eeq
Applying this to the CFT stress tensor which has spin $\pm 2$, we conclude that the eigenvectors of $\nabla R^\circ$ corresponding to the stress tensor components $T$, $\overline T$ will have eigenvalues $-1$, and not $1$ as for $\nabla R$. Relatedly, when extended to anisotropic and not-rotationally invariant tensors, the map $R^\circ$ will not have a two-dimensional manifold of fixed points as $R$ did, \cref{fig:exp1}. Instead, the isotropic fixed point $A_*^\circ$ will be isolated, and its neighborhood will contain a two-parameter manifold of period-2 orbits of $R^\circ$. 

Eigenvalues $-1$ pose no problem for the Newton method, and we expect to use the Newton method to find the fixed point $A_*^\circ$ of $R^\circ$. We stress that the fixed point $A_*^\circ$ of $R^\circ$ is as good as ${A}_*$ of $R$ for describing the critical point physics. It is only that $A_*^\circ$ can be found using the Newton method, while ${A}_*$ cannot.

We remark that in addition to changing the problematic stress tensor eigenvalues from $1$ to $-1$, rotation is not expected to have any deleterious effect on other eigenvalues. Before applying rotation, eigenvalues of quasiprimaries were positive real numbers, after rotation they acquire a spin-dependent phase while keeping the same absolute value. In particular they will become imaginary for odd spin, negative for spin $\ell=2 \text { mod } 4$, and will remain positive for spin $\ell =0 \text { mod } 4$.

A second remark is that the proposed solution of enabling the Newton method by including rotation is not limited to the Ising model. In \cref{app:the-3-state-potts-model} we will show that it works also for the 3-state Potts model. Generally, our solution will work as long as the fixed point does not have any scalar operators which are internal symmetry singlets and are marginal, i.e.~have dimension 2. Such operators have eigenvalue 1 and, being scalars, their eigenvalue will remain 1 under any rotation.\footnote{We don't know of any interesting model where operators of spin $\ell=4,8,\ldots$ have dimension 2. In reflection positive models they should have dimension at least $\ell$ by the unitarity bounds.} Experience shows that models with marginal scalars are not so frequent, although special models may have them. One such family of models is compactified scalar boson, where marginal scalar perturbation corresponds to changing the compactification radius \cite{DiFrancesco:1997nk}.

\subsection{Newton method}
\label{sec:newton}

Let us finally set up the Newton method. As mentioned, we will be solving the fixed point equation $R^\circ(A)=A$, which has an isolated solution at $A_*^\circ$.  We rewrite the equation as
\beq
f(A)=0,\quad\text{where } f(A)=A-R^\circ(A)\,.
\eeq
The Newton method solves the latter equation via iterations\footnote{Note that we use superscripts to number RG iterates \eqref{eq:RGevolution}, while subscripts for Newton iterates.}
\beq
A_{(m+1)}= g(A_{(m)}),\quad \text{where } g(A)=A- [J (A)]^{-1} f(A)\,,
\label{eq:newton}
\eeq
and 
\beq
J(A)=I-\nabla R^\circ(A)
\label{eq:J}
\eeq 
is the Jacobian of the map $f$. Note that since the RG map preserves the subspace of $\mathbb{Z}_2$-even tensors, here we only need the Jacobian restricted to this subspace.\footnote{Below, when extracting the CFT spectrum from the linearized RG map, we will also consider the Jacobian in the $\mathbb{Z}_2$-odd subspace.}

As discussed, $\nabla R^\circ(A_*^\circ)$ does not have eigenvalues 1. This implies that $J(A_*^\circ)$ is invertible and $g(A)$ is well defined in a neighborhood of $A_*^\circ$.\footnote{On the contrary, if we try to apply this to $R$ instead of $R^\circ$, $\nabla R$ has eigenvalues 1, $J$ is not invertible, and the Newton method fails.} Furthermore, if we compute $\nabla g$ and evaluate it at $A_*^\circ$ 
using $f(A_*^\circ)=0$, we find
\beq
\nabla g(A_*^\circ)=0\,.
\eeq
This implies in particular that $g(A)$ is a contraction near $A_*^\circ$. Therefore, Newton iterations \eqref{eq:newton} will converge for a sufficiently good initial approximation.

The Jacobian $J(A)$ entering the definition of $g(A)$ is somewhat expensive to evaluate and invert.\footnote{In particular, for the bond dimension $\chi=30$ used in our numerical calculations, just storing full $J(A)$ would require around a terabyte of RAM (assuming we use $64$ bit floating point numbers). Such memory requirement is far beyond our computational resources.} Several numerical strategies are available which provide a sufficiently good approximate inverse Jacobian, whose discussion is postponed to \cref{sec:inv-jac} below.

\section{Results for the non-rotating Gilt-TNR}
\label{sec:no-rot}

We now proceed to concrete tensor RG computations, based on the Gilt-TNR algorithm. We will start in this section with several in-depth studies using the original Gilt-TNR algorithm, which is non-rotating. We will thus be able to compare with the prior work from Refs.~\cite{GILT,Lyu:2021qlw,PhysRevE.109.034111}, and to test various intuitions about tensor RG from \cref{sec:intuition}, such as the existence of marginal eigenvalues corresponding to the stress-tensor deformations, and of a manifold of fixed points corresponding to different anisotropy parameters. We will also test the Jacobian method for extracting CFT scaling dimensions. After this thorough warm-up, in the next section we will switch to the rotating Gilt-TNR algorithm, and present our main numerical result: find the fixed point via the Newton method.

\begin{remark}\label{rem:chi-fp} Before we start, the following comment is in order. In the previous section, when we discussed the conceptual picture, the fixed points $A_*$ or $A^\circ_*$ denoted exact fixed points of the exact non-rotating or rotating RG map. From now on we will be discussing numerical RG maps operating at a finite bond dimension $\chi$. The logic of the previous section will still apply to some approximation, but of course the fixed points will have finite bond dimension and will be somewhat different from $A_*$ and $A^\circ_*$. To stress this we will denote the fixed points as $A^{[\chi]}_*$ and $A^{\circ[\chi]}_*$. Our goal here will be to extract these fixed points as precisely as possible, for a finite $\chi$. What happens with $A^{[\chi]}_*$ and $A^{\circ[\chi]}_*$ as $\chi\to\infty$, whether they converge to $A_*$ or $A^\circ_*$, is a separate and very interesting question which is beyond the scope of this work. See however comments at the end of Section \ref{sec:Isofp}.
	\end{remark}

\subsection{Review of the non-rotating Gilt-TNR algorithm}
\label{sec:GiltReview}

The original, non-rotating, Gilt-TNR RG map \cite{GILT} is defined as follows. Take a group of four contracted $A$ tensors inside the tensor network (a plaquette). One starts by inserting on the contracted bonds matrices $Q_1,\ldots,Q_4$ chosen so that the insertion 1) leaves the contraction intact (approximately); 2) eliminates short-range CDL-like correlations around the plaquette. How to choose the $Q_i$'s is the trademark of the algorithm, see \cite{GILT} and \cref{app:moreGilt}. Performing a singular value decomposition (SVD) of each $Q_i$ matrix, we get the following result:
\beq
\myinclude[scale=0.7]{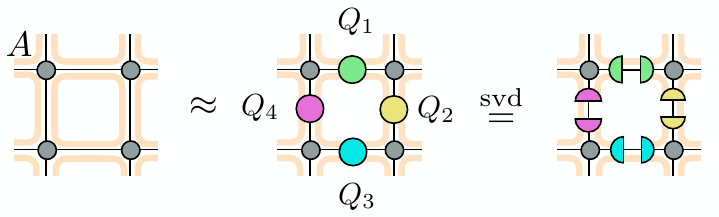}\,,
\label{eq:gilt1}
\eeq
where the light-brown shading represents CDL-like correlations. 

We now perform the above replacement, with the same $Q_1,\ldots,Q_4$, to every second plaquette of the tensor network, in a checkerboard pattern. We then contract the half-tensors with the $A$'s, defining two tensors $B_1$ and $B_2$ as:
\beq
\myinclude{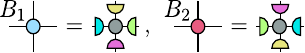}\,.
\label{eq:gilt2}
\eeq
The original tensor network made of $A$ is thus mapped to a tensor network made of $B_1$ and $B_2$, which has CDL-like correlations eliminated on every other plaquette:
\beq
\myinclude{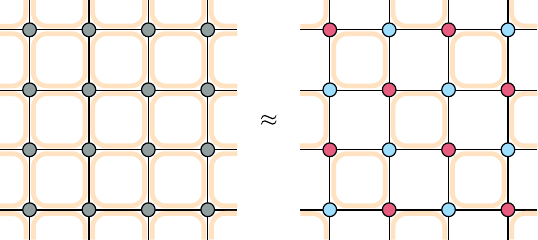}\,.
\label{eq:gilt3}
\eeq
This finishes the disentangling stage of Gilt-TNR.

The second stage of Gilt-TNR is coarse-graining. The default coarse-graining strategy of our code \cite{our-code} is via two consecutive TRG \cite{Levin:2006jai} iterations:
\beq
\text{TRG}_1:B_1,B_2\mapsto C,\quad \text{TRG}_2:C\mapsto \tilde A,
\eeq
Namely, TRG$_1$ performs SVD-decompositions of $B_1$ and $B_2$ diagonally, and then recombines the SVD pieces into a tensor $C$, so that CDL-like correlations remaining in \eqref{eq:gilt3} are contracted away:
\beq
\myinclude{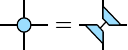}\,,
\myinclude{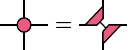}\,,\quad
\myinclude{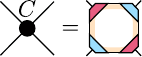}\,.
\label{eq:gilt4}
\eeq
Analogously, TRG$_2$ performs an SVD-decomposition of $C$ diagonally in two different ways and defines a tensor $\tilde A$ by recombining the pieces. The point is that after TRG$_1$ there are no CDL-like correlations in the networks, and so it makes sense to follow up with TRG$_2$ right away before any further disentangling.\footnote{We thank Cl\'ement Delcamp for explaining this point and for passing to us a minimal working version of the original Gilt-TNR code \cite{Gilt-TNR-code} where the described strategy is realized.} A moment's thought shows that after the two TRG iterations the lattice ends up rotated by $\pi/2$. Since the original Gilt-TNR code \cite{Gilt-TNR-code} aims to be non-rotating, the third and final stage of Gilt-TNR is to define the RG-transformed tensor $A'$ by rotating $\tilde A$ back to recover the original orientation.

The map $\mathcal{R}:A\mapsto A'$ is the non-normalized Gilt-TNR map. The corresponding normalized RG map $R(A)$ is defined by $R(A)=A'/\|A'\|$. The lattice rescaling factor is $b=2$. 

\subsection{Isotropic fixed point by the ``shooting'' method} 
\label{sec:Isofp}

Most of our work in this section will be for the 2D Ising model which is isotropic, i.e. has equal nearest-neighbor couplings $J_x=J_y$ (only in \cref{sec:anisotropy} we will consider the anisotropic case). We transform the partition function of the isotropic model to a tensor network, as outlined in \cref{sec:TRG2D} and in more detail in \cref{app:init}. The resulting tensor $A_{\rm NN}(t)$ has bond dimension 2. We parametrize it by the reduced temperature $t=T/T_c$ where $T_c$ is the critical temperature.

We thus have the tensor $A^{(0)}=A_{\rm NN}(t)$ to which we start applying the non-rotating Gilt-TNR algorithm, obtaining a sequence of tensors $A^{(n)}$, $n=1,2,\ldots$, see \cref{eq:RGevolution}. Our first goal is to verify the picture from \cref{fig:exp}. To see that a tensor $A^{(n)}$ converges to a fixed point we need to apply gauge-fixing after each RG step, as explained below. But as a proxy, to detect fixed point behavior, it is standard to use gauge-invariant observables. One such set of observables are the singular values of $A^{(n)}$ decomposed along a diagonal:
\beq
\myinclude{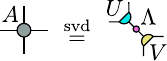}.
\label{fig:singval}
\eeq
Here $U,V$ are unitaries and $\Lambda$ is a diagonal matrix of singular values.
In \cref{fig:sval_traj_nr} we plot the singular values, normalized by the largest singular value, as a function of the RG step. The RG evolution was started at $t=1.0000110043$, the value found by bisecting\footnote{\label{note:bisection}As mentioned in \cref{note:bisection1}, the critical temperature will shift from the exact $t_c=1$ when working at finite bond dimension. To locate the finite-$\chi$ critical temperature, one uses the bisection algorithm: 1) Start from a `bracketing interval' $I=[t_l,t_h]$ such that $t_l$ is in the low-temperature phase and $t_h$ is in the high-temperature phase; 2) Test the midpoint of the interval $I$ - which phase is it in? This is done by running tensor RG for several steps until the tensor starts approaching the high-temperature or lower-temperature fixed points, which can be noticed by monitoring the singular values computed as in \eqref{fig:singval}; 3) Bisection: depending on the outcome of 2), replace the interval $I$ by its lower or upper half $I'$ whose endpoints are in different phases; 4) Repeat the above 3 steps until the bracketing interval has length smaller than the required accuracy $\Delta t$.} up to $\Delta t=10^{-10}$, to maximize the length of the plateaux visible in \cref{fig:sval_traj_nr}, whose presence signals that the RG evolution nears the fixed point, before visibly deviating from it for $n\gtrsim30$. In the plotted case, $t$ happens to be slightly below the true $t_c$ for the chosen Gilt-TNR parameters, and one can see that the deviation happens in the direction of the low-temperature phase where the fixed point tensor $A_{*L}$ has two identical singular values, all other ones being equal to zero. For $t$ slightly above $t_c$ one observes similar plateaux with eventual deviation towards the high-temperature phase where the fixed point tensor $A_{*H}$ has a single nonzero singular value. Our \cref{fig:sval_traj_nr} compares well e.g.~with the results reported in: 
\begin{itemize}
	\item \cite[Fig.~6]{GILT} using the non-rotating Gilt-TNR;
	\item \cite[Fig.~9]{Lyu:2021qlw} for another closely related non-rotating algorithm Gilt-HOTRG;
	\item \cite[Fig.~6, two upper panels]{2023arXiv230617479H} for Loop-TNR \cite{LoopTNR} and NNR-TNR.
	\end{itemize}

\begin{figure}
	\centering
	\includegraphics[width=0.5\textwidth]{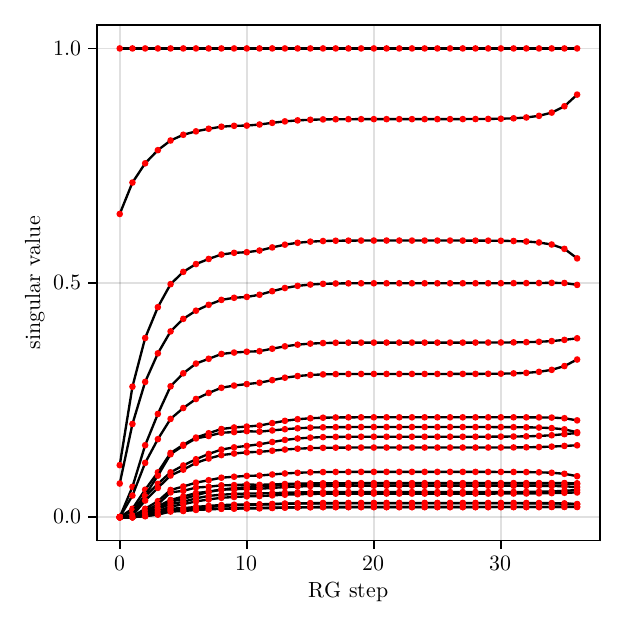}
	\caption{
	\label{fig:sval_traj_nr} The RG flow of tensors starting from $A^{(0)}$ corresponding to the Ising model at the reduced temperature $t=1.0000110043$ and with anisotropy $J_x/J_y=1$ (i.e. the isotropic tensor), in terms of the 20 largest singular values along the diagonal, \cref{fig:singval}, normalized by the first one. Gilt-TNR parameters (see \cref{tab:Giltparams}): $\chi=30$, $\epsilon_{\rm gilt}=6\times 10^{-6}$.
	}
\end{figure}

The plateau in \cref{fig:sval_traj_nr} is a strong indication that the family of tensors $A^{(n)}$ approaches a common gauge equivalence class, so that the gauge invariant observables do converge. Next, we would like to see this explicitly, by choosing a gauge for each tensor $A^{(n)}$ so that gauge-fixed tensors tend to a limit, in the Hilbert-Schmidt norm. This limit is then a fixed-point tensor. Our gauge-fixing procedure is described in \cref{app:GaugeFixing}. It involves two steps: continuous gauge fixing followed by discrete gauge fixing. Continuous gauge fixing matrices are obtained from diagonalizing appropriate environments constructed out of $A^{(n)}$. Discrete gauge fixing consists in multiplying by diagonal matrices with $\pm1$ on the diagonal.

\cref{fig:crit_convergence} demonstrates the stated convergence of the gauge-fixed RG flow to the fixed point. In this figure, we plot the Hilbert-Schmidt distance between gauge-fixed tensors $A^{(n)}$ and $A^{(n+1)}=R(A^{(n)})$ as a function of the RG step.\footnote{To compute the distance between tensors $A^{(n)}$ with $n\leq10$ we applied additional relative discrete gauge fixing (akin to that used in \cite{Lyu:2021qlw}), as using $I_+$ evaluated at the tensor approximating the critical one was not enough to bring tensors close to each other (see \cref{app:GaugeFixing}).} We see that the distance decreases until $n\sim 23$ and then starts increasing. The decrease and increase of the distance are rather smooth,\footnote{Except for a blip around step 11 where a small irregularity is also seen in the singular values, \cref{fig:sval_traj_nr}. } showing that our gauge-fixing procedure works well. The increase of the distance for $n\gtrsim 23$ is roughly consistent with the leading RG eigenvalue being $\lambda=b^{2-\Delta_\epsilon}=2$. This eigenvalue corresponds to the CFT quasiprimary $\epsilon$.

On the other hand the convergence rate for $n \lesssim 23$ roughly agrees with the largest irrelevant RG eigenvalue in the $\mathbb{Z}_2$-even sector being $\lambda_{\rm irr}\approx 0.63$ as we will see below when studying the Jacobian eigenvalues (\cref{tab:isospec1}, column ``$\lambda$ of $\nabla R$''). This eigenvalue does not correspond to a quasiprimary, the leading irrelevant quasiprimaries being $T^2,\, \overline{T}^2,\, T\overline T$ of dimension 4 and eigenvalue $0.25$.\footnote{For the trajectory originating in the nearest-neighbor Ising model, operator $T\overline T$ is known to be finetuned to zero \cite{Caselle_2002,Reinicke_1987,Poghosyan2019,Ueda2023}.\label{foot:noTTbar}} Although this eigenvalue corresponds to a total derivative operator, it does control the rate of approach to the fixed point.

The following general analysis is instructive. Given the critical temperature to $\Delta t$ accuracy, and the leading irrelevant $\mathbb{Z}_2$-even eigenvalue $\lambda_{\rm irr}$, which is the optimal number of steps $n_*$ for the shooting method, and which accuracy $\delta_*$ of approximating the fixed point tensor we may expect? Basic RG intuition says that after $n$ steps, the fixed point tensor is approximated by
\beq
\delta_n\sim \max((\lambda_{\rm irr})^n, (\Delta t) 2^n),
\eeq
where we use that the relevant $\mathbb{Z}_2$-even eigenvalue is $\approx 2$. Minimizing this expression over $n$ we get
\beq
n_*\sim  -\frac{\log_2 \Delta t}{1-\log_2 \lambda_{\rm irr}},
\qquad \delta_*\sim \Delta t^\frac{1}{1-1/\log_2 \lambda_{\rm irr}}\,.\label{eq:general}
\eeq
For $\lambda_{\rm irr}=0.63$, $\Delta t=10^{-10}$, we get $n_*\sim 20$, $\delta_*\sim 10^{-4}$, in reasonable agreement with \cref{fig:crit_convergence}.
	
In \cref{sec:Tokyo-comp}, we will compare our \cref{fig:crit_convergence} to the results of \cite{Evenbly-Vidal,Lyu:2021qlw,PhysRevE.109.034111}, who also considered the norm difference of two consecutive tensors as a function of the RG step..

\begin{figure}
	\centering
	\includegraphics[width=0.6\textwidth]{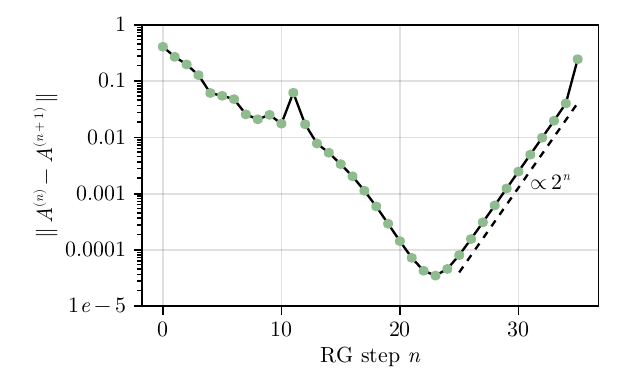}
	\caption{The Hilbert-Schmidt distance between two subsequent gauge-fixed tensors as a function of the RG step. Gilt-TNR parameters are the same as in \cref{fig:sval_traj_nr}.
		\label{fig:crit_convergence}
	}
\end{figure}

Let us discuss next the tensor elements of the approximate fixed point tensor, defined as the tensor
$A^{(n)}$ with $n=n_*$ corresponding to the minimum in the curve in
\cref{fig:crit_convergence}. From the above discussion we expect that $A^{(n_*)}$ approximates the fixed point $A^{[30]}_*$ of the $\chi=30$ Gilt-TNR map\footnote{See \cref{rem:chi-fp}. Apart from $\chi=30$, the fixed point $A^{[30]}_*$ also depends of course on other parameters of Gilt-TNR  such as $\eps_{\rm gilt}$ but we leave this dependence implicit.} with accuracy
\beq
\|A^{(n_*)}-A^{[30]}_*\|\sim 5\times 10^{-5}\,.
\label{eq:5e5}
\eeq
It is also interesting to record the distance from the critical nearest-neighbor Ising tensor from which we started the RG evolution:
 \beq
 \|A^{(n_*)}-A_{\rm NN}(1)\|\approx 0.90\,.
 \label{eq:d12}
 \eeq
 
The first few largest tensor elements of $A^{(n_*)}$ are reported in \cref{tab:fp30}, Column 2. (Column 3 of the same table gives the approximate fixed point tensor of the RG map with rotation, to be discussed in \cref{sec:with-rot}. Please ignore them for now.)

We divided the tensor elements of $A^{(n_*)}$ in \cref{tab:fp30} into several groups. If the fixed point tensor preserved reflection and/or rotation invariance, tensor elements in each group would be equal, perhaps up to a sign. Instead, we see that the tensor elements within groups are close but not exactly equal. How should we interpret this? Recall that we started the RG evolution from the isotropic tensor $A^{(0)}$ which is reflection and rotation invariant. However, as we already mentioned, Gilt-TNR does not preserve these symmetries (although it does preserve the spin flip $\mathbb{Z}_2$). That's why the fixed point tensor in \cref{tab:fp30} does not respect these symmetries.\footnote{\label{note:horvert}One can also see from \cref{tab:fp30} that the horizontal reflection is broken by a smaller amount than rotations and the vertical reflection. E.g.~in column 2 we have $A_{1^+1^+1^-1^-}\approx
		A_{1^-1^+1^+1^-}$, 
		$A_{1^-1^-1^+1^+}\approx
		A_{1^+1^-1^-1^+}$ (horizontal reflections) hold to a much better accuracy than say $A_{1^+1^+1^-1^-}\approx A_{1^-1^-1^+1^+}$ (vertical reflection). A likely reason for this is given in \cref{note:horvert1}.}

\begin{table}
	\centering
	\begin{NiceTabular}{cc@{\hskip 1em}|c}
			\toprule
		$	ijkl $        & $\left(A^{(n_*)}\right)_{ijkl} $ & $\left(A_{(m_*)}\right)_{ijkl}$\\
			\midrule
		$	1^+1^+1^+1^+$ & 0.3411 & 0.3520  \\
			\dashedline
		$	1^+1^+1^-1^- $& 0.2107 &  0.2119\\
		$	1^-1^+1^+1^- $& 0.2105  & 0.2027\\
		$	1^-1^-1^+1^+ $& 0.1969   & 0.2042\\
		$	1^+1^-1^-1^+ $& 0.1971   & 0.2135\\
			\dashedline
		$	1^-1^-1^-1^- $& 0.1807   & 0.1829\\
			\dashedline
		$	1^+1^-1^+1^- $& 0.1481   & 0.1474\\
		$	1^-1^+1^-1^+ $& 0.1438   & 0.1503\\
			\dashedline
		$		1^+1^+2^-2^-$ & $-0.1067$ & $-0.1056$ \\
		$		2^-1^+1^+2^- $& 0.1068   & 0.1024\\
		$		1^+2^-2^-1^+ $& 0.1015   & 0.1069\\
		$		2^-2^-1^+1^+ $& $-0.1016$  & $-0.1033$\\
				\dashedline
		$		1^-1^+1^-2^+ $& 0.1048   & 0.08593\\
		$		2^+1^-1^+1^- $& 0.08648  & 0.08152\\
		$		1^+1^-2^+1^- $& 0.08705  & 0.09959\\
		$		1^-2^+1^-1^+ $& 0.07821  & 0.08898\\
				\bottomrule
			\end{NiceTabular}
	\caption{
		\label{tab:fp30} The first few largest tensor elements of the (approximate) fixed point tensor. The tensor legs are ordered as left-top-right-bottom. The leg indices are numbered in the notation $p^\pm$ where $\pm$ is the $\mathbb{Z}_2$ quantum number and $p=1,2,3,\ldots$ is the index in the corresponding $\mathbb{Z}_2$ sector on the corresponding leg. Column 1: tensor element indices. Column 2: the approximate fixed point tensor $A^{(n_*)}$ of the non-rotating Gilt-TNR. Column 3: the approximate fixed point tensor $A_{(m_*)}$ of the rotating Gilt-TNR, to be discussed in \cref{sec:with-rot}.}
\end{table}

In \cref{fig:tensor_elements_magnitudes} we show absolute values of tensor elements of approximate fixed point tensors for $\chi=30$ as well as for $\chi=10,20$, for comparison.\footnote{For $\chi=10,20$ we follow the same procedure as for $\chi=30$. First, we find approximate $t_c$ by bisecting to $10^{-10}$ and looking for the longest plateau in singular values as in \cref{fig:sval_traj_nr}. We find $t_{\chi=10}=1.0012778632$ ($\eps_{\rm gilt}=10^{-4}$) and $t_{\chi=20}=1.0000618664$ ($\eps_{\rm gilt}=2\times 10^{-5}$). Then, we produce a plot of Hilbert-Schmidt distances between subsequent gauge-fixed tensors, as in \cref{fig:crit_convergence}. The approximate fixed point tensor is defined as $A^{(n)}$ corresponding to the minimum of this plot.} Absolute values of tensor elements are sorted in descending order and indexed as $|A|_I$ where $I=1,2,\ldots$ is the position in the sorted list. We then plot $|A|_I$ as a function of $I$. The left figure shows the first 50 tensor elements, while the right figure gives a global of view of all tensor elements, in the log-log scale. The dashed line in this figure corresponds to the slope $|A|_I\sim I^{-1/2}$, which marks the boundary for the finiteness of the Hilbert-Schmidt norm for infinite-dimensional tensors. The curves remain below that line for the values of  $\chi$ that we studied, but just barely so. We cannot determine from this plot whether or not our finite $\chi$ fixed points will converge to a fixed point tensor with finite Hilbert-Schmidt norm as $\chi$ goes to infinity. It may be that the Gilt-TNR algorithm does not have a fixed point with a finite Hilbert-Schmidt norm in the $\chi\to\infty$ limit. If so, the search for a rigorous infinite-dimensional fixed point will need to employ a different tensor network RG map. We will keep using Gilt-TNR in the rest of this work to demonstrate other ideas such as the effect of rotation and the possibility to set up the Newton method. Those ideas are general and will still apply if in the future we have to change the algorithm. See also Remark \ref{rem:chi-fp} and Section \ref{sec:conclusions} for further related comments.

\begin{figure}
	\centering
	\includegraphics[width=\textwidth]{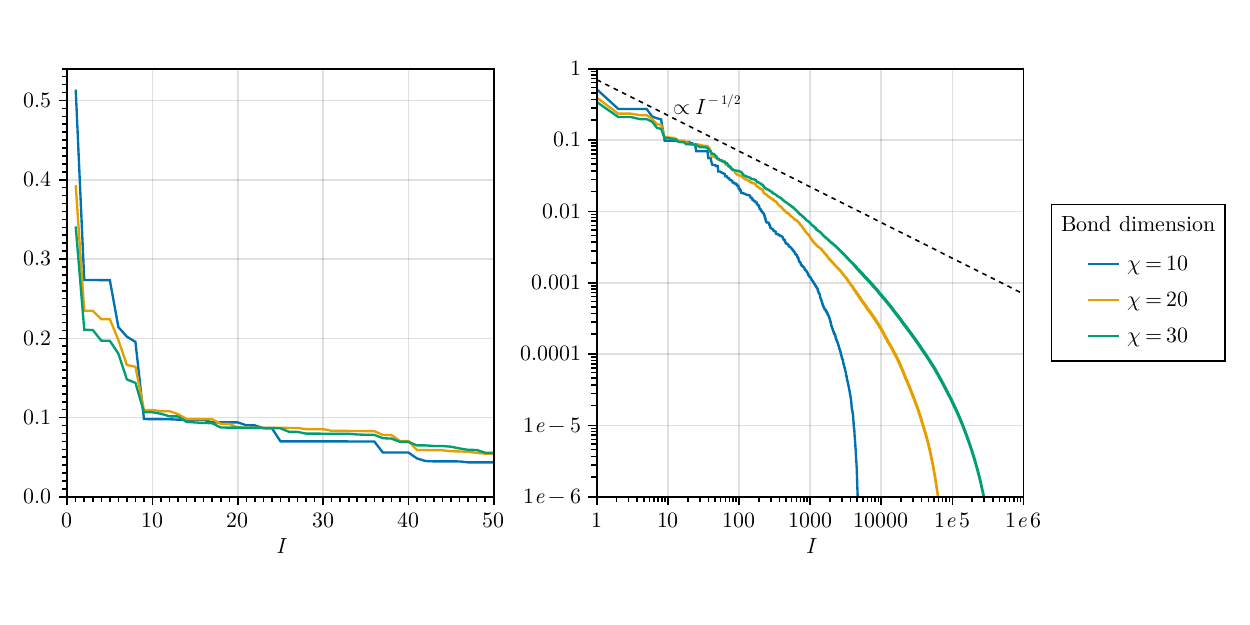}
	\caption{Absolute values of tensor elements of approximate fixed point tensors for $\chi=10,20,30$, see the text.
		\label{fig:tensor_elements_magnitudes}
	}
\end{figure}

\subsection{Jacobian eigenvalues}
\label{sec:jac-norot}
Now that we provided evidence for the existence of a fixed point tensor, we would like to explore the RG map Jacobian at the approximate fixed point. First of all we can speak of the Jacobian as our map is differentiable in a neighborhood of the fixed point. Evidence for this is provided in \cref{app:GiltDiff}. This statement is less trivial than it may seem. To make Gilt-TNR robustly differentiable we had to tweak the algorithm a bit. In the original Gilt-TNR (see \cref{app:moreGilt} where this is referred to as "dynamic $Q$-choice"), matrices $Q_1$, $Q_2$, $Q_3$, $Q_4$ are chosen iteratively, and the number of iterations is determined at runtime depending on a certain internal parameter called $\eps_{\rm conv}$. For constant $\eps_{\rm conv}$, the number of iterations may jump for small variations of tensor $A$ causing a tiny but noticeable discontinuity (and hence non-differentiability) in $\mathcal{R}(A)$. 
To avoid this problem, we are using a modified algorithm in which the number of iterations is kept fixed, referred to as ``static $Q$-choice'' in \cref{app:moreGilt}.\footnote{\label{note:TNRdanger}Other tensor RG algorithms involving disentanglers, such as e.g.~TNR \cite{Evenbly-Vidal}, also involve iterative optimization of disentanglers, to minimize a cost function expressing a truncation error. It would be interesting to see if those algorithms are, or may be made, continuous or differentiable in a small neighborhood of the critical fixed point. To our knowledge this has never been tested carefully.}

So, let $R$ be the non-rotating Gilt-TNR RG map with $\chi=30$ and other parameters as in \cref{fig:sval_traj_nr}. Let $\nabla R$ be the Jacobian of $R$ at the tensor $A^{(n_*)}$ from \cref{tab:fp30}. We are interested in the eigenvalues of $\nabla R$. 

Note that tensor $A^{(n_*)}$ preserves $\mathbb{Z}_2$ flip symmetry, in the sense that each of its vertical and horizontal indices can be assigned a $\pm 1$ quantum number, such that the tensor remains invariant when acted upon by all these sign flips, i.e., the tensor is $\mathbb{Z}_2$-even.\footnote{A tensor $A$ is $\mathbb{Z}_2$-even or $\mathbb{Z}_2$-odd if
$A_{ijkl} s_i s_j s_k s_l = \pm A_{ijkl}$ where $s_i,s_j,s_k,s_l$ are the quantum numbers of the four legs and the sign $\pm$ is $+$ for even, $-$ for odd.}

  The full tangent space at $A^{(n_*)}$, on which $\nabla R$ acts, splits into a direct sum of subspaces of perturbations which are $\mathbb{Z}_2$-even and $\mathbb{Z}_2$-odd. Since $R$ preserves $\mathbb{Z}_2$, $\nabla R$ maps  $\mathbb{Z}_2$-even perturbations to $\mathbb{Z}_2$-even, and similarly for $\mathbb{Z}_2$-odd. Each eigenvector will be either $\mathbb{Z}_2$-even or $\mathbb{Z}_2$-odd.

The first few largest eigenvalues of $\nabla R$ at $A^{(n_*)}$ are reported in \cref{tab:isospec1}, column "$\lambda$ of $\nabla R$." Since $\nabla R$ is a large matrix of size $\chi^4\times \chi^4$, its full diagonalization would be too time consuming. We instead compute $O(10)$ largest eigenvalues by the Arnoldi iteration method, implemented in the {\tt Julia} \cite{Julia,Julia-2017} package {\tt KrylovKit.jl} \cite{haegeman_krylovkit_2024}. The Arnoldi method does not require evaluation and storage of the full matrix $\nabla R$. Instead, one provides a function $v\mapsto\nabla R.v$, which computes the directional derivative of $R$ in the direction $v$. We evaluate this directional derivative via a symmetric finite difference approximation:
\begin{equation}
	D_h = \frac{R(A+hv)-R(A-hv)}{2h},
	\label{eq:symfinite}
\end{equation}
where $h=10^{-4}$ is a small parameter, set at this value to balance truncation and roundoff errors; see \cref{app:GiltDiff}.\footnote{Computing $O(10)$ largest eigenvalues and the corresponding eigenvectors of $\nabla R$ by this method requires $O(100)$ calls of the RG map, and for $\chi=30$ takes a few minutes on a laptop.}

\definecolor{light-blue}{HTML}{D1EDFF}
\definecolor{light-red}{HTML}{FF999C}
\begin{table}[]
	\centering
	\begin{NiceTabular}{c|cc|c|c}
		\CodeBefore
		\rowcolor{light-blue}{8}
		\rowcolor{light-red}{2,3,4}
		\Body
		\toprule
		$\mathbb{Z}_2$ & $\mathcal{O}$ & $\lambda_{\rm CFT}=2^{2-\Delta_\mathcal{O}}$ & $\lambda$ of $\nabla R$ &  $\lambda$ of $\nabla R^{\circ}$ \\
		\midrule
		$+$            & $\epsilon$            & 2                                            & 1.9996                  & 1.9996                          \\
		$+$            & $T$                   & 1                                            & 1.0015                  &  $-1.0010  $                       \\
		$+$            & $\overline{T}$             & 1                                            & 0.9980                  & $ -0.9982 $                        \\
		$+$            &                       &                                              & 0.6322                  &  0.7757                          \\
		$+$            &                       &                                              & $0.5941\pm 0.1195 i$    &  0.6209                          \\
		&                       &                                              & \ldots                  &  \ldots                          \\
		\midrule
		$-$            & $\sigma$              & $2^{15/8}\approx 3.668016$                   & 3.6684                  &  3.668014                        \\
		$-$            &                       &                                              & 1.5328                  &  $0.0027\pm 1.5364i$             \\
		$-$            &                       &                                              & 1.5300                  &  0.8600                          \\
		$-$            &                       &                                              & 0.8869                  &  $0.6318 \pm 0.0583i$            \\
		&                       &                                              & \ldots                  &  \ldots                      
		\\\bottomrule   
	\end{NiceTabular}
	\caption{Column 1: $\mathbb{Z}_2$ quantum number. Column 2,3 (colored rows): the first few low-dimension CFT quasiprimaries, and their exact eigenvalues. Columns 4,5: The first few largest, in absolute value, Jacobian eigenvalues at the approximate fixed point for the non-rotating Gilt-TNR ($\nabla R$) and for the rotating Gilt-TNR ($\nabla R^\circ$). Uncolored rows show RG eigenvalues corresponding to derivative interactions, which are not universal.}
	\label{tab:isospec1}
\end{table}

Colored rows of \cref{tab:isospec1} correspond to CFT quasiprimaries. The corresponding Jacobian eigenvalues can be compared to the CFT predictions in columns 2,3, computed via Eq.~\eqref{eq:lambdaCFT}. We observe  reasonably good agreement, both in the $\mathbb{Z}_2$-even and in the $\mathbb{Z}_2$-odd sectors. Note in particular the two eigenvalues close to 1 corresponding to the stress tensor components $T$ and $\overline T$. 

In the few uncolored rows in \cref{tab:isospec1} we report eigenvalues which do not correspond to quasiprimaries on the CFT side. These rows correspond to derivative operators, whose eigenvalues are not universal, as discussed in \cref{sec:CFTexp}. For this reason, the CFT entries in these rows are left empty.\footnote{Note that the Jacobian matrix is real but in general not symmetric, and so some eigenvalues come in complex-conjugate pairs.} 

To demonstrate the latter point, let us compare the Jacobian eigenvalues between $R$ and the rotating Gilt-TNR map $R^\circ$ which will be discussed in \cref{sec:with-rot} (\cref{tab:isospec1}, last column). We see that the quasiprimary eigenvalues change very little, apart from the $-1$ factor picked up by the $T$, $\overline T$ eigenvalues, as discussed in \cref{sec:addrot}. On the other hand, several eigenvalues in the uncolored rows show significant changes. For example, the leading irrelevant $\mathbb{Z}_2$-even eigenvalue $\lambda_{\rm irr}$ changes from $\approx 0.63$ to $\approx 0.78$.

The spectrum of even lower eigenvalues beyond those shown in \cref{tab:isospec1} becomes rather messy. Unfortunately we do not have enough precision to compare with the irrelevant quasiprimaries in the $\mathbb{Z}_2$-even and $\mathbb{Z}_2$-odd sectors. Their eigenvalues are buried among the non-universal eigenvalues corresponding to  derivative operators.

\subsection{Comparison to prior shooting method results}
\label{sec:Tokyo-comp}

Our results in Sections \ref{sec:Isofp} and \ref{sec:jac-norot} can be compared with the results of Ref.~\cite{Evenbly-Vidal,Lyu:2021qlw,PhysRevE.109.034111}. This was already partly summarized in \cref{tab:accuracy} (upper part of the table). We will give a detailed comparison, divided into two parts: locating the fixed point via the "shooting" method, and computing the Jacobian eigenvalues.

\subsubsection{Locating the fixed point}

In this section we review past results on the accuracy of approximate fixed points and compare with our results. We emphasize that our goal here is not to compare Gilt-TNR with other algorithms, but rather to compare the accuracy of our Newton method for finding the fixed point with past results which use a shooting method to find it. 

Ref.~\cite{Evenbly-Vidal,Lyu:2021qlw,PhysRevE.109.034111} are the only prior works known to us which implemented gauge fixing and plotted the norm difference of two consecutive tensors as a function of the RG step, like our  \cref{fig:crit_convergence}.\footnote{Other works \cite{TNRScaleTrans,Bal:2017mht} said they could reach an approximate fixed point tensor by gauge fixing, but without a plot like \cref{fig:crit_convergence} the accuracy is hard to know. Ref.~\cite{2023arXiv230617479H} quantified the rate of convergence to the fixed point by plotting the norm difference of the vectors of singular values for consecutive tensors as a function of the RG step, but without performing gauge fixing.} 
Refs.~\cite{Lyu:2021qlw,PhysRevE.109.034111} used methods closely related to ours, so we discuss them first.

 Ref.~\cite{Lyu:2021qlw} used an algorithm closely related to Gilt-TNR, which we call Gilt-HOTRG, as it uses Gilt for disentangling, and HOTRG for coarse-graining (see \cref{app:Tokyo}). It is an algorithm closely related to Gilt-TNR. Several parameter choices were identical for us and for them: $\chi=30$, $\epsilon_{\rm gilt}=6\times 10^{-6}$.
 
 Ref.~\cite{PhysRevE.109.034111} also used essentially Gilt-HOTRG with some minor modifications. They use $\chi=24$, $\epsilon_{\rm gilt}=2\times 10^{-7}$. They also split an RG step in two and only use disentangling on odd steps. This doubling of the number of steps should be kept in mind when comparing with the plots of \cite{PhysRevE.109.034111}.
 
 Ref.~\cite{Lyu:2021qlw} used gauge fixing similar to the one used by us, while Ref.~\cite{PhysRevE.109.034111} used a more sophisticated two-step gauge fixing. In the first step they gauge fix by general invertible matrices to achieve the Minimal Canonical Form (MCF) \cite{Acuaviva:2022lnc}, which minimizes the norm of the gauge-fixed tensor. In the second step they gauge fix by orthogonal matrices, similarly to \cite{Lyu:2021qlw} (with some improvements) and to us.
 
The norm-difference plot \cite[Fig.~3, red curve]{PhysRevE.109.034111}, is rather similar in shape to our \cref{fig:crit_convergence}, although their minimal distance $\delta\sim 3\times 10^{-4}$ is larger than our $\delta\sim 5 \times 10^{-5}$. This may be due to them locating the critical temperature not as precisely as us.

On the other hand the norm difference plot \cite[Fig.~9(b)]{Lyu:2021qlw} reaches a much worse minimal distance $\delta\sim 10^{-2}$ in spite of them locating the critical temperature to the same $\Delta t=10^{-10}$ accuracy as us. Ref.~\cite{PhysRevE.109.034111} investigated this issue. In their Fig.~13 they reproduce the results of \cite{Lyu:2021qlw} and note that the rate of approach to the fixed point tensor is notably slower for the algorithm of \cite{Lyu:2021qlw} than for their algorithm. In the language of \cref{sec:Isofp}, we could say that the algorithm of \cite{PhysRevE.109.034111} has $\lambda_{\rm irr}$ about the same as our algorithm, while the algorithm of \cite{Lyu:2021qlw} has $\lambda_{\rm irr}$ closer to 1. This interpretation is conjectural since \cite{Lyu:2021qlw,PhysRevE.109.034111} don't report their $\lambda_{\rm irr}$.

Therefore, the MCF step in gauge fixing, which is the only significant difference between the algorithms of Refs.~\cite{Lyu:2021qlw} and \cite{PhysRevE.109.034111}, is responsible for the improvement of the convergence rate to the fixed point in their case. On the other hand it should be noted that our algorithm is able to reach the same convergence rate without MCF but using Gilt-TNR instead of Gilt-HOTRG. There is a little bit of luck involved here, the leading irrelevant eigenvalue $\lambda_{\rm irr}$ being non-universal.

Finally, let us discuss the earlier Ref.~\cite{Evenbly-Vidal}. Using the TNR algorithm that they proposed in that work, they reached $\sim 10^{-5}$ accuracy in a plot like our \cref{fig:crit_convergence}, using the TNR bond dimension $\chi_{\rm TNR}=42$ \cite[Suppl.~Mat., Fig.~6]{Evenbly-Vidal}. While typically the TNR algorithm needs a smaller bond dimension than Gilt-TNR to achieve the same accuracy \cite{GILT}, here it's the other way around, likely because they did not bisect to get closer to the $\chi$-dependent $T_c$, but started the RG flow at the theoretical $T_c$. A feature of the plots in \cite{Evenbly-Vidal} and in \cite{Lyu:2021qlw}, possibly due to imperfect gauge-fixing, is that they flatten near the minimum, compared to the clean V-shaped plots in our work and in \cite{PhysRevE.109.034111}.

\subsubsection{Jacobian eigenvalues}

In \cref{sec:jac-norot} we studied the Jacobian of Gilt-TNR close to the critical point. In particular, we reported the leading irrelevant eigenvalue $\lambda_{\rm irr}$ and observed that the rate of approach to the fixed point is consistent with it. Since Refs.~\cite{Lyu:2021qlw,PhysRevE.109.034111} do not report the spectrum of the Jacobian of the map used to produce their norm-difference plots, the $\lambda_{\rm irr}$ test is not possible based on their available data. Such a test would be interesting in the future. 

Refs.~\cite{Lyu:2021qlw,PhysRevE.109.034111} do study the spectrum of a tensor RG map Jacobian (procedure called there "linearized tensor RG", or "lTRG"), but only for a different RG map,  constructed once the (approximate) fixed point is known. This new ``frozen'' (our term) RG map has the Gilt disentanglers and the HOTRG isometries frozen to their fixed point values.\footnote{The change in the RG map can be inferred from the linearization expression \cite[Eq.~(40)]{Lyu:2021qlw} and is also stated explicitly in \cite{PhysRevE.109.034111} below Eq.~(18).}  

Ref.~\cite{Lyu:2021qlw} uses the spectrum of the frozen RG map Jacobian to extract the scaling dimensions of CFT operators with $\Delta\le 2$ \cite[Table I]{Lyu:2021qlw}, and Ref.~\cite{PhysRevE.109.034111} extends this analysis to $\Delta\le 3$ \cite[Table I]{PhysRevE.109.034111}. Compared to our $\nabla R$ spectrum, their accuracy is one order of magnitude worse for $\sigma$ and $\eps$, and is about the same for $T$, $\overline T$. Our better performance for $\sigma$ and $\epsilon$ may be due to our better approximation of the fixed point tensor.

Their results contain one curious feature: in addition to quasiprimaries $\sigma$, $\epsilon$, $T$, $\overline{T}$, it contains many derivative operators identified with first and second derivatives of $\sigma$, $\epsilon$ and first derivatives of $T$, $\overline{T}$, whose scaling dimensions are correctly reproduced (up to numerical errors) from frozen RG eigenvalues. This is puzzling: we emphasized many times that eigenvalues of derivative operators are not universal and do not have to agree between RG and CFT, and indeed they do not agree for us. So why do they agree for them? This was not discussed in \cite{Lyu:2021qlw,PhysRevE.109.034111}.

The answer to this puzzle is instructive. It turns out that the frozen RG map of \cite{Lyu:2021qlw,PhysRevE.109.034111} is very special: its Jacobian coincides with the lattice dilatation operator. This coincidence, which does not hold for our RG maps $R$ and $R^{\circ}$, explains the puzzle, because lattice dilatation operator eigenvalues are universal and reproduce all CFT operator eigenvalues, including total derivatives. It would require too much preparatory work to explain this point in detail here; it will be done in \cite{paper-DSO}.
 
We should add that the frozen RG map is only defined once one has already found an approximate fixed point, which can only be done with a full RG map where disentanglers and isometries depend on the tensor. So while the frozen RG map is an easier map and it has a conceptually simpler Jacobian, it would not be useful for our purposes of setting up a Newton method.

\subsection{Anisotropy: manifold of fixed points}
\label{sec:anisotropy}

In the previous sections we studied the RG flow starting from the isotropic 2D Ising model. We would now like to show the new features which appear when introducing anisotropy. Namely, we consider the nearest neighbor Ising model with anisotropy parameter $\Jratio=J_y/J_x \leq 1$. We transform the partition function of this model to a tensor network as detailed in \cref{app:init}. The resulting tensor has bond dimension $2$ and depends on the reduced temperature $t$ and the anisotropy parameter $\Jratio$. Starting from this tensor $A^{(0)}=A_{\rm NN}(t,\Jratio)$ we repeatedly apply the non-rotating Gilt-TNR algorithm to get a sequence $A^{(n)}(t,\Jratio)$, $n=1,2,\ldots,$ see \cref{eq:RGevolution}.

We have two goals in this section: 1) We want to verify \cref{fig:exp1} showing that $\Jratio$ parametrizes a one-dimensional manifold of fixed points. To our knowledge, this has never been checked using tensor RG. 2) We want to demonstrate that eigenvalues of $\nabla R$ corresponding to quasiprimaries are universal across this manifold and eigenvalues that do not correspond to quasiprimaries can vary.

\begin{figure}
	\centering
	\includegraphics[width=0.7\textwidth]{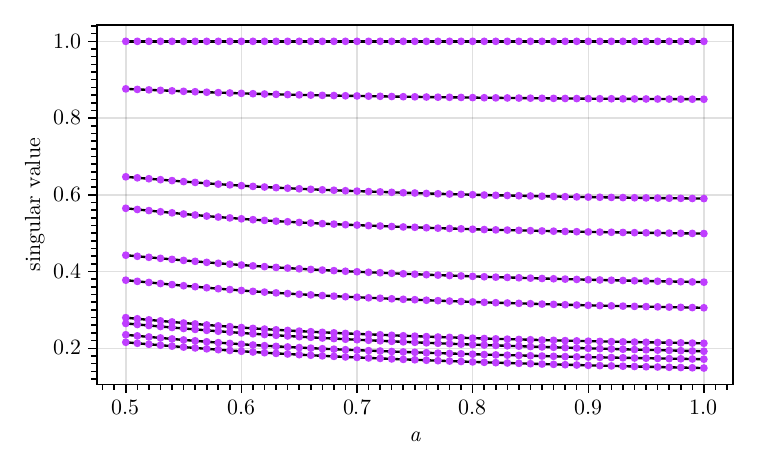}
	\caption{Dependence of the first few singular values of $A^{(20)}(t \approx t_c(a), \Jratio)$ on $\Jratio$. The singular values are obtained via SVD decomposition as in \cref{fig:singval}. The largest singular value is normalized to 1 as in \cref{fig:sval_traj_nr}.
		\label{fig:spec_of_A_vs_J}
	}
\end{figure}

Let us start with the first goal. As discussed in \cref{sec:rotation-symmetry-breaking}
we expect a two-dimensional manifold of RG fixed points for a non-rotating RG map. The critical model corresponds to $t=1$ in our notation, for all values of $a$. However, as in the isotropic case, for finite bond dimension the value of $t$ that gives convergence to the approximate fixed point is not exactly $1$ and will depend on $a$. We denote it by $t_c(a)$.\footnote{Approximate values of $t_c(a)$ were obtained by the bisection method as explained after \cref{fig:singval} with $\Delta t=10^{-10}$.}
If we start with a critical anisotropic nearest neighbor Ising model ($t=t_c(a)$), then iterations of the RG map will converge to a fixed point that depends on the anisotropy parameter $\Jratio$. To show that different values of $\Jratio$ do indeed produce different fixed points, we consider the singular values of the fixed point decomposed along a diagonal as defined in \cref{fig:singval}. If we plot these singular values as a function of the RG step, we see curves that are qualitatively similar to \cref{fig:sval_traj_nr}. In particular for $t \approx t_c(a)$ there is a range of RG steps where the singular values are almost constant. However, these constant values depend on $\Jratio$. We compute these singular values at the 20th RG step for $\Jratio$ ranging from $0.5$ to $1.0$ and $t \approx t_c(a)$, and plot them as a function of $\Jratio$ in \cref{fig:spec_of_A_vs_J}. The variation of these values with $\Jratio$ shows that as we vary $\Jratio$, the RG fixed point is varying along a one-dimensional manifold. We expect that by considering a more general initial Ising model with two parameters we could obtain the full two-dimensional manifold of fixed points.

\begin{figure}
	\centering
	\includegraphics[width=0.7\textwidth]{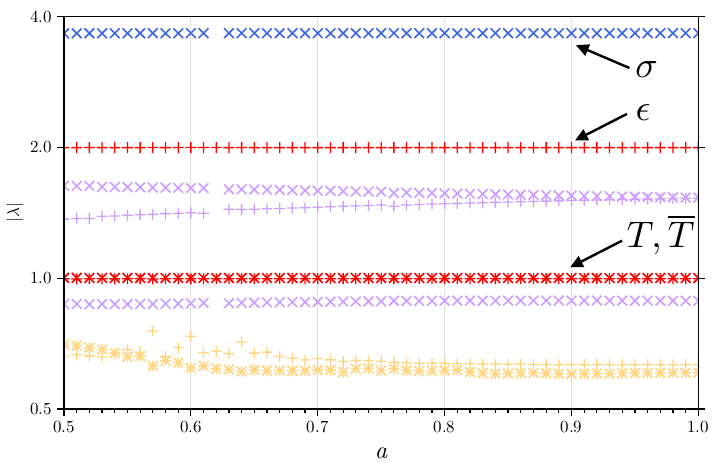}
	\caption{The first few largest eigenvalues of $\nabla R$ computed at $A^{(20)}(t\approx t_c(a), \Jratio)$ as a function of the anisotropy parameter $\Jratio$. Note the vertical $\log_2$-scale. We use the following color scheme: red/blue for $\mathbb{Z}_2$-even/odd quasiprimaries, light-orange/light-violet for other $\mathbb{Z}_2$-even/odd eigenvalues. For $a=1$ the spectrum agrees with the ``$\lambda$ of $\nabla R$'' column of \cref{tab:isospec1}.
		\label{fig:eigenvals_vs_Jratio} }
\end{figure}

Let us proceed with the second goal. As discussed in \cref{sec:CFTexp} the eigenvalues of the RG Jacobian $\nabla R$ should be of two types: those that correspond to quasiprimaries in the CFT and those that correspond to total derivatives. The former are given by \cref{eq:lambdaCFT} and so should not depend on which fixed point in the two-dimensional manifold we consider. The latter cannot be predicted from the CFT and may depend on the fixed point at which we compute the Jacobian. In \cref{fig:eigenvals_vs_Jratio} we plot the first few eigenvalues of $\nabla R$ as a function of $\Jratio$. The eigenvalues corresponding to quasiprimaries are shown in red/blue for the $\mathbb{Z}_2$-even/odd sectors. These eigenvalues are essentially independent of $\Jratio$ and fit the CFT predictions of $2^{15/8}$, $2$ and $1$ very well. The eigenvalues corresponding to total derivatives are shown in light-orange/light-violet for the $\mathbb{Z}_2$-even/odd sectors. For these eigenvalues one can see significant variation with $\Jratio$. For $a=1$ the spectrum agrees with the ``$\lambda$ of $\nabla R$'' column of \cref{tab:isospec1}.\footnote{For all values of $\Jratio$ in \cref{fig:eigenvals_vs_Jratio} there is a pair of complex-conjugate eigenvalues in the lower part of the plot. For most values of $\Jratio$ the pair is the 9th and 10th eigenvalues, but for a few values of $\Jratio$ it is the 8th and 9th.  For complex eigenvalues we plot their absolute value. Since the two eigenvalues in this pair have the same absolute value, the result is an ``$\times$'' and a ``+'' right on top of each other.}

We thus see broad agreement with the intuition expounded in \cref{sec:intuition}.

\begin{remark} In \cref{fig:eigenvals_vs_Jratio} there are no $\mathbb{Z}_2$-odd eigenvalues shown for $\Jratio=0.62$. For this value of $\Jratio$ the computation of the $\mathbb{Z}_2$-odd eigenvalues produced nonsensical results which we do not show. We believe that this is a result of the issues with discrete gauge fixing in the $\mathbb{Z}_2$-odd sector that are discussed in \cref{sec:comment_gauge_fix}. 
\end{remark}

\section{Results for the rotating Gilt-TNR}
\label{sec:with-rot}

Here we will finally present what we consider our most exciting numerical results. We will consider the rotating Gilt-TNR algorithm. We will describe how the critical fixed point is found with the Newton method, in agreement with the general discussion in \cref{sec:intuition}. We will study the rate of convergence of the Newton method, and we will use the fixed point to extract CFT quasiprimary dimensions using the Jacobian. All the above will be done in \cref{sec:rot-iso} for the isotropic Ising model, and the discussion here will parallel Sections \ref{sec:Isofp} and \ref{sec:jac-norot} for the non-rotating algorithm. (The 3-state Potts model will be similarly considered in \cref{app:the-3-state-potts-model}.) We will consider the anisotropic Ising model in \cref{sec:oscillations} where we exhibit period-2 oscillations for the rotating Gilt-TNR algorithm when starting from an anisotropic critical tensor. 

\subsection{Numerical strategies for the inverse Jacobian}
\label{sec:inv-jac}

We start by describing how to deal numerically with the inverse Jacobian in the Newton iteration map \eqref{eq:newton}. We need to solve the linear system
\beq 
J(A) x=b,\quad J(A)\equiv I-\nabla R^\circ(A)\,,
\label{eq:linear-solve}
\eeq
where $A=A_{(m)}$, $b=f(A_{(m)})$. 

Our code does not store the Jacobian $\nabla R^\circ(A)$ as a matrix, but realizes it as a linear operator $v\mapsto \nabla R^\circ(A) v$, computed with a second-order finite difference approximation (see \cref{sec:jac-norot} for a fully analogous discussion for the non-rotating RG map). We may then solve \eqref{eq:linear-solve} via an iterative algorithm, GMRES \cite{GMRES} being the standard choice. For unclear reasons, in our numerical tests using {\tt KrylovKit.jl} \cite{haegeman_krylovkit_2024}, GMRES convergence was not always reliable. We therefore used a different strategy, which we found numerically stable.

We use the Arnoldi method to find $s$ largest in absolute value eigenvalues of $\nabla R^\circ(A)$, and the corresponding eigenvectors. Let $P_{s}$ be the orthogonal projector on the subspace spanned by them. We then approximate the Jacobian by 
\beq
\label{eq:approxJ}
J_\approx(A) = I- P_s \nabla R^\circ(A) P_s.
\eeq
We take $s\ge 3$, so that the relevant eigenvalue corresponding to the $\epsilon$ and the two $-1$ eigenvalues corresponding to $T,\bar{T}$ are subtracted. The remaining eigenvalues of $\nabla R^\circ(A)$ being less than 1 in absolute value, the approximation is expected to be good. Increasing $s$ further improves the approximation and speeds up convergence, as we will see below.

Computing the approximation \eqref{eq:approxJ} with the Arnoldi method takes a few minutes on a laptop for $s=10$, $\chi=30$. Within this approximation, solving linear system \eqref{eq:linear-solve} is easy as it amounts to inverting an $s\times s$ matrix. We checked that the results agree closely with GMRES, when the latter converges. We also checked that the accuracy of this approximation, and the agreement with GMRES, are further improved by including the first few terms of the Neumann series expansion in the difference $J(A)-J_\approx(A)$. This improvement is not used in the results shown below. 

To speed up computations, we use one more approximation. Namely, we do not recompute the approximate Jacobian $J(A)$ at every Newton step, but compute it just once, for $A=A_{(0)}$. This preserves good local convergence properties of the Newton method.

\subsection{Isotropic 2D Ising fixed point by the Newton method}
\label{sec:rot-iso}

So let $R^\circ$ be the rotating Gilt-TNR map, constructed by adding a $\pi/2$ rotation to the non-rotating Gilt-TNR. The numerical results below are obtained using the same Gilt parameters as in \cref{sec:Isofp}: $\chi=30$, $\epsilon_{\rm gilt}=6\times 10^{-6}$.

We would like to solve the fixed point equation for $R^\circ$ via the Newton method. We set up the Newton method as in \cref{sec:newton} and compute the inverse Jacobian as described in \cref{sec:inv-jac}. The key ingredients here are the Jacobian $\nabla R^\circ$ and its eigendecomposition, which are computed similarly to the non-rotating Gilt-TNR, \cref{sec:jac-norot}.

Ideally, we would like the Newton method to converge to the fixed point when starting from a generic point on the critical manifold, e.g.~from the bond dimension 2 tensor $A_{\rm NN}(t=1)$ obtained when translating the critical isotropic 2D Ising model to the tensor network form (\cref{app:init}). However it turns out somewhat nontrivial to set up such a faraway convergent Newton method, for reasons discussed in \cref{sec:attempts} below. So in this work we settle for convergence in a smaller domain. Our search for the fixed point will consist of two stages: 
\begin{enumerate}
	\item
	Find an approximate fixed point tensor $A^{(n')}$, obtained by performing $n'$ RG steps starting from $A_{\rm NN}(t_0)$, where $t_0$ is near the finite-$\chi$ critical temperature $t_c\approx 1$. In our main results we use $n'=23$ initial RG steps from $t_0=1.00001313120$,  obtained by bisecting to $\Delta t = 10^{-10}$.\footnote{Tensor $A^{(n')}$ in this section should not be confused with tensor $A^{(n_*)}$ in \cref{sec:Isofp}. They are different tensors because the RG map is different. Also, $n_*=23$ in \cref{sec:Isofp} which was chosen to minimize $\|R(A^{(n)})- A^{(n)}\|$, while here we choose $n'=23$ simply as a sufficiently large number.} These parameters can be somewhat relaxed; see \cref{sec:attempts} below. 
	
	\item 
	Start Newton iterations from $A^{(n')}$, which are then seen to converge to a fixed point.
\end{enumerate}

	Denoting $A_{(0)}=A^{(n')}$, the Newton iterations are given by
	\beq
	A_{(m+1)}= g_\approx(A_{(m)}),\qquad n=0,1,2,\ldots,
	\label{eq:Newton-iter}
	\eeq
	where $g_\approx$ is the approximate Newton map, obtained by replacing the exact Jacobian $J(A)$ by the approximate Jacobian $J_\approx$ with projector rank $s\ge 3$ and evaluated at $A_{(0)}$, as described in \cref{sec:inv-jac}. We will see below how various choices of $s$ influence the performance.
	
Before starting the Newton iterations, it is instructive to consider $\nabla R^\circ$ eigenvalues at $A_{(0)}$. Recall (\cref{tab:isospec1}, column 3) that $\nabla R$ had two eigenvalues close to 1 at $A^{(n_*)}$: $1.0015$ and $0.9980$. In \cref{sec:addrot} we argued that adding the rotation should flip the sign of the two eigenvalues 1, which are associated with the stress tensor components. Our numerical check fully confirms that intuition: we find that $\nabla R^\circ(A_{(0)})$ has precisely 2 eigenvalues close to $-1$: $-1.0006$ and $-0.9985$, and no eigenvalues close to 1.

\begin{figure}
	\centering
	\includegraphics[width=\textwidth]{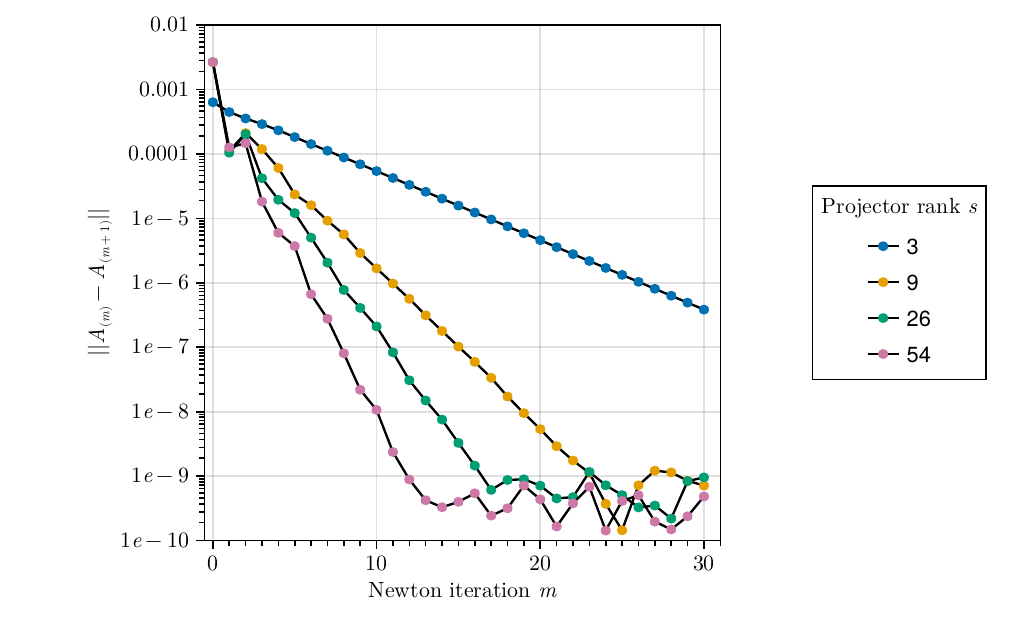}
	\caption{ 
		Convergence of the Newton method for approximations of Jacobian that use different rank $s$ of the $P_s$ projector in \cref{eq:approxJ}. 
		\label{fig:newton_conv}
	}
\end{figure}

Given this encouraging news, we expect Newton iterations \eqref{eq:Newton-iter} to converge, and this is indeed what we observe numerically. 
\cref{fig:newton_conv} demonstrates the rate of convergence, for several projector ranks $s=3,9,26,54$. In that plot, we show the Hilbert-Schmidt distance $\|A_{(m+1)}-A_{(m)}\|$ as a function of the Newton iteration number $m=0,1,2,\ldots$.
We see exponential convergence for all values of $s$, improving for larger $s$. Recall that the Newton method with the exact Jacobian is expected to converge super-exponentially fast. Using $J_{\approx}$ reduces the convergence to exponential. As usual, the rate of the exponential convergence can be understood in terms of the largest eigenvalue of the Jacobian of the map $g_\approx(A)$ at the fixed point. Actually, the convergence rate improvement with $s$ seen in \cref{fig:newton_conv} saturates around the maximal $s=54$ used in that figure. At the end of this section we will give a more detailed discussion of the convergence rate, to understand why it first improves with $s$ and then saturates, and how to improve it further.

Whatever $s$, the exponential convergence stops and is replaced by a chaotic behavior when $\|A_{(m+1)}-A_{(m)}\|$ becomes $\sim 10^{-9}$ or a bit smaller (for $s=3$ this happens outside of the shown range of $m$). We assign this behavior to roundoff errors. In \cref{app:GiltDiff}, we provide evidence that Gilt-TNR has roundoff order $\sim10^{-11}$. It is not unreasonable that this error is enhanced to $\sim10^{-9}$ in the Newton method due to additional involved operations, such as evaluating the Jacobian.

All tensors corresponding to points at the bottom of \cref{fig:newton_conv} should approximate the critical tensor equally well. We have checked their pairwise distances and they are all $\sim 10^{-9}$, see \cref{fig:pairwise}.

\begin{figure}
	\centering
	\includegraphics[scale=0.7]{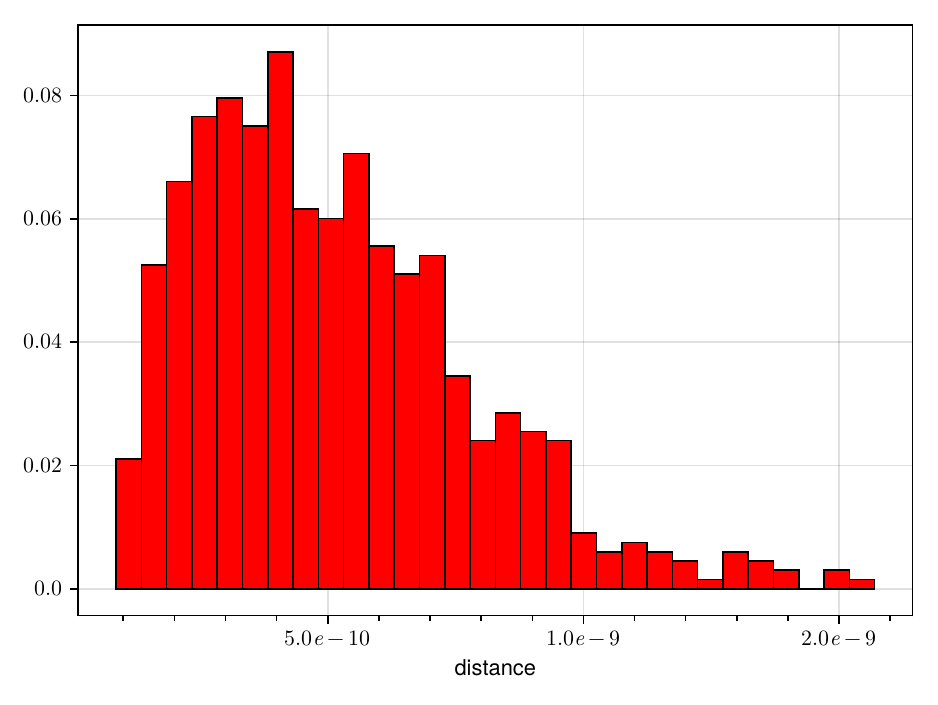}
	\caption{The histogram of distances between the $37$ tensors in \cref{fig:newton_conv} which fall below the $10^{-9}$ line.
		\label{fig:pairwise}}
\end{figure}

For further numerical tests we consider the $s=54$ Newton iteration sequence and pick from it the tensor $A_{(m_*)}$ with $m_*=14$, which is the last iteration when $\|A_{(m+1)}-A_{(m)}\|$ decreases. From the above discussion, we expect that $A_{(m_*)}$ is an excellent approximation to the (unique) fixed point $A_*^{\circ[30]}$ of rotating $\chi=30$ Gilt-TNR (for the given Gilt parameters), with error 
\beq
\| A_{(m_*)} - A_*^{\circ[30]} \| \sim 10^{-9}\,.
\label{eq:distest}
\eeq 
Applying the RG map to $A_{(m_*)}$, we obtain that this shifts the tensor by
\beq
\| R^\circ(A_{(m_*)}) - A_{(m_*)} \| \approx 3\times 10^{-10}\,,
\eeq
providing a further check of \eqref{eq:distest}. We also report the distance from the the critical nearest-neighbor Ising tensor
\begin{equation}
	\|A_{(m_*)}-A_{\rm NN}(1)\|\approx 0.89\,,
	\label{eq:d13}
\end{equation}
as well as the distance from the initial tensor for Newton iterations
\beq
\|A_{(m_*)}-A_{(0)}\|\approx 3\times 10^{-3},
\label{eq:dist_init}
\eeq
which gives an initial idea of the size of the convergence domain for our implementation of the Newton method (see \cref{sec:attempts} below).

In \cref{tab:fp30}, last column, we show the first few tensor elements of $A_{(m_*)}$. We see that they are rather similar, although not identical, to those of $A^{(n_*)}$, the approximate fixed point tensor of the non-rotating Gilt-TNR. 
The distance is, see \cref{fig:sphere}:
\begin{equation}
	\|A_{(m_*)}-A^{(n_*)}\|\approx 0.18\,.
		\label{eq:d23}
\end{equation}
This is in agreement with our general intuition that while the critical exponents are universal, RG fixed points are not---different RG maps will in general have different fixed points. 
\begin{figure}
	\centering
	\includegraphics[scale=0.5]{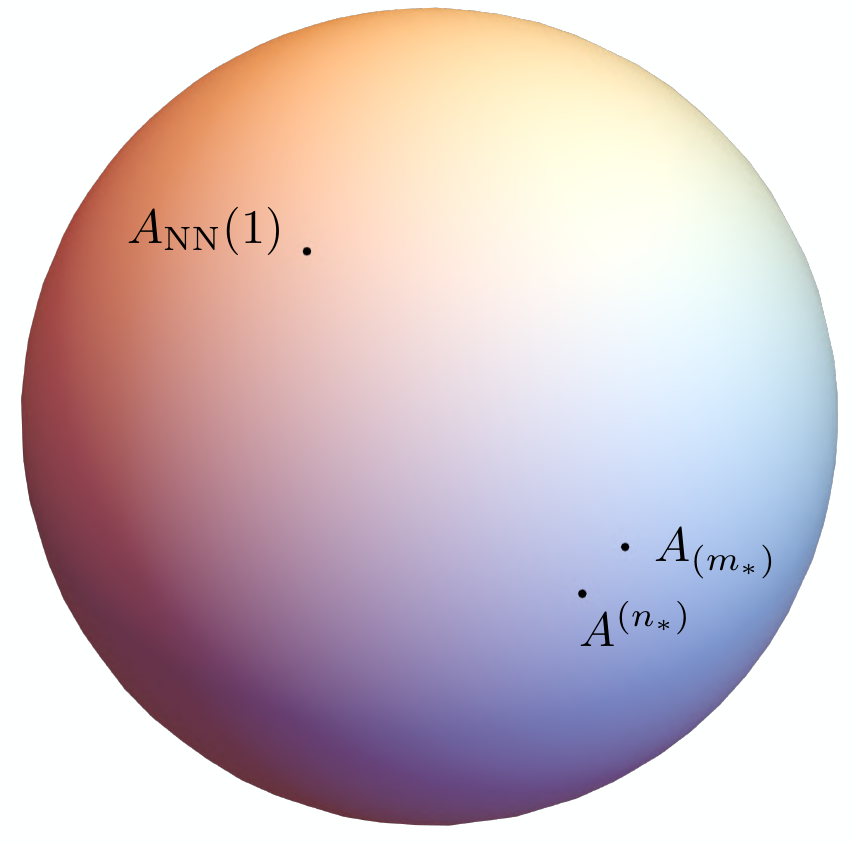}
	\caption{ 
		{The geometry on the unit Hilbert-Schmidt sphere corresponding to the distances \eqref{eq:d12},\eqref{eq:d13},\eqref{eq:d23}. The Hilbert-Schmidt sphere has large dimension, but since there are three tensors involved, the geometry can be shown on a two-dimensional unit sphere as in this figure.}
		\label{fig:sphere} 
	}
\end{figure}

Just like $A^{(n_*)}$, tensor $A_{(m_*)}$ does not respect reflections and rotations. The general comments in this regard made in \cref{sec:Isofp} still apply. We found that the distribution of tensor elements in their absolute magnitude has the same general shape for $A_{(m_*)}$ as for $A^{(n_*)}$, so we do not repeat \cref{fig:tensor_elements_magnitudes}.

Let us next discuss the eigenvalues of the Jacobian $\nabla R^\circ$ at $A_{(m_*)}$. The first few largest eigenvalues are reported in \cref{tab:isospec1}, last column. The main point here is that the quasiprimary eigenvalues are in good agreement with the CFT and with the observation that $T,\bar{T}$ eigenvalues should flip sign. 

Tensor $A_{(m_*)}$ is a much better approximation for the fixed point of rotating Gilt-TNR than $A^{(n_*)}$ was for the non-rotating Gilt-TNR ($10^{-9}$ vs $5\times 10^{-5}$, Eq.~\eqref{eq:5e5}). Does this lead to a better agreement of quasiprimary eigenvalues with CFT? Comparing the $\nabla R$ and $\nabla R^\circ$ columns of \cref{tab:isospec1}, we see that $\sigma$ eigenvalue agreement improves by two orders of magnitude, while $\epsilon$, $T$, $\overline{T}$ eigenvalues do not improve. Apparently, the error in these eigenvalues is dominated by the finite-$\chi$ truncation effects, and not by the error in the determination of the finite-$\chi$ fixed point tensors. It is not clear to us why the $\sigma$ eigenvalue is special in this regard.

As for the non-rotating case, we also see in \cref{tab:isospec1} a bunch of non-universal eigenvalues of $\nabla R^\circ$, associated with derivative interactions. Of some interest here is the leading irrelevant $\mathbb{Z}_2$-even eigenvalue $\lambda_{\rm irr}\approx 0.78$, as it would control the rate of approach to the fixed point if the traditional "shooting" method were used instead of the Newton method, see \cref{rem:TvsN} below.

Note that the subleading $\mathbb{Z}_2$-odd eigenvalues $0.0027\pm 1.5364i$ of $\nabla R^\circ$ in \cref{tab:isospec1} are close in absolute value to the pair of subleading $\mathbb{Z}_2$-odd eigenvalues 1.5328 and 1.5300 of $\nabla R$ reported in the same table, but are almost imaginary rather than real. These eigenvalues are likely associated with the CFT derivative operators $\partial \sigma$, $\overline\partial \sigma$, with the additional phase due to $\pi/2$ rotation of an operator with spin $\pm 1$. This correspondence between eigenvalues of $R$ and $R^\circ$ would be exact if the fixed points were the same. The fixed points are not exactly the same but close, and it seems that the Jacobian varies sufficiently slowly for the reasoning to be approximately true. Therefore, while the magnitude of derivative operator eigenvalues is not predictable, their relative phase between $R$ and $R^\circ$ may be at least approximately predictable.\footnote{We thank Naoki Kawashima for a discussion.}  Similarly, there are almost imaginary $\nabla R^\circ$ eigenvalues in the $\mathbb{Z}_2$-even sector which may be associated with $\partial \epsilon$, $\overline\partial \epsilon$ and with $\partial T$, $\overline\partial\, \overline T$ (not shown in \cref{tab:isospec1} since they belong to the lower part of the spectrum). We will see in \cite{paper-DSO} that the same pattern holds rigorously for the eigenvalues of the lattice dilatation operator.

\begin{remark} \label{rem:TvsN} It is interesting to discuss what it would take to approach the critical fixed point tensor with the traditional ``shooting'' method as closely as $10^{-9}$ achieved here with the Newton method. To use the shooting method we would need to determine the critical temperature sufficiently precisely and then use a sufficiently large number of iterations to approach the fixed point. The final accuracy $\delta_*$ is given in terms of the leading irrelevant $\mathbb{Z}_2$-even eigenvalue $\lambda_{\rm irr}$ in Eq.~\eqref{eq:general}. Using $\lambda_{\rm irr}\approx 0.63$ for the non-rotating Gilt-TNR map, we find that $\delta_*\sim 10^{-9}$ requires $\Delta t\sim 10^{-23}$, below the $\sim 10^{-16}$ machine precision for the double precision arithmetic. (For the rotating Gilt-TNR the required $\Delta t\sim 10^{-34}$ is much smaller due to $\lambda_{\rm irr}\approx 0.78$ being closer to 1.) Thus, to match the precision level of the Newton method with the shooting method would require multiple precision arithmetic, or a tensor RG map with a smaller $\lambda_{\rm irr}$. The theoretical lower limit here is $\lambda_{\rm irr}=0.25$, corresponding to quasiprimaries $T^2$ and $\overline{T}^2$ (see \cref{foot:noTTbar}).
\end{remark}

Finally, let us come back to the convergence rate in \cref{fig:newton_conv}. We checked numerically that, for all $s$ in that figure, the convergence rate is compatible with the largest eigenvalue of the Jacobian of the map $g_\approx$ which is being iterated (as it should be).
Here we would like to provide qualitative understanding of why the convergence rate improves with $s$ and why it eventually saturates, as mentioned above.
 
From the definition of $g_\approx$, its Jacobian at the fixed point is
\beq
(\nabla g_{\approx})(A_*^\circ)=I-J_\approx^{-1} (I-\nabla R^\circ(A_*^\circ))\,.
\eeq
We would like to understand the eigenvalues of this matrix. Recall that $J_\approx$ is evaluated not at the fixed point but at $A_{(0)}$ where we start the Newton iterations. Let us consider first an approximation where we neglect this difference:
\beq
J_\approx \approx I-P_s \nabla R^\circ(A_*^\circ) P_s\,.\label{eq:appr-lemma}
\eeq
Then we have to study the largest eigenvalue of the matrix
\beq
M_s=I-(I-P_s M P_s)^{-1} (I-M),
\eeq
where we denoted $M=\nabla R^\circ(A_*^\circ)$ and where $P_s$ is the orthogonal projector on the first $s$ eigenvectors of $M$. We have the following 
\begin{lemma}\label{lem:eigs} Assume $M$ is diagonalizable with the eigenvalues $\lambda_1$, $\lambda_2$,\ldots arranged in order of decreasing absolute value.
	Suppose none of the first $s$ eigenvalues are 1. Then $M_s$ is diagonalizable with eigenvalues $0, \cdots, 0, \lambda_{s+1}$, $\lambda_{s+2}$ \ldots, where eigenvalue $0$ has multiplicity $s$.
	\end{lemma}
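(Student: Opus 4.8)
The plan is to massage the opaque formula for $M_s$ into a block-triangular matrix by splitting the space correctly, and then read off the spectrum. First I would rewrite the definition: setting $N_s := I - P_s M P_s$, one has $M_s = I - N_s^{-1}(I-M) = N_s^{-1}\bigl(N_s - I + M\bigr) = N_s^{-1}(M - P_s M P_s)$, so the problem reduces to understanding the two factors $N_s^{-1}$ and $M - P_s M P_s$.

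The key observation is that $V_s := \mathrm{span}(v_1,\dots,v_s)$, the span of the first $s$ eigenvectors of $M$, is $M$-invariant, being a sum of eigenspaces. Therefore, with respect to the \emph{orthogonal} decomposition $\mathbb{C}^N = V_s \oplus V_s^{\perp}$ --- the one for which the orthogonal projector $P_s$ is simply $\bigl(\begin{smallmatrix} I & 0 \\ 0 & 0\end{smallmatrix}\bigr)$ --- the matrix $M$ is block upper triangular, $M = \bigl(\begin{smallmatrix} A & B \\ 0 & D\end{smallmatrix}\bigr)$, with $A = M|_{V_s}$ carrying exactly the eigenvalues $\lambda_1,\dots,\lambda_s$. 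A short computation then gives $P_s M P_s = \bigl(\begin{smallmatrix} A & 0 \\ 0 & 0\end{smallmatrix}\bigr)$, hence $N_s = \bigl(\begin{smallmatrix} I-A & 0 \\ 0 & I\end{smallmatrix}\bigr)$ is block diagonal. This is the step that uses the hypothesis: $I-A$ is invertible precisely because $1$ is not among $\lambda_1,\dots,\lambda_s$, so $N_s$ --- and therefore $M_s$ --- is well defined, with $N_s^{-1} = \bigl(\begin{smallmatrix} (I-A)^{-1} & 0 \\ 0 & I\end{smallmatrix}\bigr)$. Multiplying out, $M_s = N_s^{-1}(M - P_s M P_s) = \bigl(\begin{smallmatrix} 0 & (I-A)^{-1}B \\ 0 & D\end{smallmatrix}\bigr)$.

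From this block-triangular form the eigenvalue part is immediate: $\mathrm{spec}(M_s)$ is the union of $\mathrm{spec}(0_{s\times s}) = \{0\}$ (with multiplicity $s$) and $\mathrm{spec}(D)$. To identify $\mathrm{spec}(D)$ I would combine $\det(zI - M) = \det(zI-A)\det(zI-D)$ with $\mathrm{spec}(A) = \{\lambda_1,\dots,\lambda_s\}$ to get $\mathrm{spec}(D) = \{\lambda_{s+1},\lambda_{s+2},\dots\}$. For diagonalizability, I would note that $D$ represents the endomorphism induced by $M$ on the quotient $\mathbb{C}^N/V_s$; this induced map is diagonalizable because the images of $v_{s+1},v_{s+2},\dots$ form an eigenbasis of the quotient (they span it and, since $v_1,\dots,v_N$ are independent, remain independent modulo $V_s$). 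Thus $M_s = \bigl(\begin{smallmatrix} 0 & \star \\ 0 & D\end{smallmatrix}\bigr)$ with both diagonal blocks diagonalizable; if their spectra are disjoint --- i.e. $0 \notin \{\lambda_{s+1},\lambda_{s+2},\dots\}$, which holds whenever $M$ is invertible, the case of interest --- then solving the Sylvester equation $GD = -(I-A)^{-1}B$ and conjugating $M_s$ by $\bigl(\begin{smallmatrix} I & G \\ 0 & I\end{smallmatrix}\bigr)$ block-diagonalizes it, proving $M_s$ diagonalizable.

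The main obstacle is conceptual rather than computational: one must resist working in the eigenbasis of $M$ (which is non-orthogonal, so $P_s$ would not be a coordinate projector there) and instead realize that the orthogonal splitting $\mathbb{C}^N = V_s \oplus V_s^{\perp}$ already triangularizes $M$, because it is $V_s$ --- not $V_s^{\perp}$ --- that is $M$-invariant; after that, everything is a few lines of $2\times2$-block algebra. I would also point out that the diagonalizability claim, as literally stated, quietly requires $0 \notin \{\lambda_{s+1},\lambda_{s+2},\dots\}$: if some dropped eigenvalue vanishes and couples, through the block $B$, to a retained eigenvector, then $M_s$ picks up a genuine Jordan block at $0$ --- the list of eigenvalues is still correct, but $M_s$ is then not diagonalizable.
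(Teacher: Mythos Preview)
Your argument is correct and uses essentially the same idea as the paper: since $V_s$ is $M$-invariant, in any orthonormal basis adapted to $V_s\oplus V_s^\perp$ the matrix $M$ becomes (block) upper-triangular, $P_s$ becomes the coordinate projector, and the formula for $M_s$ collapses to something whose spectrum can be read off. The paper carries this out via full Gram--Schmidt on the eigenbasis of $M$, getting a fully upper-triangular $M_s$ with $0,\dots,0,\lambda_{s+1},\dots$ on the diagonal; you do the coarser $2\times2$-block version and arrive at the same place.

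The real difference is that you go further on diagonalizability. The paper's proof, in the general (non-orthogonal) case, only exhibits an upper-triangular form and hence only pins down the eigenvalues; it does not actually argue that $M_s$ is diagonalizable. You do, via the Sylvester equation, and you correctly identify the hidden hypothesis this requires: $0\notin\{\lambda_{s+1},\lambda_{s+2},\dots\}$. That caveat is genuine---without it $M_s$ can acquire a Jordan block at $0$---and the paper's statement of the lemma is slightly too strong as written, though the application there (to $\nabla R^\circ$ near the fixed point) is unaffected.
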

The elementary proof is given below. Hence, the predicted convergence rate is $\lambda_{s+1}$. For $s=3$ this agrees well with \cref{fig:newton_conv}, but for larger $s$ the convergence in \cref{fig:newton_conv} is slower. We assign this discrepancy to the breakdown of the approximation \eqref{eq:appr-lemma}. We observed that re-evaluating $J_\approx$ after a few Newton iterations brings the convergence rate closer to the predicted value. In particular, with $J_\approx$ evaluated at $A_{(0)}$, convergence rate saturates around $s=54$, but after re-evaluation convergence keeps improving even for higher $s$.
	
\begin{proof}[Proof of Lemma \ref{lem:eigs}] The simple case is when the eigenvectors of $M$ are orthogonal. Then, $M_s$ expressed in the basis of eigenvectors is diagonal with 
	\beq
	0,\cdots,0,\lambda_{s+1},\lambda_{s+2}\ldots
	\label{eq:diagMs}
	\eeq
	on the diagonal, proving the lemma. In the general case, we work in the basis obtained by the Gram-Schmidt process from the eigenvectors of $M$. In this basis, all the involved matrices are upper-triangular with eigenvalues on the diagonal. In particular $M_s$ is upper-triangular with \eqref{eq:diagMs} on the diagonal.
	\end{proof}

\subsubsection{Can we increase the convergence domain of the Newton method?}
\label{sec:attempts}

The Newton method described above has a somewhat larger local domain of convergence than what is given by Eq.~\eqref{eq:dist_init}. One can start the Newton iterations from $A_{(0)}=A^{(n')}$ corresponding to a smaller number $n'$ of initial RG steps. One can also relax the accuracy $\Delta t$ of the initial critical temperature. Our tests show that $\Delta t=10^{-7}$ and $n'=15$ is already enough for the subsequent Newton iterations to converge. This corresponds to the distance $2.4\times 10^{-2}$ between the fixed point tensor and the initial tensor of the Newton iterations, an order of magnitude larger than \eqref{eq:dist_init}.

At the moment we are unable to further increase the Newton method convergence domain. The following simple changes of the algorithm do not help: re-evaluate $J_\approx$ after every Newton step; use larger $s$; apply the inverse Jacobian via the GMRES algorithm rather than via our approximate procedure; using ``damping'' i.e. decreased step size if the full Newton step does not lead to a decrease in the error $\|R^\circ(A)-A\|$.  Having examined this issue closely, we traced it to the lack of differentiability of the RG map in a larger domain, due to the discrete gauge fixing ambiguity. Our discrete gauge fixing procedure, described in \cref{app:GaugeFixing}, consists in choosing a set $I_+$ of tensor elements and making them all positive by discrete sign flips on the individual legs. This stops being differentiable when one of the tensor elements crosses zero. We observed that this does tend to happen when one considers the RG map in a sufficiently large domain. We are hopeful that if this problem is resolved, the Newton method convergence domain could be much increased, all the way to include the critical NN Ising tensor $A_{\rm NN}(1)$. Of course, for a faraway convergence one will have to recompute the inverse Jacobian at every RG step, at least until one gets sufficiently close to the fixed point.

\subsection{Anisotropy: period-2 oscillations}
\label{sec:oscillations}

\begin{figure}
	\centering
	\includegraphics[width=0.5\textwidth]{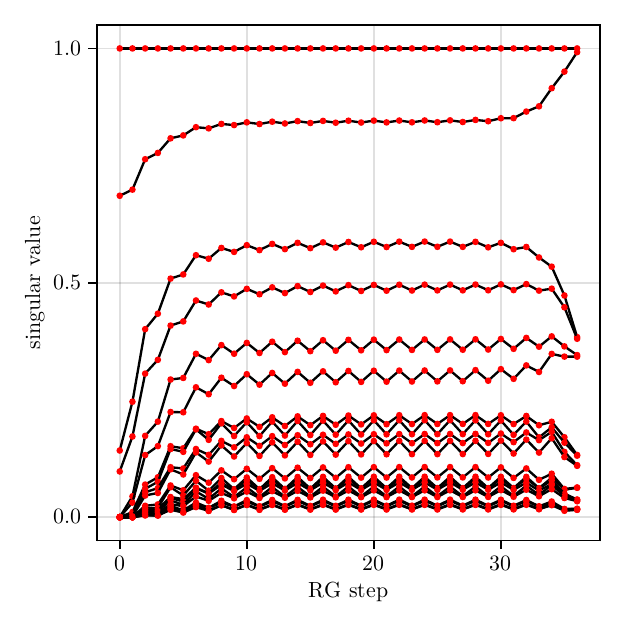}
	\caption{
	  \label{fig:sval_traj_r} Period-2 behavior for the rotating RG map. The figure shows the flow of tensors starting from $A^{(0)}$ corresponding to the Ising model at the reduced temperature  $t=1.0000094857$ and with anisotropy $a=0.8$. As in \cref{fig:sval_traj_nr} we plot the 20 largest singular values along the diagonal normalized by the first one.
}
\end{figure}

 Recall that for the non-rotating RG map the eigenvalues of the Jacobian corresponding to the stress tensor components $T$,$\overline T$ are $1$, and we expect a two-dimensional manifold of fixed points for this map. (A one-dimensional slice of this manifold was studied in \cref{sec:anisotropy}.) On the contrary, for the rotating RG map these eigenvalues are $-1$ and we expect an isolated fixed point embedded in a two-dimensional manifold of period-2 orbits (see \cref{sec:addrot}). We can see this period-2 behavior by studying the flow of the singular values of $A$ across the diagonal as we did in \cref{fig:sval_traj_nr} for the non-rotating RG map. We start with an initial $A$ representing the anisotropic Ising model with $a=0.8$. The flow of the resulting singular values shown in \cref{fig:sval_traj_r} clearly shows period-2 behavior. \cref{fig:sval_traj_r} only shows a single value of $a$, but we observed that the amplitude of the oscillations increases for larger $|a-1|$. These results confirm our intuition about the structure of the RG flow.

\section{Summary and open problems}
\label{sec:conclusions}

In this paper we showed that the fixed point equation of tensor RG can be solved via the Newton method. This needs a judicious modification of the RG map, to include a $\pi/2$ rotation of the lattice. The rationale is to avoid eigenvalue 1 deformations of the fixed point, which are problematic for the Newton method. The corresponding eigenvectors are associated with the stress tensor CFT deformations, which have spins $\ell=\pm2$. The $\theta=\pi/2$ rotation turns their eigenvalues from $1$ into $e^{i\theta\ell}= -1$, eliminating the problem.

Our numerical tests of this general idea used Gilt-TNR \cite{GILT}, one of several known tensor RG algorithms which implement the disentangling procedure necessary to solve the CDL problem. Working at bond dimension $\chi=30$, the Newton method was shown to converge to the fixed point in the Hilbert-Schmidt distance, with accuracy limited only by roundoff errors. To our knowledge, this is the first time that the Newton method was used to find the critical tensor RG fixed point. Working in double precision arithmetic, we achieved the accuracy $\delta \sim 10^{-9}$. This is much better than what was previously achieved with the traditional "shooting" method of recovering the fixed point, where one fine-tunes the relevant temperature perturbation, and removes the irrelevant perturbations by repeated RG steps   
($\delta\sim 10^{-2}$ \cite{Lyu:2021qlw};
$\delta\sim 3\times 10^{-4}$ \cite{PhysRevE.109.034111};
$\delta\sim 5\times 10^{-5}$ in our \cref{fig:crit_convergence}). We emphasize that this improvement is a result of using our Newton method rather than the usual shooting method, not an artifact of Gilt-TNR. We expect our method for finding the fixed point will produce similar improvement for other algorithms.

We would like to conclude by listing several open problems raised by our work:
\begin{enumerate}
	\item

	Can one solve the discrete gauge-fixing issue mentioned in \cref{sec:attempts} and obtain a globally differentiable tensor RG map, for which the Newton method convergence domain could be increased beyond what is achieved here, perhaps to include the critical NN Ising tensor $A_{\rm NN}(1)$?
	
\item It would be interesting to apply the Newton search to locate and study fixed points corresponding to other universality classes, beyond the Ising and 3-state Potts considered in our work, such as:

\begin{itemize}
	\item The tricritical Ising fixed point, initializing the algorithm by a tensor corresponding to the tricritical point of the Blume-Capel model, whose location is known with good accuracy from lattice Monte-Carlo studies \cite{Blume-Capel-MC}. The tricritical fixed point having two relevant symmetric deformations, its study would be more challenging using the shooting method, but should pose no problem to the Newton method. 
	
	\item The Lee-Yang fixed point corresponding to the Ising model in a critical imaginary magnetic field \cite{Fisher:1978pf}. This model is not amenable to Monte-Carlo studies due to the sign problem. See \cite{basu2022yanglee} for a previous tensor RG study using the shooting method.
	
	\item Fixed points which do not correspond to exactly solved CFTs, unlike all fixed points mentioned so far. One example is the coupled 3-state Potts model \cite{Dotsenko:1998gyp}.
\end{itemize}

\item In this paper, we relied on Gilt-TNR, with static $Q$-choice (\cref{app:rigid}) to ensure differentiability. Can the Newton method be set up for other tensor RG algorithms with disentangling, e.g.~for TNR \cite{Evenbly-Vidal}, combining them with $\pi/2$ rotation? Our logic about the rotation getting rid of eigenvalues $1$ should be robust. But is TNR  differentiable, or even continuous, in spite of the iterative optimization of disentanglers which may be trapped in a local minimum (\cref{note:TNRdanger})?

\item 
In this paper we argued on general grounds and provided numerical evidence that the RG map Jacobian at the fixed point has two kinds of eigenvalues. Eigenvalues associated with quasiprimary CFT operators are universal and predictable from CFT, while eigenvalues of derivative operators are not universal and their values cannot be predicted. Our \cref{tab:fp30} contains several eigenvalues which do not agree with any CFT quasiprimary, and we associated their eigenvectors with derivative operators. Is there a test via an explicit computation involving the eigenvectors and the fixed point tensor which can confirm this assignment?

\item  One somewhat technical issue concerns finding a fully robust gauge-fixing procedure for symmetry-breaking perturbations of the fixed point, see \cref{sec:comment_gauge_fix}.

\item Why is gauge fixing with orthogonal gauge transformations enough for a tensor RG map to have a fixed point? This has a satisfactory explanation for maps preserving reflection symmetry, \cref{app:why-orth}, but it is puzzling why this works for many maps used in the literature and in our work, which do not preserve it. In \cref{sec:without-reflection-symmetry} we tried to explain this conceptual puzzle via an argument appealing to a "phantom" RG map which has not been provided. Can one either exhibit this map, or find another argument which does not postulate its existence?

\item Are there practical tensor RG maps which manifestly preserve rotations while solving the CDL problem? One such map of unclear practicality is mentioned in \cref{note:isotropic} (it acts on a set of three tensors and has a larger than usual associated set of gauge symmetries). If that map, or any other rotation-invariant map, were available, one could completely circumvent the eigenvalues 1 problem, implementing the Newton method directly within the subspace of rotation-invariant tensors, as discussed in \cref{sec:rotation-symmetry-breaking}. 

\item
(See also \cref{rem:chi-fp} and the end of Section \ref{sec:Isofp}.) In this paper we studied the fixed point tensors for a fixed value of $\chi=30$ (and sometimes also for $\chi=10,20$). This was enough to defend our main points. However, as discussed in \cref{sec:parameter-choices-and-scaling}, higher bond dimensions up to $\chi=120$ used in \cite{GILT} may well be accessible, provided that the Jacobian is evaluated via the automatic differentiation instead of finite differences used here. One could then get evidence for or against the hypothesis that the fixed point tensors converge as $\chi\to\infty$ to a limiting tensor (which is then necessarily Hilbert-Schmidt). This hypothesis may or may not be true for the Gilt-TNR map, but hopefully there exists a tensor RG map for which it is true.

\item 
Finally, our work is also an important stepping stone towards proving the existence of the critical tensor RG fixed points at the mathematical level of rigor, the program started in \cite{paper1,paper2}. For this dream to be realized, the tensor RG fixed point equation has to be reformulated as a fixed point equation for a map which is a contraction in a neighborhood in an appropriate Banach space of infinite-dimensional tensors. In this paper we have found a numerical realization of a baby version of this program in a finite-dimensional space of truncated tensors.
\end{enumerate}

\begin{acknowledgments}

	We thank Glen Evenbly for communications and for his tensor network software and tutorials \cite{Evenbly-TNR-website,Evenbly-tensor-trace}. We especially thank Cl\'ement Delcamp for numerous useful communications and insightful discussions concerning tensor RG algorithms, and for passing to us minimal working versions of the Gilt-TNR algorithm from \cite{GILT,Gilt-TNR-code} and of Markus Hauru's TNR code from \cite{Hauru:2015abi} (the latter code was not used in this work but it was useful in preliminary stages). We thank Naoki Kawashima and Xinliang Lyu for comments.

This research was supported in part by the Simons Foundation grant 733758 (Simons Bootstrap Collaboration). This research was supported in part by grant no. NSF PHY-2309135 to the Kavli Institute for Theoretical Physics (KITP).

\end{acknowledgments}

\appendix

\section{Tensor conventions}
\label{app:conventions}

\subsection{Tensor RG}
\label{app:TRGconventions}
Let $A$ be a four-legged tensor of bond dimension $\chi$. Consider a periodic tensor network contraction made out of $A$ arranged in an $N\times M$ grid, denoted by
\begin{equation}
	Z(A,N\times M)\,,
	\label{eq:ZAN}
\end{equation}
and called the partition function of the tensor network. We assume that it represents the partition function of a lattice model of spins with periodic boundary conditions on the $N\times M$ lattice with short-range interactions, such as Ising or 3-state Potts.\footnote{Sometimes when the lattice model is translated to a tensor network representation, the spin lattice is rotated by $\pi/4$ with respect to the tensor network, as in \cref{app:init}. The lattice model then has somewhat unusual periodic boundary conditions. This microscopic detail does not play a role in our paper.}

We normalize $A$ to have $\|A\|=1$ where $\|A\|$ is the Hilbert-Schmidt norm.

Tensor RG maps considered in this paper fall into two categories: 1) non-rotating and 2) involving rotating the network by $\pi/2$ (``rotating''). Non-rotating tensor RG maps an $N\times M$ tensor network \eqref{eq:ZAN} made of $A$ to an $N/b\times M/b$ tensor network made of another tensor $A'=\mathcal{R}(A)$ so that the partition function remains invariant:
\begin{equation}
	Z(A,N\times M) = Z(A', N/b\times M/b)\,\qquad(\text{non-rotating}).
	\label{eq:ZAN1}
\end{equation}
On the other hand, for tensor RG maps with a $\pi/2$ rotation, invariance of the partition function is written as:
\begin{equation}
	Z(A,N\times M) = Z(A', M/b\times N/b)\,\qquad(\text{rotating}).
	\label{eq:ZAN1mod}
\end{equation}
There are of course many different tensor RG maps, differing in how $A'$ is obtained from $A$. Some concrete maps were discussed in the main text. Here we keep things general.

In general $A'$ is not normalized. Multiplying the tensor of which the tensor network is made by a constant factor $C$ is a relatively minor change, since such a factor can always be factored out from the partition function. It changes the partition function by $C^N$ where $N$ is the number of tensors in the network. However this does not change the phase of the microscopic model that the network represents. To discuss phases and fixed points, we have to normalize tensors somehow. We define the normalized RG map by
$
	R(A)=A'/\|A'\|
$,
so that the RG flow happens in the space of unit-norm tensors. We will most often work with $R(A)$ which is the most important part of the non-normalized RG map $\mathcal{R}$. We will sometimes abuse the terminology and refer to $R$ as the RG map, dropping ``normalized''. An RG fixed point for the map $R$ is a tensor $A_*$ satisfying
$
	R(A_*)=A_*
$.

\subsection{Tensor network symmetries}
\label{app:symmetries}

A symmetry transformation of a tensor network is a faithful linear action of a group $G$ on the space of tensors $A$:
\beq
T_g:A\mapsto A'\qquad(g\in G)\,,
\eeq
which preserves the partition function of a tensor network built of $A$:
\beq
Z(A,N\times M)=Z(T_g A,N\times M)\,.
\eeq
We say that the tensor network built out of a particular tensor $A_0$ respects this symmetry, if $A_0$ is invariant under this action: $T_g A_0=A_0$ for any $g\in G$.

We will consider on-site (global) symmetries and spatial symmetries. An on-site symmetry transformation multiplies tensor $A$ by matrices on each leg. E.g.~an on-site $\bZ_2$ acts as
\beq
\label{eq:figZ2}
\myinclude{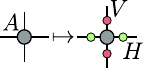}\,
\eeq
for some matrices $H,V$ such that $H^2=V^2=I$. 

A spatial symmetry permutes the legs of a tensor and simultaneously multiplies by matrices on each leg. Spatial symmetries mentioned in this paper, on the square lattice, involve a combination of axis reflections and rotations, forming the group $D_4$ or one of its subgroups $\bZ_2$, $\bZ_2\times\bZ_2$, $\bZ_4$. E.g. the horizontal reflection acts as:
\beq
\myinclude{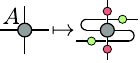}\,,
\label{eq:Z2flip}
\eeq
while rotation by $\pi/2$ acts as
\beq
\myinclude{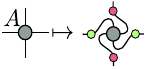}\,
\eeq
The matrices multiplying the legs must be chosen to get a representation of the spatial symmetry group. Note that they are not in general equal to the identity.

\section{Initial tensors for tensor RG evolution}
\label{app:init}

\begin{figure}
	\centering
	\includegraphics[scale=0.8]{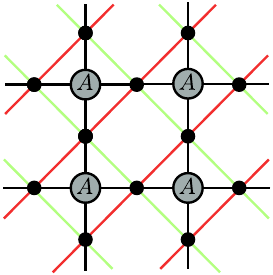}
        \caption{The Ising spins (black dots) are located at the midpoints of the bonds in the tensor network. The coupling constants for the Ising model are $J_y$ for the red bonds and $J_x$ for the green bonds. }
\label{fig:ising-tensor1}
\end{figure}

Here we describe the initial tensor for the RG evolution used in our studies, both in the isotropic and anisotropic case. There is more than one way to represent the nearest-neighbor (NN) Ising partition function as the contraction of a tensor network. The method we use in this paper is the same as used in \cite{TEFR}. In this method the lattice for the Ising model is rotated by $\pi/4$ with respect to the tensor network. In \cref{fig:ising-tensor1} the sites for the Ising spins are represented by black dots. The coupling constants on the bonds between these spins are equal to $J_y$ for the red bonds and $J_x$ for the green bonds. Note that the Ising spins are located at the midpoints of the bonds in the tensor network. Each leg in the tensor network only takes on the values $-1,+1$, corresponding to the values of the Ising spin. The contraction of the tensor network is then equal\footnote{Because of the lattice rotation by $\pi/4$, the equality requires non-standard periodic boundary conditions on the Ising side, assuming the usual periodic boundary conditions on the tensor network side.} to the NN Ising partition function if we define the components of the tensor $A$ by
\beq
\myinclude{fig-Eq2_2_and_EqB_1.pdf} \,  =
\exp((J_y \sigma_1 \sigma_2 + J_x \sigma_2 \sigma_3 + J_y \sigma_3 \sigma_4
+ J_x \sigma_4 \sigma_1)/T)\,.
\label{app:ising-tensor2}
\eeq
The final step in the definition of the initial $A$ is to apply a gauge transformation that transforms the states $\ket{+}$,$\ket{-}$ to the $\mathbb{Z}_2$ even and odd states given by
\beq
\ket{0} = \frac{1}{\sqrt{2}} (\ket{+} + \ket{-}), \quad 
\ket{1} = \frac{1}{\sqrt{2}} (\ket{+} - \ket{-})
\eeq

We define $\Jratio = J_y/J_x$.
For the isotropic case we set $J_x = J_y = 1$ and then define the reduced
temperature by $t = T/T_c$ so that $t = 1$ is the critical point. 
For the anisotropic case we set $J_y = \Jratio, J_x = 1$, and again let 
$t = T/T_c$ where now the critical temperature depends on $\Jratio$.\footnote{The critical temperature is given implicitly by
  $\sinh(2 J_y / T_c) \sinh(2 J_x / T_c) =1$.}
As in the isotropic case, the critical point is given by $t=1$. Only the parameters $t$ and $\Jratio$ are used in our discussions of the numerical results. In the numerical results the critical point is not exactly at $t=1$ since the bond dimension is finite.

Since the Ising lattice is rotated with respect to the tensor network, in the anisotropic case the initial $A$ is not invariant under horizontal or vertical reflections. However, it is invariant under the composition of these two reflections.

The initial tensor, which we normalize to have unit Hilbert-Schmidt norm, will be denoted $A_{\rm NN}(t,a)$, or $A_{\rm NN}(t)$ if $a=1$.

Another method for representing the Ising partition function as the contraction of a tensor network can be found in \cite{HOTRG,GILT}. In this approach the
Ising lattice is not rotated with respect to the tensor network.

\section{More details about Gilt-TNR}
\label{app:moreGilt}

We reviewed the general structure of the non-rotating Gilt-TNR algorithm in \cref{sec:GiltReview} and here we will provide some more details.
\begin{table}
	\centering
	\begin{tabular}{l l}
		\toprule
		$\chi$ & tensor bond dimension\\
		$\epsilon_{\rm gilt}$& soft truncation parameter in \eqref{eq:epsglit}\\
		$\epsilon_{\rm conv}$ & convergence parameter, see \eqref{eq:epsconv},\eqref{eq:epsconv1}\\
		\bottomrule
	\end{tabular}
	\caption{\label{tab:Giltparams}Parameters of Gilt-TNR. Results in this paper used $\epsilon_{\rm conv}=10^{-2}$.}
\end{table}

The basic procedure for choosing $Q=Q_1$ (see \cite[Sec.~IV.B]{GILT} and  \cite[App.~A]{Lyu:2021qlw}) consists of four steps:
\begin{itemize}
	\item Consider the singular value decomposition of an environment shown in \cref{fig:appD}(a), built out of four $A$ tensors contracted over 3 out of 4 bonds around the plaquette, omitting the bond into which $Q_1$ will be inserted. \item
	Form vector $t$ by contracting two ingoing legs of $U$, \cref{fig:appD}(b). 
	\item
	Perform soft truncation of vector $t$, defining vector $t'$ by
\beq
t_i'=t_i \frac {s^2_i}{s_i^2+\epsilon_{\rm gilt}^2},
\label{eq:epsglit}
\eeq
where $s_i$ are the singular values in \cref{fig:appD}(a), and $\epsilon_{\rm gilt}$ is one of the three parameters of the Gilt-TNR algorithm (\cref{tab:Giltparams}). Basically, this equation sets to near zero the components of $t'_i$ corresponding to $s_i\ll \epsilon_{\rm gilt}$. 
\item
Finally define matrix $Q$ from tensor $U^\dagger$ and the vector $t'$ as in \cref{fig:appD}(c).  
\end{itemize}

The rationale why this construction is a good way to filter short-range correlations from the plaquette was given in \cite{GILT}.

\begin{figure}
	\centering
	  \includegraphics{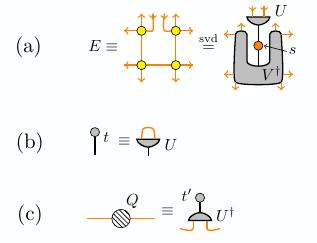}\\
	\caption{\label{fig:appD}Defining the matrix $Q\equiv Q_1$: (a) Environment and singular value decomposition;
		(b) Vector $t$; (c) Matrix $Q$. Figures from \cite{Lyu:2021qlw}.}
	\end{figure}

In practice however the above description is just the first step in an iterative procedure of determining $Q_1$, $Q_2$, $Q_3$, $Q_4$. We need two versions of the algorithm, which differ slightly but essentially in how this is implemented:
\begin{itemize} 
\item	
``Dynamic $Q$-choice''. That's the one implemented
in the original code \cite{Gilt-TNR-code}.
\item 
``Static $Q$-choice''. This is our modification, essential for ensuring differentiability.
\end{itemize}

\subsection{Dynamic $Q$-choice} 
\label{app:flexible}
This is only outlined in \cite[Sec.~IV.B]{GILT}. Here we will give a more detailed description, which corresponds to what is actually done in \cite{Gilt-TNR-code}. The dynamic algorithm depends on a parameter $\epsilon_{\rm conv}$, whose default value is $10^{-2}$; it appears in \cref{eq:epsconv} below. 

We start by determining $Q=Q_1$. To do that, one carries out steps 1-9 for $k=0,1,2,\ldots,k_{\max}$. Here $k_{\max}$ is determined dynamically during the run of the algorithm, and it depends on $\eps_{\rm conv}$ as explained below. Smaller $\eps_{\rm conv}$ will lead to larger $k_{\max}$.
\begin{enumerate}
    \item Start with the environment $M^{(k)}$ and the matrix $Q^{(k)}$. At the zero step $M^{(0)}=EE^{\dagger}$ where $E$ is in \cref{fig:appD}(a) and $Q^{(0)}=1$. By forming this environment out of $E$, one avoids having to compute matrix $V$ in \cref{fig:appD}(a), which is not needed in the algorithm.
    
    \item Do approximate SVD of $Q^{(k)}\approx (Q^{(k)})_L S_{Q^{(k)}} (Q^{(k)})_R$. Here, all the singular values smaller than $10^{-3}\epsilon_{\rm gilt}$ are thrown away. The factor $10^{-3}$ is hard-coded in \cite{Gilt-TNR-code}. 
    	
    \item Define $M^{(k+1)}$ by absorbing $(Q^{(k)})_L \sqrt{S_{Q^{(k)}}}$ and $\sqrt{S_{Q^{(k)}}} (Q^{(k)})_R$ into $M^{(k)}$.
    \item By the eigenvalue decomposition of $M^{(k+1)}$ determine $U^{(k+1)}$ and $S^{(k+1)}$.
    \item Form the vector $t$.
    \item Form the vector $t'$.
    \item Define $Q^{(k+1)}$.
    \item Do approximate SVD of $Q^{(k+1)} \approx Q^{(k+1)}_L S_{Q^{(k+1)}} Q^{(k+1)}_R$. Again, all the singular values smaller than $10^{-3}\epsilon_{\rm gilt}$ are thrown away. 
    \item Check if 
    \beq
    \max_i |(S_{Q^{(k+1)}})_i-1| < \epsilon_{\rm conv}.\label{eq:epsconv}
    \eeq
If this condition does not hold then go to step 1 with  $k \mapsto k+1$. Otherwise, we set $k_{\max}=k$ {(this is just to store the total number of carried out iterations)} and go to the next step. The rationale for condition \eqref{eq:epsconv} is that we would like \eqref{eq:Qproduct} below to converge.
    \item Define $Q$ via $Q^{(0)},\ldots,Q^{(k_{\max})}$ by combining all absorbed pieces in the appropriate order:
 \begin{equation}
 Q= \prod_{k=0}^{k_{\max}} \left(Q^{(k)}_L \sqrt{S_{Q^{(k)}}}\right) \prod_{k={k_{\max}}}^{0} \left(\sqrt{S_{Q^{(k)}}} Q^{(k)}_R\right)\,.
 \label{eq:Qproduct}
 \end{equation}
    \item Define $B_1,B_2$ tensors by splitting $Q$ (using approximate SVD, throwing away  eigenvalues $ < 10^{-3}\epsilon_{\rm gilt}$ again) and absorbing its halves into $A$.
\end{enumerate}

This ends the procedure for $Q_1$. One then proceeds to do the same for matrices $Q_3$, $Q_2$, $Q_4$, in this order.\footnote{\label{note:horvert1}That $Q_1$ and $Q_3$ are chosen before $Q_2$ and $Q_4$ is the likely reason for why the horizontal reflection symmetry is broken weaker than the vertical one (\cref{note:horvert}).} Of course, when we proceed to $Q_3$, we use tensors $B_1$ and $B_2$ to construct the relevant environments. Once $Q_3$ is found, we update $B_1$ and $B_2$, and proceed to $Q_2$, etc.
	
Once the $Q_a$ ($a=1,\ldots,4$) are constructed, we say that we ``completed a lap around the plaquette''. We then check the spectra of $Q_a=Q_{aL} S_{Q_a} Q_{aR}$ (again truncated SVD) and ask if 
	\beq
	\max_i |(S_{Q_a})_i-1| < \epsilon_{\rm conv}\,. \label{eq:epsconv1}
	\eeq If this condition does not hold for one or more bonds around the plaquette, we are supposed to run the above cycle again for those bonds, redefining yet again matrices $Q_a$ and tensors $B_1$, $B_2$. This will be the second, third, etc lap. In practice one usually needs two laps.
	
	Let us give a couple of examples of how this works in practice. Both examples are for $\chi=30$, $\epsilon_{\rm gilt}=6\times 10^{-6}$, $\eps_{\rm conv}=10^{-2}$. 
	\begin{enumerate}
		\item 
		Applied to $A^{(n_*)}$ in \cref{tab:fp30}, the dynamic algorithm does two laps, with the following number $k_{\max}$ of $k$ iterations for $Q_1,Q_3,Q_2,Q_4$:
	\begin{align}
		\text{Lap 1:}\  (19,29,10,28);\qquad
		 \text{Lap 2:}\  (1,1,1,1)\,.
		 \label{eq:rigid}
		\end{align}
			\item 
		Applied to $A_{(0)}$ in \cref{sec:rot-iso}, the dynamic algorithm does two laps, with the following number $k_{\max}$ of $k$ iterations for $Q_1,Q_3,Q_2,Q_4$:
			\begin{align}
			\text{Lap 1:}\  (22,8,24,20);\qquad
			\text{Lap 2:}\  (1,1,1,1)\,.
			\label{eq:rigid_for_newton}
		\end{align}
		\end{enumerate}

\subsection{Static $Q$-choice}
\label{app:rigid}
	In the static algorithm, we do not need the $\eps_{\rm conv}$ parameter. Instead, we specify the ``static instruction'', which includes 1) the number of laps and 2) $k_{\max}$ values for $Q_1,Q_3,Q_2,Q_4$ for each lap. The checks of conditions \eqref{eq:epsconv} and \eqref{eq:epsconv1} of the dynamic algorithm are omitted. The rest of the static algorithm is the same as the dynamic algorithm. This modification ensures that the static algorithm is continuous and differentiable.
	
 The static algorithm is used when we need the RG map to be differentiable, as when the Jacobian is involved. To select a static instruction, we typically run the dynamic algorithm once for a tensor around which we intend to compute the Jacobian. 
	
\subsection{Parameter choices and scaling}
\label{sec:parameter-choices-and-scaling}

In our paper the bond dimension $\chi$ was varied in the range 10-30, and the parameter $\epsilon_{\rm gilt}$ was varied from $10^{-4}$ for $\chi=10$ to $6\times 10^{-6}$ for $\chi=30$.

We used the dynamic $Q$-choice in calculations which did not need the Jacobian, such as the determinations of the critical temperature via bisection, and Figs.~\ref{fig:sval_traj_nr}, \ref{fig:crit_convergence}, \ref{fig:sval_traj_r}. In all dynamic $Q$-choice calculations the parameter $\epsilon_{\rm conv}$ was fixed to $10^{-2}$ (the default value of \cite{Gilt-TNR-code}).

We used the static $Q$-choice when computing the Jacobian eigenvalues, and when using the Newton method which needs the map to be differentiable. When computing the Jacobian eigenvalues at $A^{(n_*)}$ from \cref{tab:fp30}, we used the static instruction \eqref{eq:rigid} obtained by running once the dynamic algorithm at that tensor. In \cref{sec:rot-iso}, we used the dynamic $Q$-choice for the RG iterations from $A^{(0)}$ to $A^{(n')}$. Then, we switched to the static $Q$-choice with the static instruction \eqref{eq:rigid_for_newton} from the dynamic algorithm at $A_{(0)}=A^{(n')}$. We used this static instruction for all subsequent calculations in that section: 
\begin{itemize}
	\item 
	computing the eigenvalue decomposition of $\nabla R^\circ(A_{(0)})$ which enters $J_\approx$;
	\item 
	running the Newton method and producing \cref{fig:newton_conv};
	\item 
	computing the Jacobian eigenvalues $\nabla R^{\circ}$ at $A_{(m_*)}$.
	\end{itemize}

In this paper we limited ourselves to $\chi\le 30$ as this was enough to demonstrate our main points regarding the applicability of the Newton method. Ref.~\cite{Lyu:2021qlw} also used $\chi=30$. In the future, it would be interesting to carry out our analysis for larger $\chi$. (Ref.~\cite{GILT} went up to $\chi=120$.) In the rest of this section, we will discuss how various steps of our algorithms scale with $\chi$ and which other difficulties may arise when increasing $\chi$.

The computational cost of one Gilt-TNR algorithm step scales as $O(\chi^6)$. Evaluating the first $s$ eigenvalues and eigenvectors of $\nabla R(A)$ at any point $A$ using the Arnoldi method scales as $O(s \chi^6)$.\footnote{The Arnoldi method requires $O(s)$ evaluations of directional derivatives $\nabla R^\circ.v$, each of which takes $O(\chi^6)$ with our finite difference approximation method.} This operation is required only once when setting up our Newton method with approximate Jacobian, at $A=A_{(0)}$. For $\chi=30$ and $s=9$, it takes $14$ minutes on a laptop. One Newton method step costs $O(s^2 \chi^4+\chi^6)$, where $\chi^6$ comes from the Gilt-TNR step and $s^2\chi^4$ from applying $(\mathds{1}-P_s \nabla R(A_{(0)})P_s)^{-1}$ to $A-R(A)$. For example, for $\chi=30$ and $s=9$, a laptop can perform 30 Newton method steps from $A_{(0)}$, reaching the approximate fixed point, in 4 minutes (yellow curve in \cref{fig:newton_conv}). 

The most time-consuming part of our calculations was tuning the critical temperature, which remains necessary as the Newton algorithm may not converge to the critical point from a remote starting tensor. Finding the critical temperature using the bisection method with $\Delta t = 10^{-10}$ takes around $80$ minutes on a laptop. 

A potential issue for going to larger $\chi$ has to do with the error in evaluating the Jacobian via finite differences. As discussed in \cref{app:GiltDiff}, this error cannot be made arbitrary small due to roundoff errors. For $\chi=30$, the error for an optimal step size $h$ was quite small (see \cref{fig:diff_test_results}), at least for derivatives in large eigenvalue directions.  However, we observed that the optimal error attainable for the finite difference method increases with $\chi$. Extrapolating from $\chi=10,20,30$, we suspect that for e.g.~$\chi=60$ it may reach values which will make the finite difference method unusable. Automatic differentiation techniques \cite{autodiff} famously avoid roundoff errors inherent for the finite difference method. We expect that a rewrite of our code using automatic differentiation would allow to go to higher $\chi$.

\subsection{Gilt-HOTRG \cite{Lyu:2021qlw}}
	\label{app:Tokyo}
	In this section we will describe the algorithm used in Ref.~\cite{Lyu:2021qlw,TokyoCode}, which somewhat differs from Gilt-TNR used in \cite{GILT} and our work. We will refer to it as Gilt-HOTRG. Having this description is useful to understand and appreciate some details of the results of Ref.~\cite{Lyu:2021qlw}, such as for example the fact that their linearized RG map coincides with the lattice dilatation operator and hence its eigenvalues agree with (exponentiated) CFT scaling dimensions even for descendants \cite{paper-DSO}. 
	 
	As the name suggests, Gilt-HOTRG \cite{Lyu:2021qlw} combines Gilt with HOTRG \cite{HOTRG}, i.e.~it uses Gilt to filter local correlations, and HOTRG for coarse-graining, while Gilt-TNR uses TRG for coarse-graining. The main equation which defines their RG transformed tensor in terms of the original tensor is (figure from \cite{Lyu:2021qlw}, Eq.~(31)):
    \beq
    \myinclude{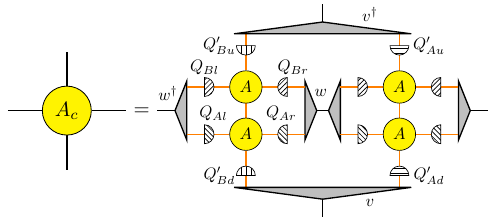}\,.  \label{eq:Tokyo1}
    \eeq
   Let us explain this equation further. As a first step, one chooses Gilt matrices $Q_B$ and $Q_A$, applying the Gilt procedure to horizontal links of a subset of plaquettes (figure from \cite{Lyu:2021qlw}, Fig.~11)):
    \beq
   \myinclude[width=0.9\textwidth]{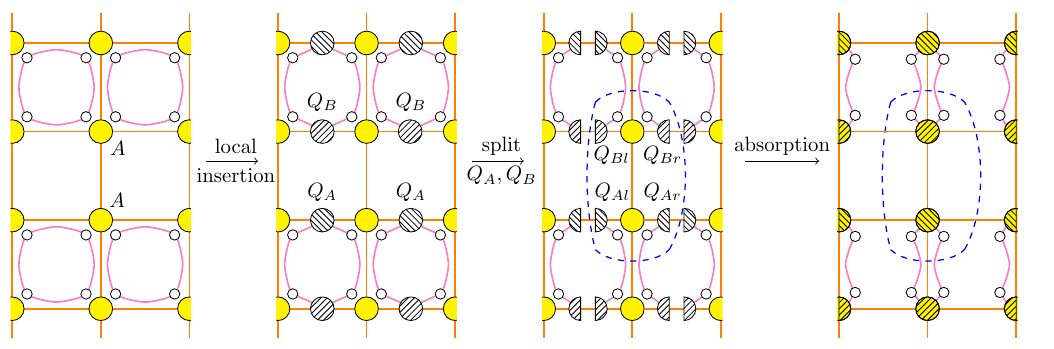}\,. \label{eq:Tokyo2}
   \eeq

   As one can see in \eqref{eq:Tokyo2}, matrices $Q_B$ and $Q_A$ are then SVD-decomposed, and their halves are absorbed into $A$ (this step breaks reflection symmetry since in general $Q_{Br}\ne (Q_{Bl})^t$, $Q_{Ar}\ne (Q_{Al})^t$). One then performs HOTRG coarse-graining in the vertical direction, using isometries $w,w^\dagger$ in \eqref{eq:Tokyo1} (see also \cite{Lyu:2021qlw}, Eq.~(A6)).
	 
	 As a second step, the whole procedure is repeated in the horizontal direction. The final result is an RG map which reduces the lattice size by factor 2 in both directions.
	 
	 Ref.~\cite{Lyu:2021qlw} do not mention if they use iterative optimization of Gilt matrices.
	 
    Gilt-HOTRG is a non-rotating RG map, so it has eigenvalues 1 around the fixed point. Like Gilt-TNR, it can be accompanied by a rotation, turning eigenvalues 1 to $-1$ and making the Newton method applicable.
	 
\section{Gauge fixing}
\label{app:GaugeFixing}

As mentioned in \cref{sec:Isofp}, to see that the tensors $A^{(n)}$ converge to a fixed point, we need to apply after each Gilt-TNR step an appropriately chosen gauge transformation. Tensors related by a gauge transformation are physically equivalent as they yield the same partition function. We need to choose a unique tensor from each gauge equivalence class to discuss the fixed point tensor. This can be done by employing a gauge fixing procedure.

A general gauge transformation preserving the partition function would consist in multiplying the tensor on horizontal legs with matrices $G^{-1}_h$ and $G_h$ where $G_h$ is an invertible $\chi\times\chi$ matrix, and similarly on the vertical legs with $G^{-1}_v$ and $G_v$. In this work, we used a restricted class of gauge transformation where $G_v, G_h$ are orthogonal matrices.

Why are orthogonal gauge transformations enough? This may be expected by looking at the trajectories of singular values along the diagonal, which show plateaux (\cref{fig:sval_traj_nr}). Indeed, singular values are invariant under orthogonal gauge transformations, see \cref{fig:singval}, while they would not be invariant under more general gauge transformations with invertible matrices. For further rationale see \cref{app:why-orth}.

For orthogonal gauge transformations we have $G^{-1}_h=G_h^T$, $G^{-1}_v=G_v^T$. Graphically such a gauge transformation acts on a four-legged tensor $A$ of bond dimension $\chi$ as follows:
\begin{equation}\label{eq:GaugeFixing1}
	\myinclude[scale=1]{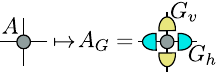},
\end{equation}
where $G_v, G_h \in O(\chi)$.

"Gauge fixing" means that we will apply the gauge transformation \eqref{eq:GaugeFixing1} with some appropriately chosen $G_v, G_h$. Our gauge fixing procedure involves continuous gauge fixing followed by discrete gauge fixing. Continuous gauge fixing is done as follows. Consider the following two contractions of two copies of $A$, called environments:
\begin{equation}\label{eq:GaugeFixing2}
	\myinclude[scale=1]{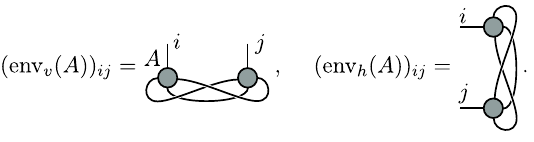}
\end{equation}
The contractions in \eqref{eq:GaugeFixing2} are chosen in such a way that both ${\rm env}_v(A)$ and ${\rm env}_h(A)$ are symmetric and can be diagonalized by orthogonal matrices. We choose $G_v$ and $G_h$ that diagonalize these environments. Then we perform the gauge transformation \eqref{eq:GaugeFixing1}. The tensor $A_G$ is the result of continuous gauge fixing.

There is some ambiguity in the choice of $G_v$ and $G_h$. If the spectra of $\mathrm{env}_v(A)$ and $\mathrm{env}_h(A)$ are nondegenerate, as is generically true, then this ambiguity reduces to multiplying $G_v\to G_v S_v$ and $G_h\to G_h S_h$ where $S_v$ and $S_h$ are diagonal matrices with $\pm 1$ on the diagonal.
In terms of $A_G$, this amounts to applying an additional gauge transformation:
\begin{equation}\label{eq:GaugeFixing3}
	\myinclude[scale=1]{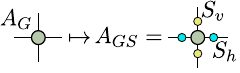},
\end{equation}
Discrete gauge fixing amounts to choosing the matrices $S_v$ and $S_h$ appropriately. The tensor $A_{GS}$ is the final result of our gauge fixing procedure.

Note that there are some replacements of $S_v,S_h$ which leave $A_{GS}$ invariant:
\begin{itemize}
	\item
	      $S_v \mapsto -S_v$;
	\item
	      $S_h \mapsto -S_h$;
	\item
	      the simultaneous replacement $S_v, S_h \mapsto V S_v, H S_h$ provided that, as is the case for us, the tensor $A_G$ enjoys a $\mathbb{Z}_2$ symmetry
	      \begin{equation}\label{eq:GaugeFixing35}
		      \myinclude[scale=1]{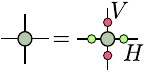}
	      \end{equation}
	      with $H,V$ diagonal matrices with $\pm1$ on the diagonal (see \cref{eq:figZ2}).
\end{itemize}

We will perform discrete gauge fixing by imposing the following gauge fixing condition on the tensor $A_{GS}$:
\begin{equation}\label{eq:GaugeFixing5}
	(A_{GS})_{i j k l} > 0, \qquad \forall (i,j,k,l) \in I_+.
\end{equation}
for an appropriately chosen set of tensor components $I_+ \subset \{1,\ldots,\chi\}^4 $. In words, this says that after the discrete gauge transformation, a certain subset of (nonzero) tensor components should be positive. Equivalently, \cref{eq:GaugeFixing5} can be written as
\begin{equation}\label{eq:GaugeFixing4}
	(S_v)_i(S_h)_j(S_v)_k(S_h)_l=\mathrm{sign}((A_G)_{i j k l}), \qquad \forall (i,j,k,l) \in I_+,
\end{equation}
where we introduced a shorthand notation $(S_v)_i$ for $(S_v)_{ii}$.

The set $I_+$ should be not too large (because otherwise \eqref{eq:GaugeFixing4} will not have a solution) and not too small (because otherwise \eqref{eq:GaugeFixing4} will have multiple solutions and the gauge fixing will not be complete). In other words, $I_+$ should be such that $S_v, S_h$ satisfying \eqref{eq:GaugeFixing4} are unique (up to the above-mentioned replacements which do not change $A_{GS}$).

How do we choose a suitable $I_+$ and solve \eqref{eq:GaugeFixing4}? Let $I_{\neq 0}$ be the set of all components such that $(A_G)_{ijkl} \neq 0$ for all $(i,j,k,l) \in I_{\neq 0}$. Consider \emph{all} equations of the form
\begin{equation}\label{eq:GaugeFixing6}
	(S_v)_i(S_h)_j(S_v)_k(S_h)_l=\mathrm{sign}((A_G)_{i j k l}),
\end{equation}
where $(i,j,k,l) \in I_{\neq0}$. Let us parametrize
\begin{equation}
	(S_v)_i=(-1)^{v_i}, \qquad (S_h)_i=(-1)^{h_i}, \qquad \mathrm{sign}((A_G)_{i j k l}) = (-1)^{a_{i j k l}},
\end{equation}
where $h_i, v_i, a_{ijkl} \in \{0,1\}$.
Then we rewrite equations \eqref{eq:GaugeFixing6} as linear equations in modular arithmetic:
\begin{equation}\label{eq:GaugeFixing7}
	v_i+h_j+v_k+h_l=a_{i j k l} \mod 2.
\end{equation}
Note that the number of these equations is much higher than the number of unknowns, thus we cannot hope to solve all of them (in other words, we cannot hope to find a discrete gauge transformation that makes \emph{all} nonzero elements of $A_{GS}$ positive). Instead, we will choose a specific subset $I_+\subset I_{\ne 0}$ and solve the corresponding subset of equations. To choose $I_+$, we first order equations \eqref{eq:GaugeFixing7} in the order of decreasing $|(A_G)_{ijkl}|$. Then, we form the matrix $M$ so that (ordered) \eqref{eq:GaugeFixing7} can be written as
\begin{equation}\label{eq:GaugeFixing8}
	M \vec x = \vec a,
\end{equation}
where $\vec x$ is the vector of $2\chi$ unknowns which combines $h_i$'s and $v_i$'s. Next, we perform Gaussian elimination on $M$ to identify a maximal independent set of rows, starting with row 1 and then trying subsequent rows one by one and adding them to the set if they are linearly independent from the previously added rows, until no more rows can be added. The resulting set of linearly independent rows is what gives us the set $I_+$. We find that the size of $I_+$ is, as expected, $2\chi-3$, corresponding to $2\chi$ unknowns minus $3$ transformations which were discussed above and which determine the kernel of $M$.

Note that the choice of $I_+$ described so far depends only on the ordering of equations \eqref{eq:GaugeFixing7} in the order of decreasing $|(A_G)_{ijkl}|$ that we chose. Now we take system \cref{eq:GaugeFixing8}, restrict it to rows $I_+$, and solve for $\vec x$ in terms of $\vec a$. The solution is not unique but as discussed different solutions are equivalent since they give the same $A_{GS}$. In practice, for this last step we use a {\tt Julia} \cite{Julia-2017, Julia} package {\tt AbstractAlgebra.jl} \cite{AbstractAlgebra.jl-2017, AbstractAlgerba}, which includes a linear solver for finite fields.\footnote{The linear solver returns one solution and, optionally, the kernel of the matrix, corresponding in our case to vectors that realize symmetries discussed before \cref{eq:GaugeFixing35}.} We use this solution to fix the discrete gauge.

From the described procedure, it can be seen that $I_+$ as defined above will remain constant in a small neighborhood $U_1$ of $A$ in which the ordering of the elements $|(A_G)_{ijkl}|$ which enter into its definition does not change. This is a good feature. Indeed, if $I_+$ changed as $A$ varied, this could lead to discontinuities in the RG map. We cannot allow such discontinuities: we want our RG map to be continuous and even differentiable. 
In our final algorithm, we enforce $A$-independence of $I_+$ in an even larger neighborhood $U_2\supset U_1$ by first computing $I_+$ for a tensor $A_\approx $ approximating the critical fixed point tensor, and then using exactly the same $I_+$ everywhere in $U_2$. The size of $U_2$ is determined by the condition that no tensor element $|(A_G)_{ijkl}|$ with $(i,j,k,l) \in I_+$ vanishes in $U_2$.

\subsection{Comment about gauge fixing for $\mathbb{Z}_2$-odd perturbations}
\label{sec:comment_gauge_fix}

The previous discussion solves the gauge-fixing problem with the $\mathbb{Z}_2$-even subspace to which the fixed point tensor $A_*$ belongs, and within which the RG evolution converging to $A_*$ takes place. 
For a tensor $A$ not belonging to  the $\mathbb{Z}_2$-even subspace, $R(A)$ as described up to now would be computed up to a $\mathbb{Z}_2$ symmetry transformation. It is important to fix this ambiguity consistently, to get a differentiable map in a neighborhood of a fixed point.\footnote{In practice, this need arises when evaluating the Jacobian eigenvalues around the fixed point in the $\mathbb{Z}_2$-odd sector. This requires computing $R(A_*+\delta A)$ for small $\mathbb{Z}_2$-odd perturbations $\delta A$.} In our algorithm, this is fixed as follows. The potential source of ambiguous signs is the singular value decomposition (SVD) operation $M=U \Lambda V $ since multiplying row $i$ of $U$ by a sign $s_i$ while simultaneously multiplying column $i$ of $V$ by the same sign gives an equally valid SVD. It is hard to control how these signs are chosen in a built-in SVD function. Instead, our algorithm uses a custom-made SVD function which, after calling the built-in SVD, applies the above transformation with
\beq
s_i = \text{sign}\Bigl(\sum_j w_j U_{ij}\Bigr), \label{eq:row-sum}
\eeq
where $w_j$ are some chosen fixed weights, e.g.~$w_j=1$.

This simple trick, combined with the gauge-fixing procedures described above, is sufficient to get a differentiable map in a neighborhood of $A_*$, including the $\mathbb{Z}_2$-odd directions with some rare exceptions.  When one considers a family of problems, like in \cref{sec:anisotropy}, where we varied the anisotropy parameter $a$, this method fails for a few parameter values. The method can also fail for a few values of $\Jratio$ when one varies the step size, $s$, in the numerical differentiation. We conjecture this happens because at some step of the algorithm a row of $U$ occurs whose elements nearly sum to zero, so that \eqref{eq:row-sum} becomes numerically ill defined. In some but not all cases this problem can be solved by using a different weight vector in \eqref{eq:row-sum}, such as e.g.~$w_j=1/(1+j)$. Further work is needed to fully understand this problem.

\subsection{Relation with the approach of Ref.~\cite{Lyu:2021qlw}}

Let us now discuss how our gauge fixing procedure relates to the procedure used in Ref.~\cite{Lyu:2021qlw}. 
Their RG map, Gilt-HOTRG, was reviewed in \cref{app:Tokyo} (without gauge fixing step) . It can be seen from the definition of Gilt-HOTRG that if the tensor $A$ is acted upon by an orthogonal gauge transformation, this has no effect on $A'$. This is because various unitary matrices which enter into SVD decompositions and the HOTRG step transform in such a way that the final result remains unchanged (see \cite[Appendix B]{Lyu:2021qlw} for a discussion). Based on this observation, Ref.~\cite{Lyu:2021qlw} observe that there is no need to perform continuous gauge fixing in their algorithm.\footnote{As a matter of fact, this observation also applies to Gilt-TNR used to our work. In fact we have checked that the fixed point is reached even if the continuous gauge fixing step is omitted in our algorithm. Nevertheless, we decided to keep the continuous gauge-fixing step, since it is not expensive. Also, having this step prompted us to reflect more deeply why the orthogonal gauge fixing is actually sufficient in our algorithm, \cref{app:why-orth}.}

Of course, discrete gauge fixing is essential for Ref.~\cite{Lyu:2021qlw} as it is for us. Regarding discrete gauge fixing, there are two main differences between our procedure and the procedure in Ref.~\cite{Lyu:2021qlw}. The first one is that instead of the search of $I_+$, Ref.~\cite{Lyu:2021qlw} makes a particular choice of this subset. While they do not provide exhaustive details, it appears to us that they pick $I_+$ to contain elements of the form $(i^+,1^+,1^+,1^+)$, $(1^+,i^+,1^+,1^+)$, $(i^-,1^-,1^+,1^+)$, $(1^-,i^-,1^+,1^+)$, where the superscript refers to the $\mathbb{Z}_2$ quantum number. The second difference is that instead of absolute gauge fixing condition \eqref{eq:GaugeFixing5}, Ref.~\cite{Lyu:2021qlw} imposes a relative condition:
\begin{equation}\label{eq:GaugeFixing12}
 \mathrm{sign}  ((A'_S)_{i j k l}) = \mathrm{sign} (A_{i j k l}), \qquad \forall (i,j,k,l) \in I_+. 
\end{equation}
The choice of $I_+$ as above allows to solve \cref{eq:GaugeFixing12} easily (so no need to solve a system of linear equations in modular arithmetic, as in our algorithm). Specifically, we can fix $(S_h)_{1^+}=(S_v)_{1^+}=(S_v)_{1^-}=1$ using the symmetries discussed after \cref{eq:GaugeFixing3}. Then, the solution of \cref{eq:GaugeFixing12} is 
\begin{align}
 (S_h)_{i^+}&= \mathrm{sign}  (A'_{i^+1^+1^+1^+})\mathrm{sign}  (A_{i^+1^+1^+1^+}),\nonumber\\
 (S_v)_{i^+}&= \mathrm{sign}  (A'_{1^+i^+1^+1^+})\mathrm{sign}  (A_{1^+i^+1^+1^+}),\nonumber\\
 (S_h)_{i^-}&= \mathrm{sign}  (A'_{i^-1^-1^+1^+})\mathrm{sign}  (A_{i^-1^-1^+1^+}),\\
 (S_v)_{i^-}&= \mathrm{sign}  (A'_{1^-i^-1^+1^+})\mathrm{sign}  (A_{1^-i^-1^+1^+})(S_h)_{1^-}.\nonumber
\end{align}

In our opinion, our discrete gauge fixing method is better protected from vanishing tensor elements in $I_+$ than the procedure in Ref.~\cite{Lyu:2021qlw}. In practice, however, both procedures work well and allow one to demonstrate the convergence of the RG flow to the critical tensor (see \cref{fig:crit_convergence} and Ref.~\cite[Fig. 9]{Lyu:2021qlw}).

\subsection{Why is orthogonal gauge fixing enough?}
\label{app:why-orth}
We would like to reflect here, non-rigorously, on why gauge fixing via orthogonal gauge transformations was enough to get, in \cref{sec:no-rot,sec:with-rot}, RG maps having fixed points. 

An RG map is designed to preserve the partition function. When an RG evolution starts from an initial tensor corresponding to the critical temperature, it is reasonable to expect that, with a bit of care and luck,\footnote{By care we mean e.g.~the need to solve the CDL problem. By luck we mean e.g.~that it is hard to a priori exclude such situations as cycles of finite length. For example, discrete dynamical systems are known to generically undergo period-doubling bifurcations.} a fixed-point tensor will be reached, as long an appropriate gauge fixing is incorporated. However, we may expect that a gauge fixing with general invertible matrices may be needed, as this is the most general transformation of tensors preserving the partition function. We know of only one work \cite{PhysRevE.109.034111} which used such a general gauge fixing, by going to the Minimal Canonical Form \cite{Acuaviva:2022lnc}. The works which studied the singular values along the diagonal without doing explicit gauge fixing (e.g.~\cite{GILT,2023arXiv230617479H}) also provide evidence that orthogonal gauge fixing is enough, since general gauge fixing does not leave these singular values invariant. So while general gauge fixing may be useful, it does not appear indispensable.

Let us assume that we are given an RG map $R$ which has a fixed point tensor $A_*$, up to gauge fixing with general invertible matrices: 
\beq
R(A_*)=\myinclude[scale=0.6]{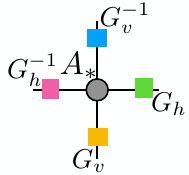}\,.
\label{eq:why-orth1}
\eeq
(So that if we redefine $R$ by including gauge fixing into it, $A_*$ would become a fixed point.) 

Our strategy will be as follows. First, we will discuss tensors and RG maps satisfying some additional assumptions, namely the presence of reflection symmetry. Under those assumptions, we will show that if \eqref{eq:why-orth1} holds, then a similar equation holds with orthogonal gauge fixing matrices, i.e.~orthogonal gauge fixing is enough. Then we will consider the case of weakly broken reflection symmetry, as is the case for the maps in \cref{sec:no-rot,sec:with-rot}.
 
 \subsubsection{With reflection symmetry}
Suppose that the following conditions hold in addition to \eqref{eq:why-orth1}:
\begin{itemize}
	\item
	Tensor $A_*$ respects horizontal reflection symmetry in the form:
	\beq
	\myinclude[scale=0.6]{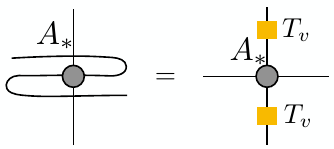}\,,
	\label{eq:why-orth2}
	\eeq
	where $T_v$ is a diagonal matrix with $\pm1$ on the diagonal. Note that this is not the most general form of reflection symmetry compared to \eqref{eq:Z2flip} as we don't include matrices on the horizontal legs. 
	\item
	Tensor $A_*$ also respects vertical reflection symmetry in the form of equation analogous to \eqref{eq:why-orth2} where crossing of vertical legs is equivalent to multiplying horizontal legs by $T_h$.
	\item Tensor $R(A_*)$ also respects both horizontal and vertical reflection symmetry, in the form of equations like \eqref{eq:why-orth2} (with $A_*$ replaced by $R(A_*)$) and of its vertical analogue, and with exactly the same $T_v$, $T_h$. 
	\item $G_v$ and $G_h$ in \eqref{eq:why-orth1} commute with $T_v$ and $T_h$, respectively. This is very natural. The eigenvalues of $T_v$ and $T_h$ have the meaning of the reflection quantum numbers of the basis elements of the vertical and horizontal bond Hilbert spaces. We expect that gauge transformations may mix only vectors having the same quantum numbers (both reflections and $\mathbb{Z}_2$). For reflections, this implies $[G_v,T_v]=[G_h,T_h]=0$.
	\end{itemize}
	
	Under the above assumptions, we can prove that an analogue of \eqref{eq:why-orth1} holds with orthogonal gauge fixing matrices.
\begin{proof} Consider the polar decomposition of $G_h, G_v$ of the form
	\beq
	G_h = P_h R_h,\qquad G_v=P_v R_v, 
	\eeq
	where $R_h,R_v$ are orthogonal and $P_h,P_v$ are symmetric positive matrices. Note that since in this paper we work with real tensors, we assume that $G_h,G_v$ are real.
	
	Now consider the following string of equations:
	\beq
	\myinclude[scale=0.6]{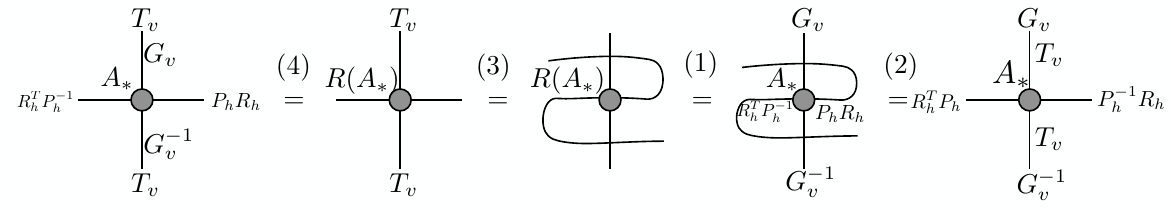}\,.
	\label{eq:why-ort3}
	\eeq
We read it from the center towards the sides. In (1) we used \eqref{eq:why-orth1} replacing $G_h=P_hR_h$. In (2) we used reflection symmetry of $A_*$. In (3) we used reflection symmetry of $R(A_*)$. In (4) we used \eqref{eq:why-orth1}. Now compare the rightmost and leftmost term in \eqref{eq:why-ort3}. Using that $[G_v,T_v]=0$ on the vertical legs, and multiplying by $P_h^{-1} R_h^{-T}$ and $R^{-1}_hP_h$ on the horizontal legs, we obtain:
	\beq
	\myinclude[scale=0.6]{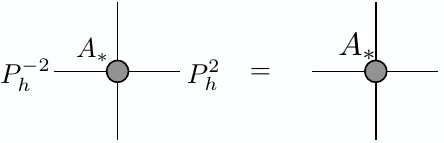}\,.
	\label{eq:why-ort4}
	\eeq
	This equation means that $A_*$ commutes with $P_h^2$, when multiplied on horizontal legs, for any values of vertical indices. By standard arguments it then also commutes with $f(P_h^2)$ for any $f$. In particular choosing $f(x)=\sqrt{x}$ we obtain that  $A_*$ commutes with $P_h$. 
	
	Repeating the above argument in the vertical direction we obtain that $A_*$ also commutes with $P_v$ when multiplied on vertical legs. Combining the two properties, we obtain that the r.h.s.~of \eqref{eq:why-orth1} in fact equals
	\beq
	\myinclude[scale=0.6]{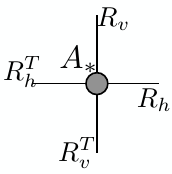}\,.
	\label{eq:why-ort5}
	\eeq
	Thus we obtain that $R(A_*)$ equals $A_*$ up to an orthogonal gauge transformation.
	\end{proof}
	
\begin{remark} That orthogonal gauge fixing should be enough to get a fixed point for RG maps preserving reflection symmetry was stated without proof in e.g.~\cite{Evenbly-review,Lyu:2023ukj}. We don't know a place where this statement was proven before our work.
	\end{remark}
	
	\subsubsection{Without reflection symmetry}\label{sec:without-reflection-symmetry}
	
	The above result would be good for us if our RG maps preserved reflections. Unfortunately they do not. They do not even preserve reflections in a weaker form \eqref{eq:Z2flip} with matrices on both horizontal and vertical legs. Breaking of reflections occurs for two reasons. First, the matrices $Q_1$, $Q_2$, $Q_3$, $Q_4$ are chosen consecutively in \eqref{eq:gilt1}. Second, when these matrices are SVD-decomposed in the r.h.s.~of \eqref{eq:gilt1}, the left and the right parts of the decompositions are not transposes of each other, unless the $Q_i$'s are positive definite, which is not generally the case. 
	
We note in passing that the RG map in \cite{GILT} (which is the same as our RG map without rotation) and the RG maps in \cite{Lyu:2021qlw} also break reflections.\footnote{We believe the statement to the contrary below Eq.~(B5b) in \cite{Lyu:2021qlw} to be incorrect.} Why then is orthogonal gauge fixing enough for all these maps, including ours, to exhibit a fixed point?

We would like to advance here one possible reason, which has to do with our maps breaking reflections only weakly. Consider first the rotating RG map from \cref{sec:with-rot}. It has a fixed point attainable by the Newton method. If the map preserved reflections, the existence of a fixed point with orthogonal gauge fixing would not be surprising given the result from the previous subsection. We know that the reflections are broken, but rather weakly, as can be seen from the fact that the fixed point tensor elements in \cref{tab:fp30}, column 3, which are related by reflections do differ, but not too much. Given this, we may speculate that our map is perhaps close to some "phantom" RG map which does preserve reflections (and hence has a fixed point with orthogonal gauge fixing). The existence of a fixed point for our map would then be natural, given that the Jacobians of our map and of the ``phantom'' RG map are close, and that it is the Jacobian which enters into the Newton method. The non-rigorous part of this argument is that we haven't proved the existence of the ``phantom'' RG map.

As to the RG map without rotation, we can try to think along the same lines (with \cref{tab:fp30}, column 2 instead of column 3), up to one further subtlety. Namely, as discussed many times in this paper, the map without rotation has eigenvalues 1 due to the existence of aspect-ratio changing deformations of the fixed point. Consider the set of critical tensors corresponding to aspect ratio 1 (rotation symmetry). These are the tensors for which the anisotropy ellipse is a circle. It is a nonlinear submanifold of the space of all tensors, over which we have little control apart from its existence. The RG map preserves the criticality and the aspect ratio, hence acts within this submanifold (see \cref{sec:rotation-symmetry-breaking}). Within the submanifold there are no eigenvalues 1, and the above argument may apply. 

We stress again that the just given arguments are highly non-rigorous. Unfortunately we don't have better arguments for the moment.

\section{Gilt-TNR differentiability}
\label{app:GiltDiff}

Let $R$ be the Gilt-TNR map described in \cref{sec:GiltReview} followed by the gauge fixing procedure from \cref{app:GaugeFixing}. In this appendix, we will explore the smoothness properties of this RG map. In particular, we will provide evidence that the map $R$ is thrice differentiable in a neighborhood of the critical fixed point tensor $A_*$.

Let $A$ be an approximation of $A_*$ and $v$ be a four-legged tensor with the same bond dimension. We can try to evaluate the derivative of $R$ at $A$ in the direction $v$ using the symmetric finite difference approximation. This approximation depends on the step size $h$ and is given by Eq.~\eqref{eq:symfinite} which we copy here
\begin{equation}
	\label{eq:fd_formula}
	D_h = \frac{R(A+hv)-R(A-hv)}{2h}.
\end{equation}
Without loss of generality, we assume that $v$ is unit-normalized.

Standard considerations \cite[Sec.~5.7]{Press2007} suggest that we will have
	\beq\label{eq:fd_formula_error}
	D_h = (\nabla R).v + \eps_{t} +\eps_r\,,
	\eeq
	where $\eps_t$ is the truncation error (from truncating the Taylor series), and $\eps_r$ is the roundoff error due to accumulated numerical errors in the evaluation of $R(A\pm sv)$. For $R$ thrice differentiable in a neighborhood of $A_*$, we may Taylor-expand $R(A\pm hv)$, up to $h^3$ terms given the differentiability assumption. The $O(h^0)$ and $O(h^2)$ terms canceling when using the symmetric difference, we obtain $\eps_t\sim h^2$. On the other hand $\eps_r$ can be estimated as $\eps_r\sim \eps_R/h$ where $\eps_R$ is the roundoff error in evaluation of $R$.
	We thus expect
\begin{equation}\label{eq:Diff1}
	D_h=(\nabla R)(A).v  + O(h^2) + O(\eps_R/h)\,.
\end{equation}
 Note that the $O(h^2)$ truncation error is expected to scale smoothly with $h$, with overall size which depends on the direction of $v$, while the $O(\eps_R/h)$ roundoff error is expected to be random but of a magnitude which is largely $v$ independent.
 
Our first goal is to test this equation. We will do it by computing $D_h$ for a discrete sequence of values of $h$, $h_q \rightarrow 0$ geometrically, and monitoring the differences $D_{h_q}-D_{h_{q+1}}$. 
\cref{fig:diff_test_results} shows the result of this test for Gilt-TNR parameters: $\chi=30, \epsilon_{\rm gilt}=6\times 10^{-6}$, for $A=A^{(n_*)}$ from \cref{tab:fp30}, and for two choices of a unit-normalized direction $v$. The first direction is along the eigenvector of $\nabla R$ with the largest $\mathbb{Z}_2$-even eigenvalue, $\lambda \approx 2$.\footnote{We obtained the eigenvector using the Arnoldi method. For this, we approximated $\nabla R.v$ using finite difference formula \cref{eq:fd_formula} with $h=10^{-4}$.} The second direction is a random tensor. (We repeated this test for several random $v$'s, with similar results.) On the vertical axis of \cref{fig:diff_test_results} we show the difference $\|D_{h_q}-D_{h_{q+1}}\|$ as a function of $h_q=10^{-3-0.05q}$.

\begin{figure}
	\centering
	\includegraphics[width=\textwidth]{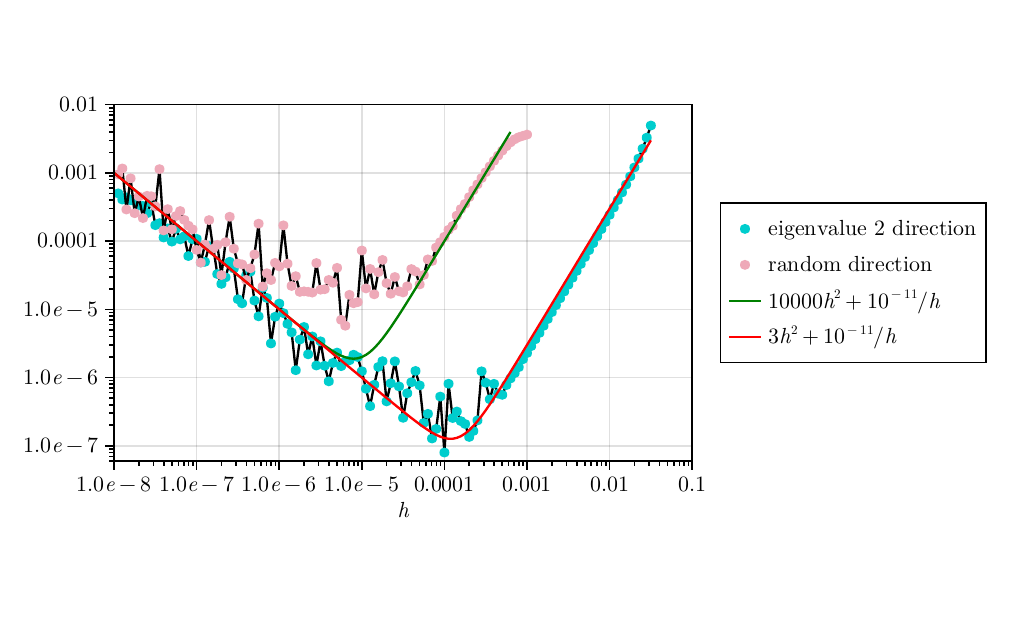}
	\caption{The result of the differentiability test for Gilt-TNR at the approximate fixed point for $\chi=30$ given in \cref{tab:fp30}. The green and red lines illustrate the qualitative behavior of asymptotics and serve to guide the eye. The coefficients for the $h^2$ and $1/h$ terms in both lines have been fixed manually to get a visually reasonable agreement, and not as a result of a fit. The $10^{-11}$ coefficient of the $1/h$ term gives an idea of the roundoff error of Gilt-TNR.
		\label{fig:diff_test_results}}
\end{figure}

Let us discuss what we observe in \cref{fig:diff_test_results}. Consider first the test result for the random direction. In this case, $D_h$ converges as $O(h^2)$ for $7\times10^{-5} < h < 5\times 10^{-4}$. For $h > 5\times 10^{-4}$, the points deviate from $O(h^2)$, indicating that higher-order terms become significant. For $h<7\times 10^{-5}$, the points exhibit chaotic behavior until around $h=10^{-6}$, where they start aligning with the $O(\epsilon_R/h)$ line as predicted. The poor agreement of the test result with \cref{eq:Diff1} for $10^{-6}<h<7\times 10^{-5}$ may be due to the fact that the random vector $v$ has a large overlap with the lower part of the $\nabla R$ spectrum, which is perhaps more sensitive to rounding errors. As to the test result for the eigenvalue $2$ direction, it agrees with \cref{eq:Diff1} very well over the full range, except for $h>0.03$, where points start deviating from the $O(h^2)$ line, which again must be due to higher-order corrections. 

We consider these results to agree well with \cref{eq:Diff1}, providing evidence that our RG map is at least thrice differentiable near the fixed point. Note that the coefficients of the $1/h$ chaotic behavior at small $h$ are roughly similar in \cref{fig:diff_test_results} for both directions, in agreement with \cref{eq:Diff1} for $\eps_R\sim 10^{-11}$. On the other hand the coefficients of the $h^2$ truncation error are vastly different, which suggests that the third derivative along a random direction is much larger than along a typical direction corresponding to low-lying eigenvalues.

We also used \cref{fig:diff_test_results} to optimize the choice of $s$ in \eqref{eq:fd_formula}. 
We chose $h=10^{-4}$ for our computations, because: 1)
This is very close to the optimal spot for the eigenvalue $2$ direction, where $O(h^2)+O(\epsilon_R/h)$ error is minimal; 2) The derivative in the random direction has effectively converged at $h=10^{-4}$, and rounding errors have not yet spoiled the result.  Based solely on the result for the random direction, a better choice could appear to be $h\approx 6\times 10^{-5}$ where, despite the noise, $\|D_{h_q}-D_{h_{q+1}}\|$ takes the smallest values. However, we believe $h=10^{-4}$ is more appropriate as we are primarily concerned with derivatives in the directions of large eigenvalue vectors of $\nabla R$.

It may be possible to improve the Jacobian evaluation by using automatic differentiation (as mentioned in \cref{sec:parameter-choices-and-scaling}), which would remove the $\eps_R/h$ error, replacing it by $O(\eps_R)$. Or one could use a finite difference approximation of higher order, which still has $\eps_R/h$ error, but reduces it since a larger step $h$ can be chosen for the same accuracy. In this paper we have not pursued these because the error in the fixed point determination that we achieved was $10^{-9}$, already quite close to the error $\eps_R\sim 10^{-11}$ of a single RG step. Thus we believe that single RG step error largely determines the final accuracy, and optimizing Jacobian evaluation is not likely to be worthwhile at this stage (but may be useful in the future).

\section{The 2D 3-state Potts model fixed point via the Newton method}\label{app:the-3-state-potts-model}

Prior to this work, a tensor-RG study of the 2D 3-state Potts model was provided in \cite{Li_2022}, where the loop-TNR \cite{LoopTNR} algorithm was used to determine critical temperature, central charge, and scaling dimensions of the critical theory to a good precision. Here, we are interested in slightly different questions. This section is dedicated to finding the RG critical fixed point and studying the RG map linearization around it. We begin with a brief description of the 3-state Potts model and its tensor network representation. Then, we demonstrate the result of the Newton method and examine the upper part of the spectrum of the RG map's Jacobian at the critical fixed point to verify consistency with the CFT predictions.

\subsection{3-state Potts model tensor network representation}

The 2D 3-state Potts model is defined by the following partition function:
\begin{equation}\label{eq:potts0}
Z=\sum_{s_{x,y}=\pm 1, 0} e^{\frac{1}{T} \sum_{x,y} \cos \left( \frac{2\pi}{3} (s_{x+1,y}-s_{x,y}) \right) + \cos \left( \frac{2\pi}{3} (s_{x,y+1}-s_{x,y}) \right) }
\end{equation}
It has a phase transition at $T=T_c=\frac{3}{2 \ln (1+\sqrt{3})}$ separating the ordered and disordered phases. Following the procedure outlined for the Ising model in \cref{app:init}, we construct a four-legged tensor $A$ that reproduces partition function from \cref{eq:potts0}. Specifically, after rotating the lattice by $\pi/4$, we encode the interactions between spins $s_1,s_2,s_3,s_4$ around a single plaquette in the tensor components:
\begin{equation}\label{eq:potts1}
\myinclude[scale=1]{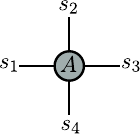}=e^{ \frac{1}{T} \left( \cos\left( \frac{2\pi}{3}s_{12} \right)+\cos\left( \frac{2\pi}{3}s_{23} \right)+\cos\left( \frac{2\pi}{3}s_{34} \right)+\cos\left( \frac{2\pi}{3}s_{41} \right) \right)}.
\end{equation}
Here, $s_{ij}=s_i-s_j$ denotes the difference between neighboring spins.

The 3-state Potts model exhibits the $S_3=\mathbb{Z}_3 \rtimes \mathbb{Z}_2$ symmetry generated by the transformations:
\begin{align}
&\mathbb{Z}_3: s \mapsto s+1 \mod 3; \\
&\mathbb{Z}_2: s \mapsto -s \ \ \ \mod 3.
\end{align}
Let $\eta$ be the generator of the $\mathbb{Z}_3$ subgroup of $S_3$ acting on the index space of the tensor $A$ defined in \cref{eq:potts1}:
\begin{equation}
\eta=\left( \begin{matrix}
0 & 0 & 1\\
1 & 0 & 0\\
0 & 1 & 0\\
\end{matrix} \right).
\end{equation}
In our numerical analysis, we express $A$ in the orthonormal basis of $\eta$'s eigenvectors $\ket{0}, \ket{1}, \ket{2}$, corresponding to eigenvalues $e^{2\pi i q/3}$ with $q=0,1,2$, respectively. We denote the resulting normalized tensor by $A_{3P}(t)$, where $t=T/T_c$ is the reduced temperature. \footnote{While the basis vectors $\ket{0}, \ket{1}, \ket{2}$ are complex, one may show that the tensor $A_{3P}(t)$ is real due to $S_3$ symmetry of $A$. The initial tensor is invariant under an $S_3$ action, and our RG map preserves this symmetry.}

\subsection{The Newton method results}

\begin{figure}
	\centering
	\includegraphics{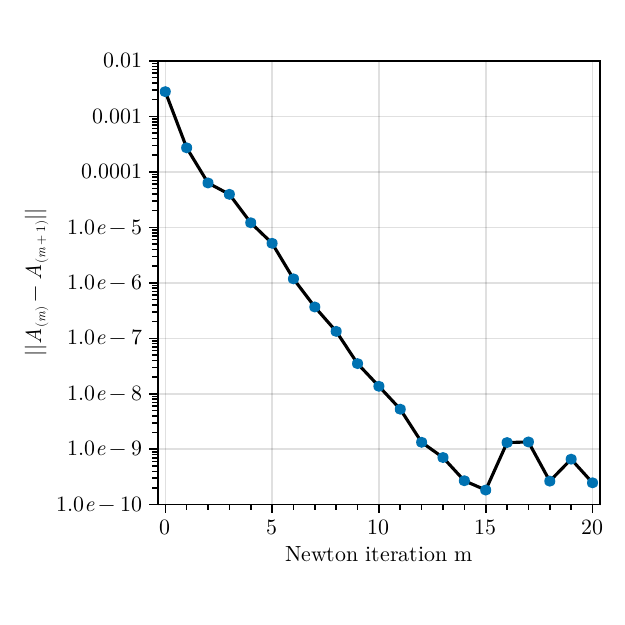}
	\caption{ 
		Convergence of the Newton method for the 3-state Potts model. Gilt parameters: $\chi=30, \epsilon_{\rm gilt} = 3 \times 10^{-5}$. The trajectory is initiated at the tensor $A_{(0)}$ whose distance to the exact fixed point tensor (for $\chi=30$) can be estimated as $\sim 3 \times 10^{-3}$ 
		\label{fig:potts_newton_conv}
	}
\end{figure}

The numerical results presented in this section are obtained using the following Gilt parameters: $\chi=30$, $\epsilon_{\rm gilt}=3\times 10^{-5}$. 

Before applying the Newton method, one needs a sufficiently accurate initial tensor $A_{(0)}$. In particular, $A_{(0)}$ should be close enough to the exact fixed point tensor so that they both belong to one connected region where the RG map is differentiable. To obtain $A_{(0)}$, we employ the shooting method, which yielded the finite-$\chi$ critical temperature $t^{[30]}_c \approx 0.9999063149$ at bond dimension $\chi=30$.  Running the RG trajectory starting at $A_{3P}(t \approx t^{[30]}_c)$ for $20$ steps we get tensor a $A_{(0)}$ that approximates the fixed point tensor with accuracy $\sim 5\times 10^{-3}$. 

\cref{fig:potts_newton_conv} shows the convergence of the Newton method initiated at $A_{(0)}$. For this plot, we used the projector $P_s$ with rank $s=16$ (see \cref{eq:approxJ}). We see exponential convergence similar to the Ising case. All tensors in the lower portion of \cref{fig:potts_newton_conv} should equally well approximate the exact fixed point. For further computations, we choose tensor $A_{(m_*)}$ with $m_*=15$ as our approximation of the critical fixed point, which is the last iteration where $\|A_{m+1}-A_{m}\|$ decreases.

A direct method to quantify the quality of our solution $A_{(m_*)}$ is to check how much the RG step changes it. Applying $R^\circ$, we get:
\begin{equation}
	\| R^{\circ}(A_{(m_*)}) - A_{(m_*)} \| \sim 10^{-13},
\end{equation}
which confirms that we have converged to a solution of the RG equation.

To conclude this section, let us demonstrate that the relevant and marginal part of the Jacobian spectrum agrees with the CFT expectations. We do not consider the irrelevant part of the spectrum, as it is flooded with non-universal eigenvalues coming from total derivatives operators (see the discussion in \cref{sec:CFTexp}). Our map acts in the $\mathbb{Z}_3$-invariant space of tensors. \cref{tab:3Potts} lists the relevant and marginal $\mathbb{Z}_3$-invariant quasiprimaries of the 3-state Potts model.

\begin{table}[ht]
	\centering
	\begin{tabular}{ccccc}
		\toprule
		    &  $\mathds{1}$ & $\epsilon$ & $\Phi, \overline{\Phi}$ & $T, \overline{T}$ \\
		\midrule
		  $\Delta$ & $0$ & $4/5$ & $9/5$ & $2$ \\
		  $\ell$ & $0$ & $0$ & $\pm 1$ & $\pm 2$ \\
		\bottomrule
	\end{tabular}
	\caption{\label{tab:3Potts}2D 3-state Potts CFT Virasoro primary and quasiprimary $\mathbb{Z}_3$-invariant operators of scaling dimension $\Delta\leq 2$, with their quantum numbers (see e.g.~\cite{DiFrancesco:1997nk}).}
\end{table}

\begin{table}[ht]
	\centering
	\begin{NiceTabular}{c|c c c}
		\toprule
		$\mathcal{O}$ & $\lambda_{\rm CFT}=i^{\ell_{\mathcal{O}}}2^{2-\Delta_\mathcal{O}}$ & $\lambda$ of $\nabla R^\circ$ & relative difference  \\
		\midrule
		 $\epsilon$  & $2^{6/5} \approx 2.2974$ & 2.2897 & $0.3 \%$  \\
		 $\Phi$ & $2^{1/5}i \approx 1.1487 i$  & $-0.001+1.1456i$ &  $0.3 \%$ \\
		 $\bar{\Phi}$ & $-2^{1/5}i \approx-1.1487 i$ & $-0.001-1.1456i$ & $0.3 \%$ \\
		 $T$  & $-1$  & $-0.9958$ &  $0.4 \%$     \\
		 $\overline{T}$ & $-1$   & $-0.9686$ & $3.1\%$  \\
		\bottomrule   
	\end{NiceTabular}
	\caption{Columns 1,2: relevant and marginal CFT quasiprimaries ($\mathbb{Z}_3$ invariant sector), and their exact eigenvalues. Column 3: The first few largest, in absolute value, Jacobian eigenvalues at the approximate fixed point for the rotating Gilt-TNR ($\nabla R^\circ$). Column 4: relative differences between the CFT predictions and the Jacobian eigenvalues. It is a bit hard to compare our accuracy to \cite{Li_2022} where the results are presented as a graph and not in numerical form, but it appears comparable.\label{tab:pots_jac_spec}}
\end{table}

The Jacobian eigenvalues should be related to the scaling dimensions and spins of nontrivial quasiprimaries (i.e., excluding $\mathds{1}$) from \cref{tab:3Potts} via \cref{eq:lambda-mod}. \cref{tab:pots_jac_spec} demonstrates agreement with this prediction. However, the relative difference between the CFT predictions and the Jacobian eigenvalues is approximately an order of magnitude higher than in the Ising case. Let us comment on this. 

We observed that the accuracy of physical quantities improves when $\epsilon_{\rm gilt}$ goes down. This is due to the error introduced to the network by Gilt matrices $Q$ becoming smaller for smaller $\epsilon_{\rm gilt}$ (see \cref{app:moreGilt}). However, as one decreases $\epsilon_{\rm gilt}$ to $0$, the effectiveness of Gilt at filtering out CDL tensors decreases, and eventually the RG map will not have a fixed point. Also, recall that Gilt-TNR consists of two steps: Gilt followed by TRG. The Gilt step reduces the bond dimension of the network. Smaller $\epsilon_{\rm gilt}$ would lead to a larger intermediate bond dimension after Gilt, leading to larger errors in the subsequent TRG steps. Thus, for a given bond dimension $\chi$, one should try to choose $\epsilon_{\rm gilt}$ as small as possible but such that the errors of the Gilt and TRG steps are balanced and the fixed point still exists. The existence of the fixed point can be tested by examining the shooting method results. A larger bond dimension $\chi$ generally allows for smaller $\epsilon_{\rm gilt}$.

For the Potts model with $\chi=30$, we found $\epsilon_{\rm gilt}=3 \times 10^{-5}$ to be a good value. This is, however, much larger than $\epsilon_{\rm gilt}=6 \times 10^{-6}$ used for the Ising model. This larger value may explain the lower accuracy of our Potts model results. The reason for Potts model requiring larger $\epsilon_{\rm gilt}$ is not clear. We observe, though, an evident correlation --- the larger number of symmetry sectors ($3$ in Potts vs $2$ in Ising) requires more thorough disentangling and so larger $\epsilon_{\rm gilt}$.

\providecommand{\href}[2]{#2}\begingroup\raggedright\endgroup

\end{document}